\documentclass[12pt,onecolumn]{IEEEtran}

\usepackage{graphicx,wrapfig}
\usepackage{flushend}
\usepackage{amsthm}
\usepackage{amsmath}
\usepackage{tikz}
\usetikzlibrary{arrows}
\usetikzlibrary{datavisualization.formats.functions}

\usepackage{pgfplots}
\usepackage[normalem]{ulem}
\usepackage[ruled,linesnumbered]{algorithm2e}
\usepackage[english]{babel}
\usepackage{multirow}
\usepackage{enumerate}
\usepackage[font=footnotesize, labelfont=bf, justification=centering]{caption}
\usepackage[font=footnotesize, labelfont=bf, justification=centering]{subcaption}
\usepackage{enumitem}
\usepackage{pifont}
\usepackage{cite}
\usepackage{amsfonts}
\usepackage{amssymb}
\usepackage{fancyhdr}

\pagestyle{fancy}
\fancyhf{}
\fancyfoot[RO,LE]{\small \thepage}
\fancyhead[RE]{James~Z.~Hare,~Junnan~Song,~Shalabh~Gupta,~Thomas~A.~Wettergren}
\fancyhead[LO]{\small POSE.R: Prediction-based Opportunistic Sensing for Resilient and Efficient Sensor Networks}

 \newtheorem{thm}{Theorem}[section]
 
 \newtheorem{lem}{Lemma}[section]
 
 \newtheorem{defn}{Definition}[section]
 \newtheorem{assum}{Assumption}[section]

 \newtheorem{rem}{Remark}[section]

\newcommand{\argmax}[1]{\underset{#1}{\operatorname{arg}\,\operatorname{max}}\;}

 \newcommand{\blue}{\color{blue}}


\usepackage{nomencl}
\makenomenclature

\begin{document}

\title{{\small{\blue ACM Transactions on Sensor Networks}\vspace{-6pt}}\\ POSE.R: Prediction-based Opportunistic Sensing for Resilient and Efficient Sensor Networks}

\author{ James~Z.~Hare,~Junnan~Song,~Shalabh~Gupta,~Thomas~A.~Wettergren
	\thanks{J. Z Hare is with the Signal and Information Processing Division of the CCDC Army Research Laboratory, Adelphi, MD 20783 USA (e-mail: james.z.hare.civ@mail.mil). J. Song and S. Gupta are with the Laboratory of Intelligent Networks and Knowledge-Perception Systems (LINKS), University of Connecticut, Storrs, CT 06268 USA (email: \{junnan.song, shalabh.gupta\}@uconn.edu). T. A. Wettergren is with the Naval Undersea Warfare Center, Newport, RI 02841 (email: t.a.wettergren@ieee.org).}
	}
\maketitle

\begin{abstract}
The paper presents a distributed algorithm, called \emph{\underline{P}rediction-based \underline{O}pportunistic \underline{Se}nsing for \underline{R}esilient and Efficient Sensor Networks} (POSE.R), where the sensor nodes utilize predictions of the target's positions to probabilistically control their multi-modal operating states to track the target. There are two desired features of the algorithm: energy-efficiency and resilience. If the target is traveling through a high node density area, then an optimal sensor selection approach is employed that maximizes a joint cost function of remaining energy and geometric diversity around the target's position. This provides energy-efficiency and increases the network lifetime while preventing redundant nodes from tracking the target. On the other hand, if the target is traveling through a low node density area or in a coverage gap (e.g., formed by node failures or non-uniform node deployment), then a potential game is played amongst the surrounding nodes to optimally expand their sensing ranges via minimizing energy consumption and maximizing target coverage. This provides resilience, that is the self-healing capability to track the target in the presence of low node densities and coverage gaps. The algorithm is comparatively evaluated against existing approaches through Monte Carlo simulations which demonstrate its superiority in terms of tracking performance, network-resilience and network-lifetime.
\end{abstract}

\begin{IEEEkeywords}
Distributed Sensor Networks, Network Resilience, Network Lifetime
\end{IEEEkeywords}

\maketitle

\thispagestyle{empty}

\nomenclature{$s_i$}{Sensor node $i$}
\nomenclature{$\tau_\ell$}{Target $\ell$}
\nomenclature{$\mathbf{u}^x$}{Position of either sensor node x or target x}
\nomenclature{$R_{LPS}$}{Sensing range of Low Power Sensing Device}
\nomenclature{$R_{HPS}\in\{R_1,...,R_L\}$}{Adjustable sensing ranges of High Power Sensing Device}
\nomenclature{$R_{c}$}{Communication Range}
\nomenclature{$\mathcal{N}^{s_i}$}{Set of neighbors of node $s_i$}
\nomenclature{$E_{\Delta T}^{s_i}(k)$}{Energy consumption of node $s_i$ from time $k-1$ to $k$}
\nomenclature{$e^{s_i}_j$}{Energy consumption of device $j$ on node $s_i$}
\nomenclature{$D^{\tau_\ell}$}{Coverage degree of target $\tau_\ell$}
\nomenclature{$D_b^{\tau_\ell}, D_e^{\tau_\ell}$}{Base and extended coverage degree of target $\tau_\ell$. Note that using the tilde notation indicates a predicted coverage degree.}
\nomenclature{$\hat{\mathbf{x}}^{s_i}(k|k), \hat{\mathbf{\Sigma}}^{s_i}(k|k), \hat{\mathbf{W}}^{s_i}(k|k)$}{Target state, covariance, and Kalman filter gain estimates at time $k$.}
\nomenclature{$P_{LPS}^{\tau_\ell,s_i}, P_{HPS}^{\tau_\ell,s_i}$}{Low power and high power sensing devices probability of detection}
\nomenclature{$\mathbf{z}(k)$}{Set of measurements collected by the HPS device at time $k$}
\nomenclature{$\mathbf{f}(\mathbf{x}(k),k)$, $\mathbf{h}(\mathbf{x}(k),k)$}{Target motion model and measurement model.}
\nomenclature{$N_{sel}$}{Number of sensor nodes selected}
\nomenclature{$\Theta=\{\theta_1,\theta_2,\theta_3\}$}{Set of states of the PFSA.}
\nomenclature{$p_{x,y}^{s_i}(k)$}{The probability of transitioning from state $x$ to state $y$ at time $k$.}
\nomenclature{$\hat{P}_{HPS}^{s_i}$}{Probability of successful target detection.}
\nomenclature{$\mathcal{N}_{HPS}^{s_i}$}{Set of neighbors in the HPS state.}
\nomenclature{$\mathcal{S}^*(k+1)$}{Optimal set of nodes selected to track the target.}
\nomenclature{$\mathcal{R}^*(k+1)$}{Optimal set of high power sensing ranges for the nodes in $\mathcal{S}^*(k+1)$.}
\nomenclature{$\mathcal{N}_{RC}$}{Set of nodes that receive target state information.}
\nomenclature{$\hat{\mathbf{I}}^{s_i}(k)$}{Information ensemble received by node $s_i$ at time $k$.}
\nomenclature{$\mathcal{N}_T^{s_i}$}{Set of trustworthy neighbors.}
\nomenclature{$\hat{\mathbf{I}}^{s_i}(k)$}{Set of trustworthy information ensembles.}
\nomenclature{$\Omega_{can}^{R_s}(k+1)$}{Candidate region surrounding the target.}
\nomenclature{$\mathcal{S}_{can}^{R_s}(k+1)$}{Set of candidate nodes to track the target}
\nomenclature{$E_{can}^{R_s}(k)$}{Set of remaining energies of the candidate nodes.}
\nomenclature{$\mathcal{S}'(k+1)$}{Set of nodes that are players in the potential game. }
\nomenclature{$a=(a_i,a_{-i})$}{Joint actions of the potential game. }
\nomenclature{$E_{rem}^{s_i}(k)$}{Energy remaining of node $s_i$}
\nomenclature{$\Omega_u$}{Target partition region. }
\nomenclature{$v_{g,h}$}{A cell $g,h$ of the partition region. }
\nomenclature{$w_{g,h}$}{The worth assigned to cell $v_{g,h}$ of the partition region.}
\nomenclature{$E_c(a_j)$}{Predicted energy consumption of sensor $s_j$.}
\nomenclature{$J_{g,h}$}{Number of nodes covering cell $v_{g,h}$}
\nomenclature{$N'_{sel}$}{Maximum number of players for the potential game.}
\nomenclature{$B_{g,h}(J_{g,h}(a))$}{Coverage Function.}
\nomenclature{$\Delta T$}{Amount of time between discrete time steps $k-1$ and $k$}
\nomenclature{$\Delta b_1, \Delta b_2$}{Slopes of the coverage function.}
\nomenclature{$\hat{\mathbf{z}}^{s_i,c}(k+1|k)$, $\hat{\mathbf{\Sigma}}_\mathbf{z}^{s_i,c}(k+1|k)$}{Predicted location of target $c$ and its associated covariance matrix. }
\nomenclature{$p_{sleep}$}{The probability of remaining in the Sleep state.}
\nomenclature{$\rho$}{Network Density}
\nomenclature{$p_{rand}$}{Probability of transitioning to the HPS state in the LPS-HPS scheduling method.}
\nomenclature{$\lambda$}{Number of targets located in a tube $\Omega_{\gamma}$ within the network.}

\nomenclature{}{}

\section{Introduction}
A critical challenge of \emph{Distributed Sensor Networks} (DSNs), that are used for various intelligence, surveillance, and reconnaissance (ISR) operations~\cite{JSRGD2012, MGRW2011}, is to maintain performance of their intended task (e.g., target tracking~\cite{HGS14}) in the presence of sensor node failures. Sensor nodes are prone to failures~\cite{YSALS2014} due to component degradations, hardware malfunctions, malicious attacks, battery depletions, or environmental uncertainties \cite{MA2010}, causing changes in the network topology. If multiple co-located sensors fail, a sector of the network may be uncovered, causing missed detections when a target travels through such coverage gap. This results in poor network performance, information delays, and mission failures. Additionally, the sensor nodes may be non-uniformly distributed, resulting in high and low density regions. Therefore, the development of an opportunistic self-healing network that provides resilience to the effect of low node densities and coverage gaps is essential to maintain network performance.

To account for node failures, two proactive approaches have been proposed in the literature: (i) redundant node deployment and (ii) intelligent network control for energy-efficiency and life-extension. The former approach deploys redundant sensor nodes throughout the \emph{Region Of Interest} (ROI) to ensure that every point is observed by $\kappa>1$ nodes \cite{AD2010, HT2005}. This creates a fault-tolerant network that allows for $\kappa-1$ nodes to fail before a coverage gap is formed; however, it is costly. Moreover, this approach does not provide resilience if multiple spatially co-located nodes fail, for example, an attack in a battlefield scenario.

The second proactive approach incorporates an intelligent network control strategy that minimizes node failures caused by energy depletion due to inefficient use. One such strategy, known as \emph{Opportunistic Sensing} \cite{HGW2017}, consists of selecting and activating sensor nodes only in the local regions around targets' predicted positions, while the nodes away from the targets are not selected and deactivated to conserve energy. This method maximizes the network lifetime while maintaining high tracking accuracy and low missed detection rates, via forming dynamic clusters of activated sensors around moving targets.

Another control strategy~\cite{CWL2006} aims to optimize the nodes' ranges and activation times to minimize energy consumption and missed detections. This approach assumes that the nodes' sensing ranges can vary based on the amount of power supplied to their sensing devices. However, these approaches assume that the targets are fixed and known \emph{a priori} and do not consider tracking mobile targets. Furthermore, these strategies only address energy-efficient control and do not address the problem of resilience to sensor failures that have already occurred.

\begin{figure*}[t!]
	\centering
	\includegraphics[width=0.95\textwidth]{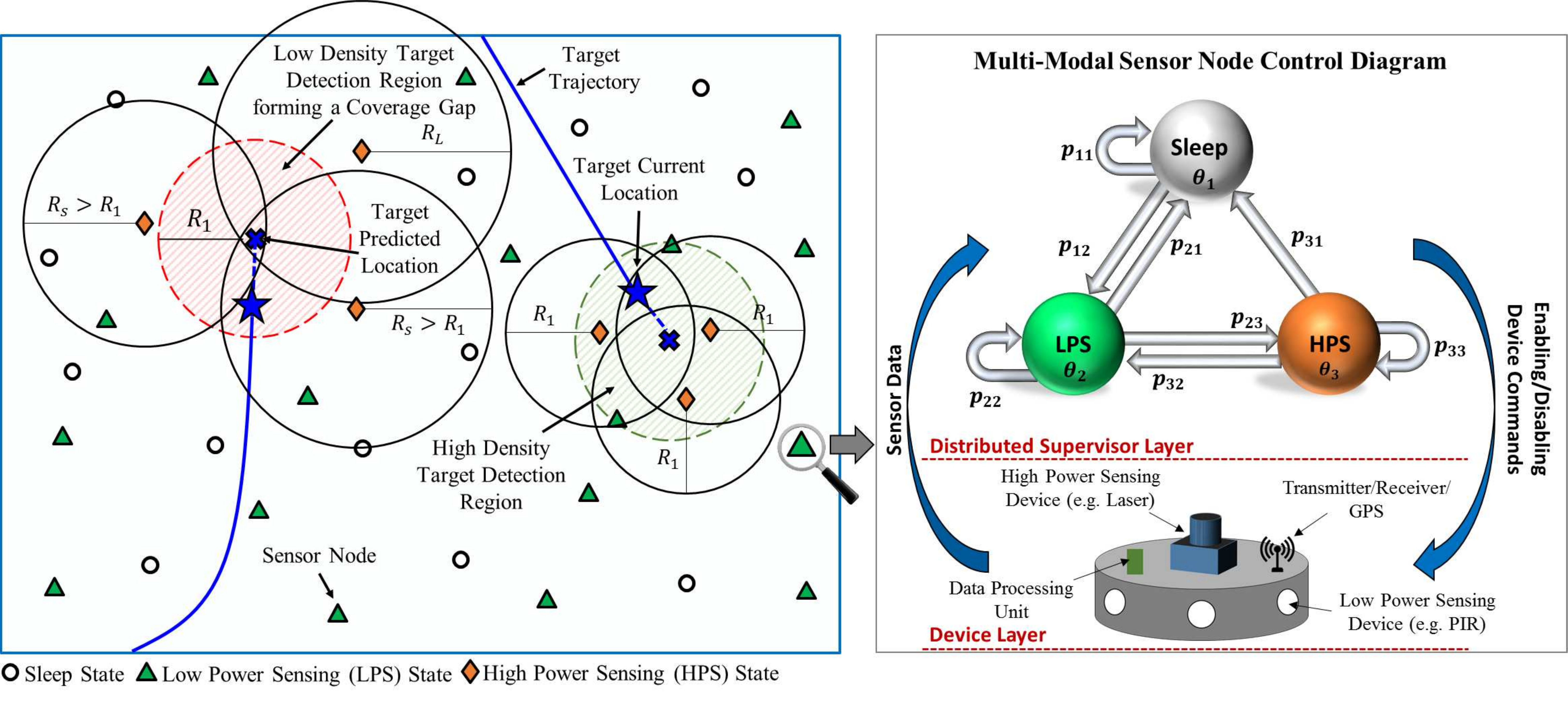} \vspace{-6pt}
	\caption{Illustration of the POSE.R algorithm with distributed PFSA-based supervisory control} \label{fig:main}
	\vspace{-15pt}
\end{figure*}

In this regard, this paper proposes a distributed supervisory control algorithm, called \emph{Prediction-based Opportunistic Sensing for Resilient and Efficient Sensor Networks} (POSE.R). The objective of POSE.R is two-fold: i)
provide resilience to the effects of low node densities and coverage gaps to maintain tracking performance and ii) provide energy-efficient target tracking~\cite{hare2015decentralized} in areas of high node densities for network lifetime extension. This algorithm extends the POSE~\cite{HGW2017} and POSE.3C~\cite{HGW2019} algorithms (Appendix~\ref{app:net_comp}) by incorporating resilient and efficient tracking in the presence low node densities, sensor failures, and non-uniform node deployment. This is achieved by including an adaptive distributed node selection approach that dynamically selects the optimal nodes and their sensing ranges to track the mobile targets.

The sensor selection approach adapts to the density of the sensor nodes around the targets' predicted positions, as seen in Fig.~\ref{fig:main}. For high density regions ($\ge N_{sel}$ nodes around the target), a novel sensor node selection method is developed, called \textit{Energy-based Geometric Dilution of Precision} (EGDOP), to select and activate geometrically diverse nodes with high remaining energies to track the target with their minimum sensing range, $R_{1}$. This method preserves energy, prevents nodes from dying by utilizing high energy sensors, and minimizes tracking error.

On the other hand, for low density regions or coverage gaps, a Game-theoretic sensor node selection method is developed, using \emph{Potential Games}, to select the optimal nodes as well as their optimal sensing ranges between $[R_{1}, R_{L}]$, to accommodate for the insufficient number of nodes or a coverage gap and maintain the tracking performance while minimizing energy consumption. This method provides the following advantages: (1) non-cooperative games allow for scalable distributed computing in a DSN, (2) Potential games ensure that an equilibrium exists, and (3) maximizing the local objective function guarantees that the global objective is maximized. Thus, POSE.R algorithm provides resilience, that enables opportunistic self-healing by adjusting the sensing ranges of nodes surrounding the targets' predicted positions to maintain tracking accuracy.

The underlying distributed network controller is built using a \emph{Probabilistic Finite State Automaton} (PFSA), which is embedded on each sensor node to control its heterogeneous (i.e., multi-modal) operating states by probabilistically enabling/disabling its devices at each time step. The states of the PFSA include: 1) \emph{Sleep}, 2) \emph{Low Power Sensing} (LPS), and 3) \emph{High Power Sensing} (HPS). The \emph{Sleep} state preserves maximum energy by disabling all devices on the node. The LPS state utilizes the LPS devices for the purpose of target detection while conserving energy. The HPS state utilizes the HPS devices for precise target measurements and state estimation. The range of the HPS devices are varied from $[R_{1}, R_{L}]$ based on the proposed distributed adaptive node selection method to ensure target coverage while minimizing redundancy and energy consumption. The transceiver is enabled in both the LPS and HPS states to allow for information sharing and collaboration with neighbors.

The state transition probabilities of the PFSA are dynamically updated based on the adaptive sensor selection algorithm and the information observed with the node's on-board sensing suite. The probabilities are designed to transition a node to the HPS state only when it is selected for tracking a target that is predicted to travel within it's coverage area. On the other hand, a node transitions between low power consuming states, i.e., LPS or \emph{Sleep}, to conserve energy when not selected. This is illustrated in Fig.~\ref{fig:main}, where $N_{sel}=3$ nodes are selected to be in the HPS state around each target's predicted position to ensure high tracking accuracy, while the remaining nodes conserve energy to provide significant energy savings. As seen in Fig.~\ref{fig:main}, in the presence of a coverage gap, the POSE.R network is able to adapt the sensing ranges of surrounding nodes to fill the gap and maintain tracking performance, thus providing resilience.

The main contribution of this paper is the development of a distributed supervisory control algorithm, that facilitates resilient and efficient target tracking, using a distributed node selection approach that adapts to the network density around the targets' predicted positions, such that: \vspace{-3pt}
\begin{enumerate}[label=\Roman*]
\item [a)] for high density regions, the EGDOP node selection method provides energy-efficiency, and
\item [b)] for low density regions, the Game-theoretic node/range selection method provides resilience.
\end{enumerate}
The remainder of this paper is organized as follows. Section~\ref{sec:rw} discusses the current literature of fault-tolerant and adjustable range WSNs. Section~\ref{sec:pf} presents the problem and the objectives. Section~\ref{sec:poser} discusses the POSE.R algorithm while Section~\ref{sec:dsc} presents the distributed collaboration method for sensor and range selection. Section~\ref{sec:results} presents the validation results and the conclusions are stated in Section~\ref{sec:con}. Appendices~\ref{app:EGDOP}-\ref{app:net_comp} are provided to supplement the main paper.

\section{Related Work} \label{sec:rw}
This section presents a literature review of fault-tolerant control strategies and the maximum network lifetime problem in sensor networks and their limitations that are addressed in this paper.

\vspace{-6pt}
\subsection{Fault-tolerant WSN}
Fault-tolerance requires that the sensor nodes:  (1) detect node faults and (2) react to mitigate the faults. Failure detection is typically achieved using active and passive monitoring approaches. Active monitoring approaches utilize a centralized or cluster-based network topology \cite{YMM2007} and consist of requesting constant updates (e.g., heartbeat signals) from nodes. Passive monitoring methods can be implemented in centralized, cluster-based, or distributed network topologies, by observing the traffic already present in the network to infer the nodes' health \cite{PH2007}. These methods assume that healthy measurements are spatially correlated in a local neighborhood, while faulty measurements are uncorrelated~\cite{LC2008}. Therefore, for fault detection, a node can compare its data  with the median of their neighborhood measurements \cite{DLTCC2007}, perform Bayesian model comparison \cite{KI2004}, or perform hypothesis testing \cite{LDH2006, J2009}. For a detailed discussion of fault detection methods, see~\cite{MS2017}. Although these approaches can identify faulty measurements, they cannot directly identify coverage gaps caused by node failures.

Once the faulty nodes are detected, it is critical to employ a recovery mechanism to mitigate their effects. This consists of two approaches: proactive and reactive. Most proactive approaches deploy redundant sensor nodes to ensure  $\kappa$-coverage \cite{YMSXS2019, GKJ2016, AD2010, HT2005, AKJ2005} or $\kappa$-connectivity \cite{HCLS2010}, where $\kappa$ is the number of sensors that can cover a target or is the number of communication paths. Other approaches examine various deployment topologies that provide fault tolerant properties \cite{BBM2009}. However, they require significantly more nodes to be deployed, and if multiple co-located nodes fail, then they fail to ensure coverage.

Some approaches study optimal sensor placement~\cite{TGJP2017, KMGG2008, OSU2018} for fault-tolerance. These approaches utilize submodular functions to identify the best network configuration, while accounting for (possible) node failures. This has also been extended to non-submodular functions in~\cite{TJP2018}. However, these approaches do not consider the problem of adjusting the sensing radius of the active nodes for resilient target tracking.

Reactive approaches aim to recover coverage or connectivity that was lost due to the failed nodes. For stationary sensor networks, single sensor failure recovery methods have been proposed. These include, storing redundant data for data recovery \cite{CM2005}, re-routing connectivity paths around the failed node or adjusting packet size sent to the failed node \cite{li2019bim2rt,LC2012}, and re-configuring clusters to recover child nodes from a failed cluster head \cite{CK2017, HL2018, GY2003, AY2007, BXM2011, AKJ2015}. Fault recovery approaches for multiple co-located node failures have not been proposed for stationary sensor networks. The closest approach by Younis et. al. \cite{YSALS2014} requires identifying and placing optimal relay nodes to ensure connectivity around partitioned segments of the network. However, these methods typically apply to communication networks and do not address the problem of healing the coverage gaps in target tracking networks.

\vspace{-6pt}
\subsection{Maximum Network Lifetime Problem}
The second problem addressed in this paper is the \emph{Maximum Network Lifetime with Adjustable Range} (MNLAR) problem for (static) target coverage \cite{CWL2006}.  The objective of the MNLAR problem is two fold: (1) perform energy-efficient scheduling by activating and deactivating nodes periodically, and (2) select the active nodes and adjust their sensing ranges to ensure that every target is covered. This problem has been formulated as an optimization problem in the form of Integer Programming \cite{CWL2006, LWCL2009}, Linear Programming \cite{DVZLP2006, CDR2012, MSR2014}, Voronoi Graphs \cite{ZDG2009, BCLPS2012}, and improved Memetic optimization \cite{AB2017}. This problem is NP-complete \cite{CWL2006}; thus, for real-time performance, many heuristic solutions have been proposed.

The centralized heuristics aim to identify the family of cover sets that achieve coverage of all (static) targets. The objective is to optimize: the nodes' sensing ranges within a cover set and a sequence of cover sets that maximizes the network lifetime. This problem was solved using the \emph{Adjustable Range Set Covers} (AR-SC) algorithm \cite{CWL2006} which develops a Linear Programming heuristic to approximate the Integer Programming solution. The \emph{Sensor Network Lifetime Problem} (SNLP) \cite{DVZLP2006} utilized the Garg-Konemann algorithm to approximate the optimal linear programming solution within a small factor. The Column Generation algorithm by Cerulli et. al. \cite{CDR2012} used a greedy heuristic which was adjusted by Mohamadi et. al. \cite{MSRM2015} using a learning automata-based algorithm. Additionally, the MNLAR problem was extended to include directional (e.g., camera) sensor networks in \cite{MSR2014, RSM2017, CRSV2018, AMM2019}.

For distributed heuristics, many approaches follow a greedy-based scheme. AR-SC \cite{CWL2006} has each sensor node operate in rounds. During each round, a node computes its wait time, which is a representation of how much energy and contribution the sensor adds to the group. Once a node's wait time is up, it selects the minimum sensing range that can cover all the uncovered targets and transmits this information to its neighbors. This approach was  extended in the \emph{Adjustable Sensing Range Connected Sensor Cover} (ASR-CSC) algorithm \cite{LWCL2009} to allow for connectivity.

The \emph{Variable Radii Connected Sensor Cover} (VRCSC) algorithm \cite{ZDG2009} uses a Voronoi partition based algorithm that partitions the region into a Voronoi Graph and selects the sensing and communication ranges of each node to ensure $\kappa$-coverage and $\kappa$-connectivity. The node waits to make a decision based on its sleeping benefit and then determines its minimum sensing range to occupy the cell that contains a target.
A similar approach was presented in the \emph{Sensor Activation and Radius Adaptation} (SARA) algorithm \cite{BCLPS2012} using Voronoi-Laguerre diagrams.

Dhawan et. al. \cite{DAP2010} proposed two distributed heuristics, \emph{Adjustable Range Load Balancing Protocol} (ALBP) and \emph{Adjustable Range Deterministic Energy Efficient Protocol} (ADEEPS), where ALBP balances the energy depletion, while ADEEPS utilizes load balancing and reliability.

\vspace{-6pt}
\subsection{Research Gaps}
As stated earlier, all of the MNLAR proposed solutions rely on the assumption that the targets in the network are static and that their locations are known \emph{a priori} by all of the nodes. However, in target tracking applications, the targets are dynamic and travel through the network or may also randomly appear and disappear within the network. Therefore, this paper aims to solve the MNLAR problem for dynamic targets whose locations are unknown \emph{a priori}.

Additionally, the proposed MNLAR problems do not consider sensor failures. Fault-tolerance is only proactive, where the network deploys redundant sensor nodes. Thus, if a single node or multiple co-located nodes fail and create a coverage gap around a moving target's position, the network will fail to track the target causing a decrease in tracking performance.

The following literature gaps are studied and addressed in this paper.
\begin{enumerate}
\item Resilient Tracking: A reactive fault recovery method that provides resilience to coverage gaps (caused by co-located node failures, non-uniform node distribution, or very low network densities). Such a network enables a distributed self-healing mechanism that can opportunistically fill the coverage gaps around the moving targets in an energy-efficient manner.
\item Energy-efficient Tracking: A solution to the MNLAR problem for dynamic unknown targets.
\end{enumerate}

\section{Problem Formulation} \label{sec:pf}
Let $\Omega$ $\subset \mathbb{R}^2$ be the ROI with area $A_{\Omega}$. Let $\mathcal{S}=\{s_1,s_2,...s_n\}$ be the set of $n$ heterogeneous (multi-modal) sensor nodes randomly deployed throughout $\Omega$, where each node $s_i$ is static and its position is denoted as $\mathbf{u}^{s_i} \in \Omega$. Additionally, let $\mathcal{T}=\{\tau_1,\tau_2,...\tau_{m}\}$ be the set of $m$ targets traveling through $\Omega$. Let the actual position of a mobile target $\tau_{\ell} \in \mathcal{T}$ at time $k$ be denoted as $\mathbf{u}^{\tau_{\ell}}(k) \in \Omega$.

\subsection{Description of a Sensor Node}
A sensor node is a multi-modal autonomous agent that contains a heterogeneous sensor suite, a data processing unit (DPU), a transmitter/receiver, and a GPS device. The sensor suite contains several Low Power Sensing (LPS) devices which are passive binary detectors consuming very little energy (e.g., Passive Infrared (PIR) sensors). It also contains High Power Sensing (HPS) devices which are active sensors providing the range and azimuth measurements of targets (e.g., Laser Range Finders) \cite{LWCL2009}. The DPU performs  computations to make device scheduling decisions.

The use of LPS and HPS devices on a single node is practical in target tracking applications \cite{K2006} since tracking with only HPS devices is costly \cite{ADBKZNMCDG2004}. This allows the node to first detect a target using the LPS device and then accurately track it by activating the HPS device. It is assumed that the detection areas of LPS and HPS devices are circular. While the LPS devices have a fixed sensing range $R_{LPS}$, the range of HPS devices can be adjusted by controlling the amount of power supplied to the sensors \cite{CWL2006}. Thus, each node $s_i \in \mathcal{S}$ can adjust the range of it's  HPS  device from $L$ levels depending on the need, such that $R_{HPS}^{s_i}(k)$ $\in$ $\{R_{1},R_{2},...R_{L}\}$, where $R_{1}<R_{2}<...<R_{L}$, and $R_1$ is the default HPS sensing range.

\begin{defn}[\textbf{Neighborhood}]
The neighborhood of a node $s_i \in \mathcal{S}$ is defined as
\begin{eqnarray}
\mathcal{N}^{s_i}  \triangleq  \left\{s_j\in \{\mathcal{S}\setminus s_i\}: ||\mathbf{u}^{s_j}-\mathbf{u}^{s_i}|| \leq R_c\right\},
\end{eqnarray}
that includes all nodes within a circle of radius $R_c\ge 2R_L$, which can communicate with the node $s_i$.
\end{defn}

\begin{rem} \label{rem:com} This paper assumes reliable communication using a wireless broadcasting scheme. Future work will study the effects of communication failures on the sensor network's performance.
\end{rem}
\subsection{Energy Consumption and Network Lifetime}

\begin{defn}[\textbf{Energy Consumption}] The energy consumed \cite{CCLS2011} by a node $s_i$ during a $\Delta T$ time interval is defined as
\begin{eqnarray} \label{eq:energy}
E^{s_i}_{\Delta T}(k) & = & \sum_j \chi^{s_i}_j(k)e^{s_i}_{j}.\Delta T,
\end{eqnarray}
where the subscript $j$ $\in$ $\{$LPS, HPS, DPU, transmitter (TX), receiver (RX), clock$\}$; $e^{s_i}_{j}$ is the rate of energy consumed by device $j$ per unit time; and $\chi^{s_i}_j(k)$ $\in$ $\{0,1\}$ is the device status, ON or OFF, at time $k$.
\end{defn}
\vspace{-6pt}
The energy consumption rates $e^{s_i}_j$ in Eq.~(\ref{eq:energy}) are assumed constant for all devices except for the TX and HPS  devices. The transmission energy cost depends on the number of transmissions that have occurred during the $\Delta T$ time step, i.e., $e^{s_i}_{TX}(n_{TX})=n_{TX}e_{TX}$, where $n_{TX}$ is the number of transmissions and $e_{TX}$ is a constant value. The energy cost of the HPS device depends on the adaptive sensing range \cite{LWCL2009} $R_{HPS}^{s_i}(k)$ of the HPS device, such that
\begin{eqnarray}
e^{s_i}_{HPS}\left(R_{HPS}^{s_i}(k)\right) = w R_{HPS}^{s_i}(k), \label{eq:elin}
\end{eqnarray}
where $w$ is the proportionality constant. The total energy consumed by $s_i$ up to time $k$ can be computed as $E^{s_i}(k)$$=$$\sum^k_{k'=1} E^{s_i}_{\Delta T}(k')$. Thus, the total energy consumed by the network is given as
\begin{eqnarray}
E^{Net}(k) = \sum_{s_i \in \mathcal{S}} E^{s_i}(k).
\end{eqnarray}

Since this paper considers mobile targets, the network lifetime is defined as follows.

\begin{defn}[\textbf{Network Lifetime}]\label{defn:nlife}
Consider a trajectory $\gamma$ in the region $\Omega$ that is followed by the maximum number of targets. Now consider a cylindrical tube $\Omega_{\gamma}\subset \Omega$ of radius $R_{LPS}$ around $\gamma$, which contains a set of sensors $\mathcal{S}_{\gamma}\subset \mathcal{S}$. Since the maximum number of targets travel through $\Omega_{\gamma}$, the nodes in  $\mathcal{S}_{\gamma}$ will die earliest in the network. Thus, the expected network lifetime, $\overline{T}_{Life}$, is defined as the time when the energy of sensor nodes in $\mathcal{S}_{\gamma}$ reduces to a certain fraction $\eta \in [0,1)$, s.t.
\begin{align*}\label{eq:nlife}
\frac{\sum_{s_j \in \mathcal{S}_{\gamma}} \left(E^{s_j}_0- E^{s_j}\left(\overline{T}_{Life}\right)\right)}{\sum_{s_j \in \mathcal{S}_{\gamma}} E^{s_j}_0} = \eta
\end{align*}
where $E^{s_j}_0$ is the initial energy of node $s_j$.
\end{defn}

\begin{wrapfigure}{r}{0.34\textwidth} 
        \includegraphics[width=0.34\textwidth]{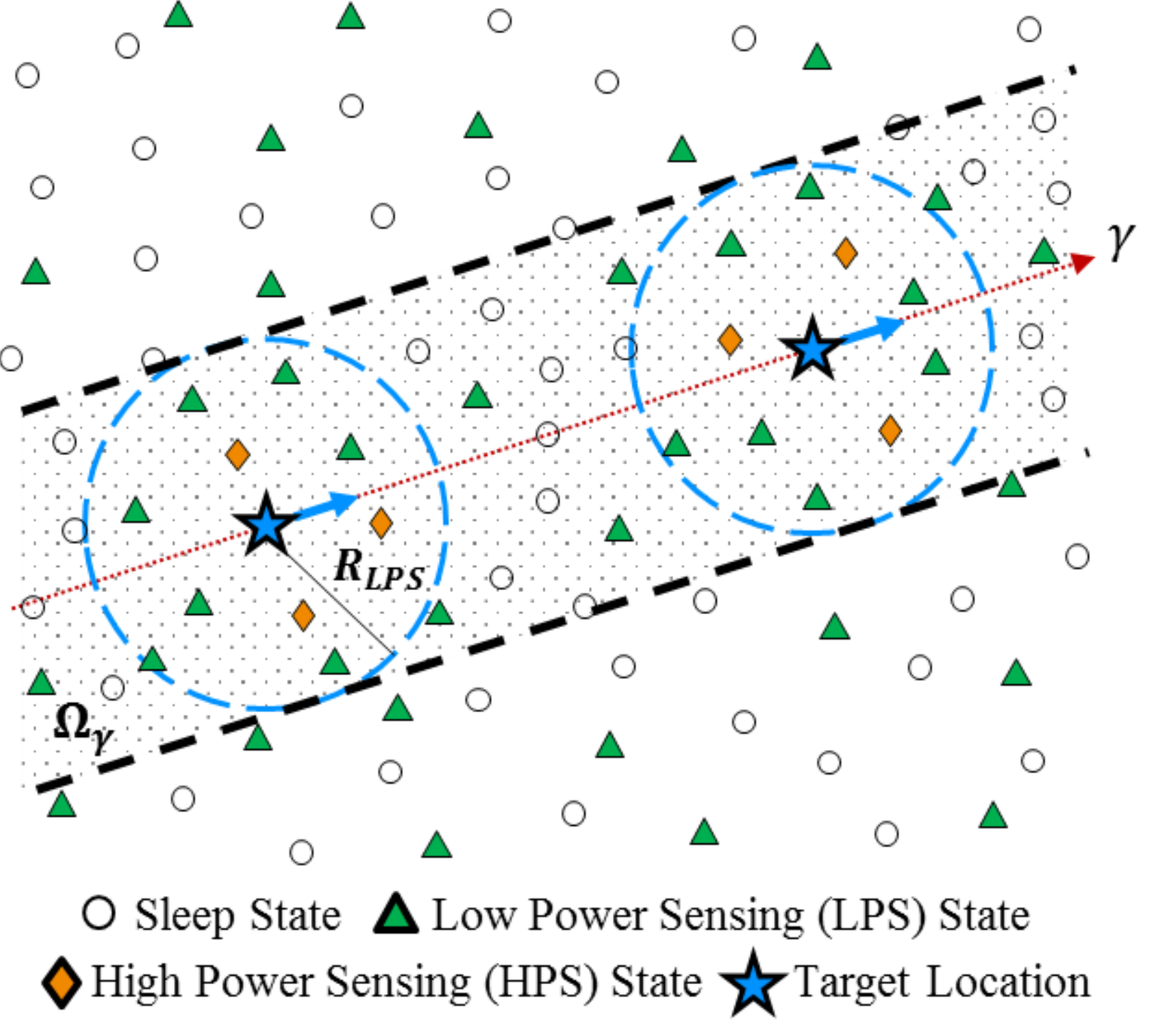}\vspace{-10pt}
        \caption{Example of the tube $\Omega_\gamma$.} \label{fig:net_life_eg}
	\vspace{6pt}
\end{wrapfigure} Fig.~\ref{fig:net_life_eg} shows a tube $\Omega_\gamma$ with two targets. The network lifetime is computed over $\Omega_\gamma$  because the nodes first detect the target in the LPS state, then initialize the target's state in the  HPS  state to start the adaptive node selection process. Thus, once all nodes within $\Omega_\gamma$ deplete their energies, the network will no longer be able to detect and track the targets.


\begin{rem} Defn.~\ref{defn:nlife} refers to the worst case when the targets follow the same trajectory. If their trajectories differ, then the tube's width will be expanded, resulting in an increased network lifetime since additional nodes will be available. Additionally, the nodes outside of the tube are operating in a low energy state, i.e., Sleep or LPS state, which allows them to conserve energy, as discussed in Section~\ref{sec:poser}.
\end{rem}

\subsection{Target Coverage and Coverage Degree}
First, we first describe the coverage area of a sensor node and that of the entire sensor network.

\begin{defn}[\textbf{Coverage Area}]
The coverage area of a node $s_i \in \mathcal{S}$ at time $k$ is defined as
\begin{eqnarray}
\Omega^{s_i}(k) \triangleq \left\{\mathbf{u}\in \Omega: ||\mathbf{u}-\mathbf{u}^{s_i}||\le R_{HPS}^{s_i}(k)\right\},
\end{eqnarray}
where it could measure the target using it's HPS devices with sensing range $R_{HPS}^{s_i}(k)$. Thus, the total coverage area of the entire sensor network at time $k$ is $\Omega^{Net}(k) \triangleq \bigcup_{s_i \in \mathcal{S}} \Omega^{s_i}(k)$.
\end{defn}

\begin{wrapfigure}{r}{0.48\textwidth} \vspace{-12pt}
	\begin{subfigure}{0.23\textwidth}
        \centering
        \includegraphics[width=\textwidth]{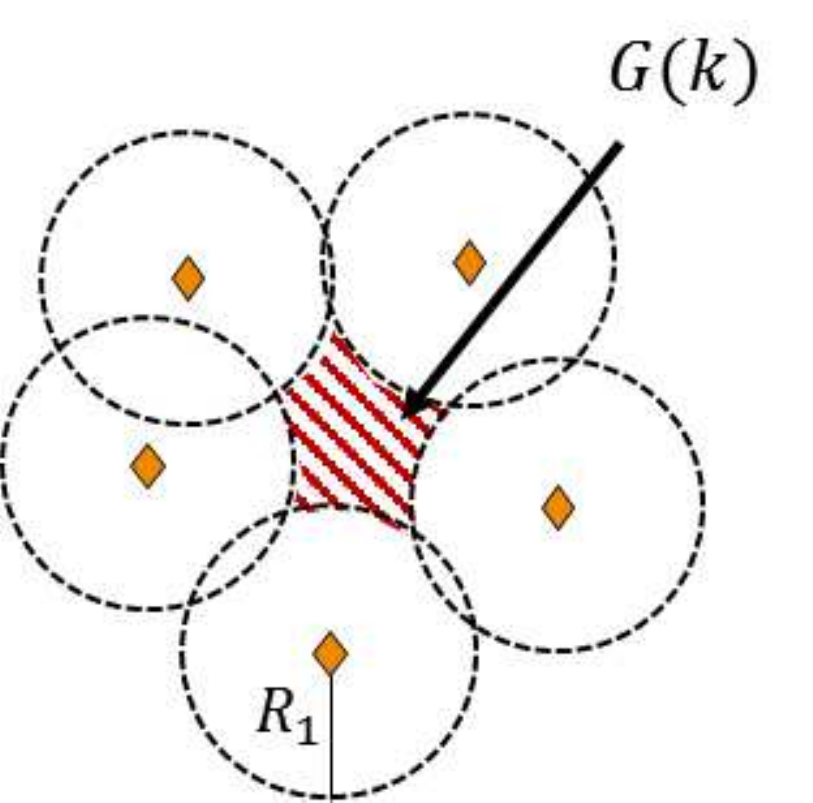}
        \caption{Standard network with $R_{HPS}^{s_i}(k)=R_1$ $\forall k$}\label{fig:cov1}
    \end{subfigure}	
	\begin{subfigure}{0.23\textwidth}
        \centering
        \includegraphics[width=\textwidth]{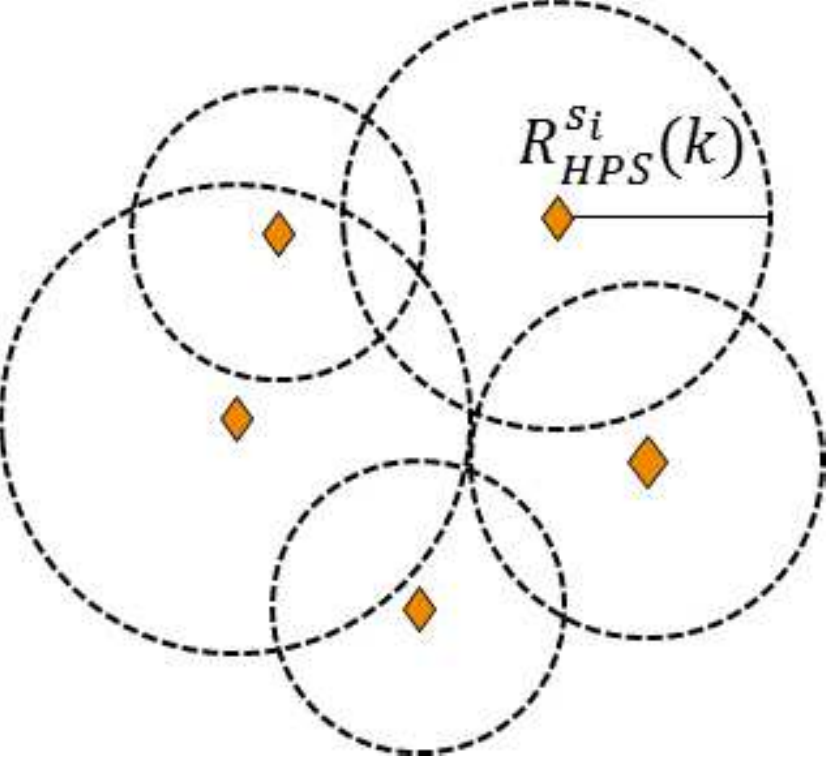}
        \caption{POSE.R network with varying $R_{HPS}^{s_i}(k)$}\label{fig:cov2}
    \end{subfigure}
	\vspace{-6pt}
	\caption{Example of how the POSE.R network adapts to heal a coverage gap present in the network.} \label{fig:cov_gap_eg} \vspace{-12pt}
\end{wrapfigure}
In practice, it is possible that $\Omega^{Net}(k) \subset \Omega$, thus causing coverage gaps, as shown in Fig. \ref{fig:cov_gap_eg}.

\begin{defn}[\textbf{Coverage Gap}]
A connected region $G(k) \neq \emptyset$ is defined as a coverage gap if $G(k) \subseteq \Omega \setminus {\Omega^{Net}(k)}$, that means no sensor node could track the target when it travels in $G(k)$.
\end{defn}

\begin{rem}
Coverage gaps could be present due to sparse or non-uniform initial node deployment, or they may also gradually develop over time due to sensor failures or other reasons. Thus, the goal of the POSE.R algorithm is to expand the HPS sensing ranges of selected nodes around the target to opportunistically heal the coverage gaps present in the network, as seen in Fig.~\ref{fig:cov_gap_eg}.
\end{rem}

Next, we define target coverage.
\begin{defn}[\textbf{Target Coverage}]
A target $\tau_\ell \in \mathcal{T}$ is said to be covered at time $k$, if $\mathbf{u}^{\tau_\ell}(k) \notin \Omega/{\Omega^{Net}(k)}$, that is it does not fall in any coverage gap. For the full target set $\mathcal{T}$, target coverage is said to be complete at time $k$, if coverage is achieved for $\forall \tau_\ell \in \mathcal{T}$.
\end{defn}

Next, we define the concept of target coverage degree.

\begin{defn}[\textbf{Target Coverage Degree}]
The coverage degree $D^{\tau_{\ell}}(k)$ of a target $\tau_{\ell}$ is defined as the number of nodes that are covering the target at time $k$.
\end{defn}

To ensure high tracking accuracy and low missed detection rates, POSE.R performs distributed sensor fusion for target state prediction, thus we formulate the target coverage problem such that $D^{\tau_{\ell}}(k)$=$N_{sel}$ $>2$, $\forall$ $k$. This ensures that geometrically diverse state estimates are fused to improve the state estimation accuracy. At the same time, $N_{sel}$ should be small for energy-efficiency and low complexity since state fusion complexity increases as the number of states increases. In this paper, we consider $N_{sel}=3$ to improve state estimation and fusion while reducing the overall complexity. However, the network designer can select this parameter based on his specific requirements.

\begin{wrapfigure}{r}{0.32\textwidth} \vspace{12pt}
        \includegraphics[width=0.32\textwidth]{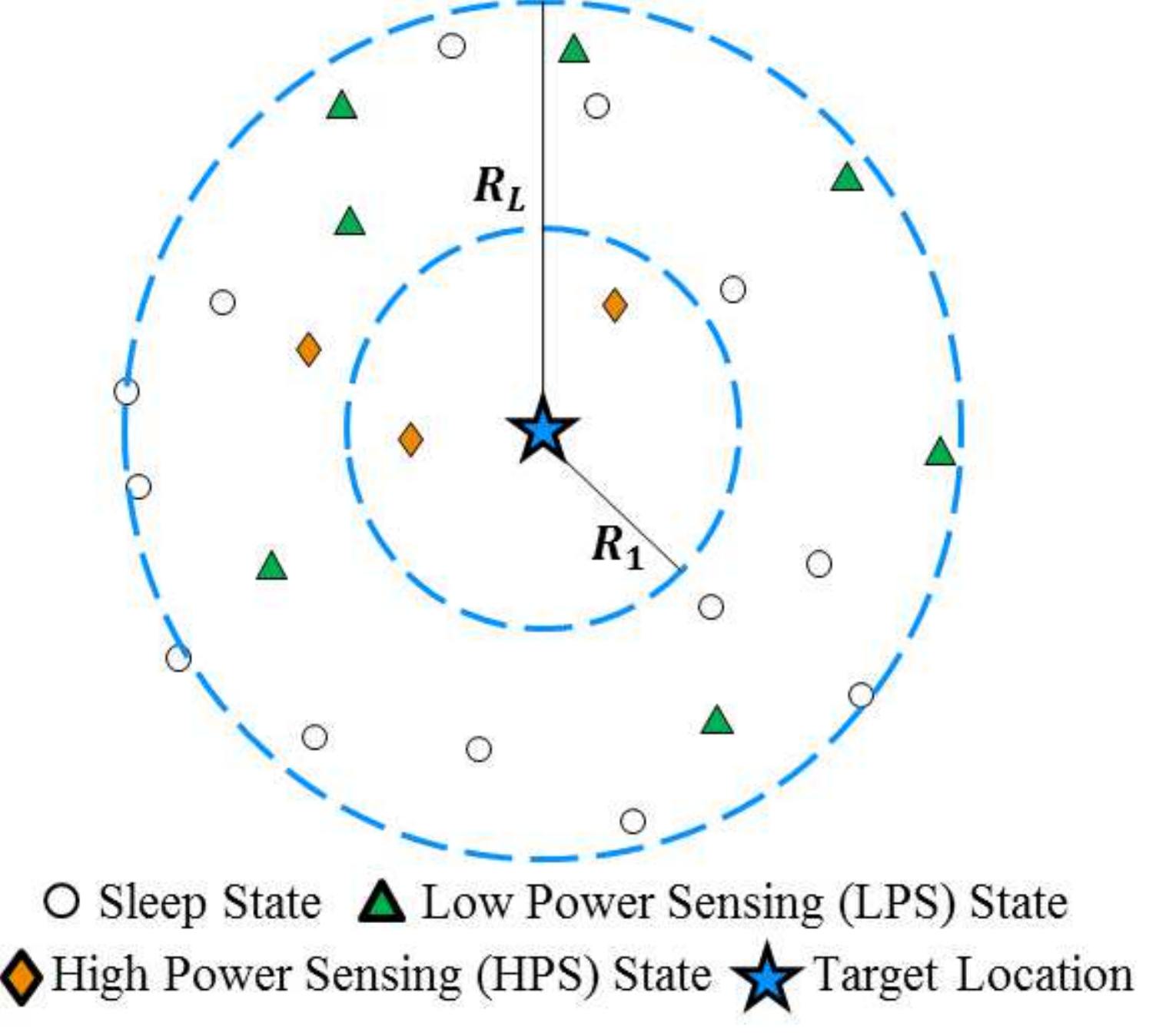}\vspace{-3pt}
        \caption{Example of a base coverage degree $D_b^{\tau_\ell}(k)=2$ and an extended coverage degree $D_e^{\tau_\ell}(k)=3$. } \label{fig:base_ext_deg_eg}
	\vspace{-20pt}
\end{wrapfigure}

The target coverage degree is further defined to be one of the following two types.

\begin{defn}[\textbf{Base and Extended Coverage Degrees}]
The base coverage degree $D_b^{\tau_{\ell}}(k)$ of a target $\tau_{\ell} \in \mathcal{T}$ at time $k$ is defined as the number of nodes that are covering the target with their base sensing range $R_1$. Similarly, the extended coverage degree $D_e^{\tau_{\ell}}(k)$ of a target $\tau_{\ell} \in \mathcal{T}$ at time $k$ is defined as the number of nodes that are covering the target with their base as well as extended sensing ranges in the set $\{R_1,..., R_L\}$.
\end{defn}

An example of the base and extended coverage degrees is shown in Fig.~\ref{fig:base_ext_deg_eg}.  Here, there are only $2$  HPS  nodes that are capable of covering the target with a range $R_1$, while there are $3$  HPS  nodes that can cover the target with any sensing range. Thus, the base coverage degree is $D_b^{\tau_\ell}(k)=2$ and the extended coverage degree is $D_e^{\tau_\ell}(k)=3$.

\begin{rem}
Extended coverage is required at time $k$ only if the base coverage degree is insufficient, i.e., if $D_b^{\tau_{\ell}}(k)$<$N_{sel}$. This is described in Section~\ref{sec:ass}.
\end{rem}

\subsection{Target Detection and Measurement} \label{sec:detectnmeasure}
After describing the sensor node, energy consumption, and target coverage, here we describe how a target is actually detected and measured by sensors. The motion of a target, $\tau_{\ell}$, is modeled using a \emph{Discrete White Noise Acceleration} (DWNA) model \cite{BLK2004} as follows
\begin{eqnarray}\label{eq:1}
\mathbf{x}(k+1) & = & \mathbf{f}(\mathbf{x}(k),k) + \boldsymbol{\upsilon}(k),
\end{eqnarray}
where $\mathbf{x}(k)\triangleq[x(k), \dot{x}(k), y(k), \dot{y}(k), \psi(k)]'$ is the target state at time $k$, which includes the position $(x(k),y(k))$, velocity $(\dot{x}(k),\dot{y}(k))$, and turning rate $\psi(k)$; $\mathbf{f}(\mathbf{x}(k),k)$ is the state transition matrix, $\boldsymbol{\upsilon}(k)$ is the zero-mean white Gaussian process noise. In this work, it is assumed that the target travels according to the nearly coordinated turning model~\cite{BLK2004}.

A sensor node $s_i$ can use it's  LPS  devices for target detection. We adopt the detection model proposed in \cite{ZC2004}. The probability of $s_i$ detecting a target $\tau_\ell$ is given as:
\begin{equation} \label{eq:pd}
P_{LPS}^{\tau_{\ell},s_i}(k) = \left\{
\begin{array}{ll}
\alpha & d(\tau_{\ell},s_i) < R_{r} \\
 \alpha e^{-\beta (d(\tau_{\ell},s_i)-R_{r})} & R_{r} \le d(\tau_{\ell},s_i) \le R_{LPS} \\
\end{array} \right.
\end{equation}
where $d(\tau_\ell,s_i) = ||\mathbf{u}^{\tau_\ell}(k)-\mathbf{u}^{s_i}||$; $R_{r}$ is the reliable sensing radius of the LPS device; $\alpha$ is the detection probability within $R_r$; and $\beta$ is the decay rate of detection probability with distance greater than
$R_r$. If the target lies beyond $R_{LPS}$, then $s_i$ can receive false alarms with a probability $p_{fa}=1-e^{-f_r \Delta T}$~\cite{W2008}, where $f_r$ is the false alarm rate during a $\Delta T$ second scan.

On the other hand, a node $s_i$ can use it's  HPS  devices to collect the measurements, $\mathbf{z}(k)=\{\mathbf{z}_j(k)\}_{ j=1,...o}$, of the target at time $k$, such that
\begin{eqnarray} \label{eq:tar_mod}
\mathbf{z}_j(k) & = & \mathbf{h}(\mathbf{x}(k),k) + \mathbf{w}(k),
\end{eqnarray}
where each $\mathbf{z}_j(k)$ includes the range and azimuth measurements; $\mathbf{h}(\mathbf{x}(k),k)$ is a nonlinear measurement model that translates the target's state into a measurement $\mathbf{z}_j(k)$ \cite{BLK2004}; and $\mathbf{w}(k)$ is the zero-mean white Gaussian measurement noise. The measurements of $\tau_\ell$ are received by $s_i$ with a probability $P_{HPS}^{\tau_{\ell},s_i}(k)=p_d$,  if $d(\tau_{\ell},s_i) \leq R_{HPS}^{s_i}(k)$, where $p_d$ is the probability of detection of the HPS sensor. It is assumed that even if the targets are blocking each other, a measurement is received for each target detected within the HPS sensing range. As future work, more realistic detection models will be considered. Furthermore, the measurements $\mathbf{z}_j(k)$ may also contain some false measurements along with the true target measurements due to the target traversing through a cluttered environment. The number of false measurements received at each time step $k$ are generated according to a Poisson distribution with mean $\mu_{cl}$~\cite{BDH2009}. The locations of false measurements are drawn from a uniform distribution within the node's coverage area.

\subsection{Objective} \label{obj}
The main objective of the target tracking problem addressed in this paper is to develop a distributed autonomy approach that employs a node-level probabilistic switching control of the devices to achieve energy-efficiency and resilience, while maintaining high tracking accuracy and low missed detection rates. The two primary features of the POSE.R network are discussed below.
\begin{itemize}
\item [1.] \textbf{Energy-efficiency:} This is essential to improve the network lifetime. For energy-efficiency, POSE.R performs opportunistic sensing, where the aim is to form a cluster of nodes with their HPS  devices activated, in regions around the current and predicted positions of the target. The nodes away from these regions preserve energy by either using LPS devices to stay aware or sleeping. For this purpose, it is necessary to predict the target's state at every time step via distributed fusion. This is followed by distributed adaptive node selection around the predicted state of the target to form a cluster of optimal nodes with high energies and geometric diversity. The cluster size is chosen small ($N_{sel}$=3) to avoid computational burden of distributed optimization and to save energy. These selected nodes track the target with high accuracy. As target moves, this cycle continues with dynamic cluster selection to maintain continuous target tracking with significant energy savings.

\vspace{3pt}
\item [2.] \textbf{Resilience:} This is essential to maintain the tracking performance in regions of low node density or coverage gaps (caused by node failures, or non-uniform/sparse node distribution). In practical networks, the tracking performance can degrade and the target can be lost while travelling inside the coverage gaps, and when it reappears, state re-initialization is required to start tracking it again. In this regard, resilience imparts the network with the capability of opportunistic self-healing to track the target even when it passes through a coverage gap by proactively extending the sensing ranges of selected nodes. For this purpose, first a cluster is formed around each target's predicted position using a node selection process. Then, the coverage degree is computed by each cluster independently. If $D_b^{\tau_{\ell}}(k)$<$N_{sel}$, then POSE.R performs distributed optimization to select nodes outside the regular sensing range around the targets' predicted positions, to achieve $D_e^{\tau_{\ell}}(k)$=$N_{sel}$. These selected nodes can then optimally extend their HPS ranges to maximize coverage while minimizing energy consumption.  By optimal extension of the ranges of these selected  HPS sensors, the coverage gap reduces or even completely disappears during the transition of a target.
\end{itemize}

The formal objective functions for the above are discussed in Section~\ref{sec:ass}.

\section{POSE.R Algorithm} \label{sec:poser}
This section describes the POSE.R algorithm where each sensor node is equipped with a $PFSA$-based supervisor for distributed probabilistic control of its devices, as shown in Fig.~\ref{fig:main}.

\begin{defn}[\textbf{PFSA}]
A PFSA~\cite{VTDCC2005} is defined as a $3$-tuple $\Xi=\langle \Theta,A, P \rangle$, where
\begin{itemize}
\item $\Theta$ is a finite set of states,
\vspace{1pt}
\item $A$ is a finite alphabet,
\vspace{1pt}
\vspace{1pt}
\item $p: \Theta \times \Theta \rightarrow [0,1]$ are the state transition probabilities which form a stochastic matrix $P\equiv[p_{i,j}]$, where $p_{i,j}\equiv p(\theta_i,\theta_j)$, $\forall \theta_i,\theta_j \in \Theta$, s.t. $\sum_{\theta' \in \Theta} p(\theta,\theta') = 1, \ \forall \theta \in \Theta.$
\end{itemize}
\end{defn}

The alphabet $A = \{\epsilon, 0, 1\}$, where $\epsilon$ is the null symbol emitted when no information is available, $0$ indicates no target detection, and $1$ indicates target detection. A symbol is emitted at each state transition, thus a symbol sequence is generated which keeps track of the node's target detection history.
The state set $\Theta$ consists of three states: \emph{Sleep} ($\theta_1$),  LPS  ($\theta_2$), and HPS ($\theta_3$), as shown in Fig.~\ref{fig:main}.

Consider a node $s_i\in \mathcal{S}$ which can operate in one of the three states at one time. The $PFSA$-based supervisor runs a unique algorithm within each state to dynamically update it's state transition probabilities based on the information acquired about targets' whereabouts. These probabilities control the transition of the node from one state to another. The details of this probabilistic switching control are presented in Alg.~\ref{alg:lps}. A summary of the algorithms within each state are described below.

\begin{figure*}[t!]
	\begin{subfigure}{\textwidth}
        \centering
        \includegraphics[width=.55\textwidth]{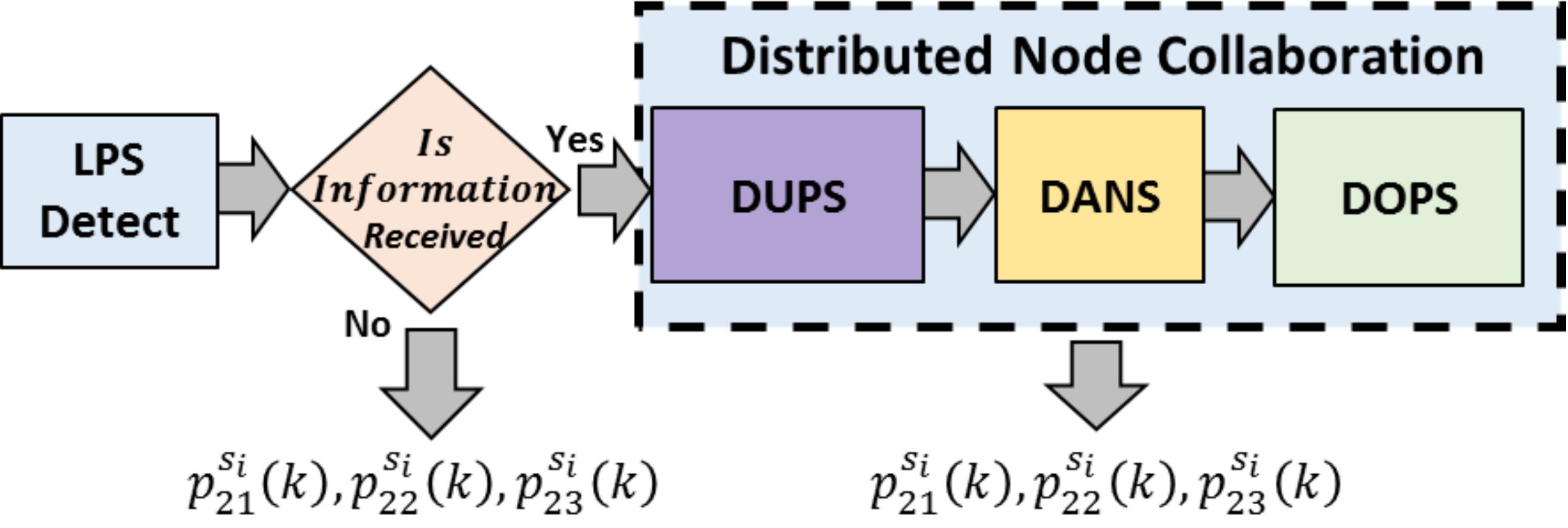}
        \caption{Low Power Sensing State Flowchart}\label{fig:lps}
    \end{subfigure}	
	\begin{subfigure}{\textwidth}
        \centering
        \includegraphics[width=.9\textwidth]{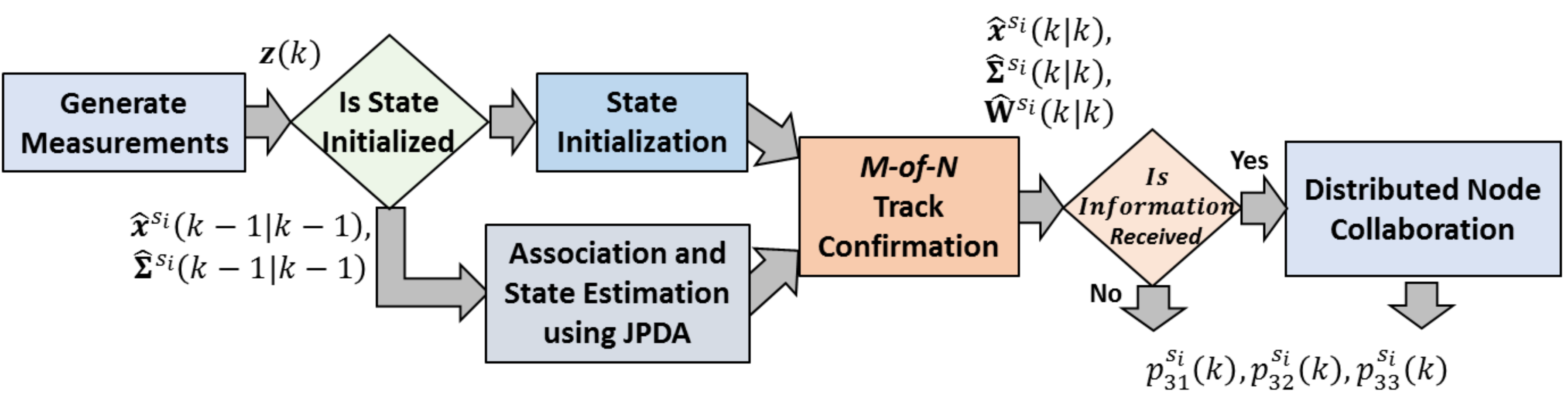}
        \caption{High Power Sensing State Flowchart}\label{fig:hps}
    \end{subfigure}
	\vspace{-6pt}
	\caption{Flowcharts for the algorithms within the LPS and  HPS  states. The distributed node collaboration consists of: DUPS (Distribution Fusion for Prediction of Target State), DANS (Distributed Adaptive Node Selection), and DOPS (Distributed Computation of the Probability of Success of Target Detection).} \label{fig:alg} \vspace{-12pt}
\end{figure*}

\vspace{-6pt}
\subsection{Sleep State} \label{sec:sleep}
The \emph{Sleep} state, $\theta_1$, is designed to minimize energy consumption by disabling all devices on the node $s_i$ except for a clock and the DPU to allow for state transitions. After every time interval $\Delta T$, $s_i$ can continue to sleep with a probability $p^{s_i}_{1,1}(k)=p_{sleep}$ or it can transition to the  LPS  state with a probability $p^{s_i}_{1,2}(k)=1-p_{sleep}$, where $p_{sleep} \in [0,1]$ is a design parameter. From the \emph{Sleep} state, $s_i$ cannot directly transition to the  HPS  state, i.e. $p^{s_i}_{1,3}(k)=0$. \textbf{Line 3 of Alg.~\ref{alg:lps}} shows the state transition probabilities. A node reaches the \emph{Sleep} state if the target is located far away or if the node is not selected for tracking.

\vspace{-6pt}
\subsection{Low Power Sensing State} \label{sec:lps}
The LPS state, $\theta_2$, is designed to detect the target and stay aware while conserving energy. In this state, the DPU, the transceiver, and the  LPS  devices are enabled while the  HPS  devices are disabled. Fig.~\ref{fig:lps} shows the flowchart for the algorithm, which is described below.

\begin{algorithm}[t!]
\SetKwInOut{Input}{input}\SetKwInOut{Output}{output}
\SetKwComment{Comment}{}{}
\SetCommentSty{footnotesize}
\DontPrintSemicolon
\caption{Probabilistic State Switching Control of $s_i$.}
\label{alg:lps}
\Input{$\mathbf{u}^{s_i}$, $\mathcal{N}^{s_i}_{HPS}$, $\hat{\mathbf{I}}^{s_i}(k)$, $N_{sel}$, $\theta^{s_i}(k)$, and $H$}
\Output{$p_{id,1}^{s_i}(k), p_{id,2}^{s_i}(k)$, and $p_{id,3}^{s_i}(k)$}
\If(\tcp*[f]{In Sleep state}){$\theta^{s_i}(k)=\theta_1$}{
$id \leftarrow 1$\\
$p_{id,1}^{s_i}(k)\leftarrow p_{sleep}, \ \ p_{id,2}^{s_i}(k)\leftarrow 1-p_{sleep}, \ \ p_{id,3}^{s_i}(k)\leftarrow 0$}
\If(\tcp*[f]{In LPS state}){$\theta^{s_i}(k)=\theta_2$}{
$id \leftarrow 2$\\
\uIf(\tcp*[f]{No information is received from neighbors}){$\mathcal{N}^{s_i}_{HPS} = \emptyset$}{
{$p_{id,1}^{s_i}(k) \leftarrow 1-P_{LPS}^{\tau_{\ell},s_i}(k), \ \ p_{id,2}^{s_i}(k) \leftarrow 0, \ \ p_{id,3}^{s_i}(k) \leftarrow P_{LPS}^{\tau_{\ell},s_i}(k)$ }}}
\If(\tcp*[f]{In HPS  state}){$\theta^{s_i}(k)=\theta_3$}{
$id \leftarrow 3$\\
\uIf(\tcp*[f]{No information is received from neighbors}){$\mathcal{N}^{s_i}_{HPS} = \emptyset$}{
{
$p_{id,1}^{s_i}(k) \leftarrow 0, \ \ p_{id,2}^{s_i}(k) \leftarrow 1-P_{HPS}^{\tau_{\ell},s_i}(k), \ \ p_{id,3}^{s_i}(k) \leftarrow P_{HPS}^{\tau_{\ell},s_i}(k)$}}}
\If(\tcp*[f]{In LPS or  HPS  state, and information was received from neighbors. Call distributed node collaboration (DNC)}){$\theta^{s_i}(k)\in \{\theta_2,\theta_3\}$ \& $\mathcal{N}^{s_i}_{HPS} \ne \emptyset$}{
	$\left\{\hat{\mathbf{x}}^{s_i}(k+1|k), [\mathcal{S}^*, R^*, \hat{P}_{HPS}^{s_i}](k+1)\right\}$$=$DNC($\hat{\mathbf{I}}^{s_i}(k)$) \\
		\uIf(\tcp*[f]{Node $s_i$ is selected as an optimal node}){$s_i \in \mathcal{S}^*(k+1)$}{
		$p_{id,1}^{s_i}(k) \leftarrow 0, \ \ p_{id,2}^{s_i}(k) \leftarrow 1-\hat{P}_{HPS}^{s_i}(k+1), \ \ p_{id,3}^{s_i}(k) \leftarrow \hat{P}_{HPS}^{s_i}(k+1)$
		}
	\ElseIf(\tcp*[f]{Node $s_i$ is not selected as an optimal node}){$s_i \notin \mathcal{S}^*(k+1)$}{
		\uIf(\tcp*[f]{Target's predicted position is within the range $R_1$ of $s_i$}){$||\mathbf{u}^{s_i}-H\hat{\mathbf{x}}^{s_i}(k+1|k)||\le R_1$}{
        $R_{HPS}^{s_i}(k+1)=R_1$\\
		$p_{id,1}^{s_i}(k) \leftarrow 1-\hat{P}_{HPS}^{s_i}(k+1), \ \ p_{id,2}^{s_i}(k) \leftarrow \hat{P}_{HPS}^{s_i}(k+1), \ \ p_{id,3}^{s_i}(k) \leftarrow 0$
		}		
		\ElseIf(\tcp*[f]{Target's predicted position is farther than the range $R_1$ of $s_i$}){$||\mathbf{u}^{s_i}-H\hat{\mathbf{x}}^{s_i}(k+1|k)||> R_1$}{
        Compute $D^{\tau_{\ell}}_{b}(k+1)$ \\
		\uIf(\tcp*[f]{Base coverage degree is sufficient}){$D^{\tau_{\ell}}_{b}(k+1)=N_{sel}$}
		{
		$p_{id,1}^{s_i}(k) \leftarrow 1, \ \ p_{id,2}^{s_i}(k) \leftarrow 0, \ \ p_{id,3}^{s_i}(k) \leftarrow 0$
		}
		\ElseIf(\tcp*[f]{Base coverage degree is insufficient}){$D^{\tau_{\ell}}_{b}(k+1)<N_{sel}$}
		{
		$R_{HPS}^{s_i}(k+1)=R_L$\\
		$p_{id,1}^{s_i}(k) \leftarrow 1-\hat{P}_{HPS}^{s_i}(k+1), \ \ p_{id,2}^{s_i}(k) \leftarrow \hat{P}_{HPS}^{s_i}(k+1), \ \ p_{id,3}^{s_i}(k) \leftarrow 0$
		}
		}
}
}
\end{algorithm}

\vspace{6pt}
\subsubsection{Target Detection} In the  LPS  state target detection can occur by two means: (i) using the  LPS  devices and/or (ii) by fusing the target state information received from the neighbors. If a target $\tau_{\ell}$ is located within $R_{LPS}$ of $s_i$, then $s_i$ can detect it with a probability $P_{LPS}^{\tau_{\ell},s_i}$, as in Eq.~(\ref{eq:pd}).

Next, $s_i$ checks if it has received any information from the  HPS  sensors in it's neighborhood (see Section~\ref{sec:dsc} for details). Let $\mathcal{N}_{HPS}^{s_i}\subseteq \mathcal{N}^{s_i}$ be the set of nodes in the  HPS  state in the neighborhood of $s_i$, which have broadcasted the target state information.  If $\mathcal{N}_{HPS}^{s_i} = \emptyset$, i.e., no information is received from neighbors (\textbf{Line 5, Alg.~\ref{alg:lps}}), then $s_i$ transitions to the  HPS  state solely based on its own $P_{LPS}^{\tau_{\ell},s_i}(k)$. The corresponding updates to the state transition probabilities are shown in \textbf{Line 7, Alg.~\ref{alg:lps}}. On the other hand, if $\mathcal{N}_{HPS}^{s_i} \ne \emptyset$, i.e., information is received from neighbors (\textbf{Line 13, Alg.~\ref{alg:lps}}), then $s_i$ performs distributed node collaboration (DNC) (\textbf{Line 14, Alg.~\ref{alg:lps}}) to make an informed switching decision as described below.

\vspace{6pt}
\subsubsection{Distributed Node Collaboration (DNC)} \label{sec:lps_dsc}
This consists of the following three steps below. (Full details are in Section \ref{sec:dsc}.)

\begin{itemize}
\item [i.] DUPS (\underline{D}istributed F\underline{u}sion for \underline{P}rediction of Target \underline{S}tate): In this step, $s_i$ fuses the received information to obtain a target state prediction $\hat{\mathbf{x}}^{s_i}(k+1|k)$.

\item  [ii.] DANS (\underline{D}istributed \underline{A}daptive \underline{N}ode \underline{S}election): In this step, the predicted state is used for:
\begin{itemize}
\item [a)] selecting the optimal set of nodes, $\mathcal{S}^*(k+1)$, to track the target at time $k+1$, and
\item [b)] selecting their optimal sensing ranges, $\mathcal{R}^*(k+1)=\left\{R_{HPS}^{s_j}(k+1)\right\}_{\forall s_j\in \mathcal{S}^*(k+1)}$, to maximize target coverage and minimize energy consumption.
\end{itemize}
\item [iii.] DOPS (\underline{D}istributed C\underline{o}mputation of the \underline{P}robability of \underline{S}uccess of Target Detection, $\hat{P}_{HPS}^{s_i}(k+1)$): In this step, node $s_i$ computes it's probability of successfully detecting the target at time $k+1$ considering the uncertainty in target's state prediction (details are in Eq.~(\ref{eq:phps})).
\end{itemize}

\vspace{6pt}
\subsubsection{Computation of the State Transition Probabilities after DNC} \label{sec:lps_stp}
If $s_i \in \mathcal{S}^*(k+1)$ (\textbf{Line 15, Alg.~\ref{alg:lps}}), then it uses $\hat{P}_{HPS}^{s_i}(k+1)$ to update the state transition probabilities (\textbf{Line 16, Alg.~\ref{alg:lps}}). However, if $s_i \notin \mathcal{S}^*(k+1)$ (\textbf{Line 17, Alg.~\ref{alg:lps}}), then it implies that there are other better nodes to track the target. In this case, if $s_i$ is located within $R_1$ of the target's predicted position (\textbf{Line 18, Alg.~\ref{alg:lps}}), then although it is not selected, it should still stay in the  LPS  state to  participate in node selection during the next time step to facilitate continuous tracking (\textbf{Lines 19-20, Alg.~\ref{alg:lps}}). This is important as the current selected nodes in $\mathcal{S}^*(k+1)$ may not be suitable for tracking at the next time step and thus we need other candidate nodes for the next round of node selection. (Note that sleeping nodes don't participate in node selection). On the other hand, if $s_i$ is located at a distance $>R_1$ from the target's predicted position (\textbf{Line 21, Alg.~\ref{alg:lps}}), then it computes the base coverage degree $D^{\tau_{\ell}}_{b}(k+1)$ (\textbf{Line 22, Alg.~\ref{alg:lps}}). If $D^{\tau_{\ell}}_{b}(k+1)=N_{sel}$ (\textbf{Line 23, Alg.~\ref{alg:lps}}), then $s_i$ goes to $Sleep$ with probability 1 (\textbf{Line 24, Alg.~\ref{alg:lps}}). If $D^{\tau_{\ell}}_{b}(k+1)< N_{sel}$ (\textbf{Line 25, Alg.~\ref{alg:lps}}), then $s_i$ needs to be in the  LPS  state (\textbf{Lines 26-27, Alg.~\ref{alg:lps}}). The only way $D^{\tau_{\ell}}_{b}(k+1)< N_{sel}$ is possible if there are insufficient sensors within $R_1$ of the target's predicted position, i.e., it is a low density area or a coverage gap. This implies that at least some of the selected nodes are chosen from the region lying between $R_1$ to $R_L$ of the target. These nodes must then expand their  HPS ranges to achieve $D_e^{\tau_{\ell}}(k+1)=N_{sel}$. Therefore, the nodes not selected within $R_L$ should stay in the LPS state to participate in node selection as future candidates to track the target.

\subsection{High Power Sensing State} \label{sec:hps}
The  HPS  state, $\theta_3$, is designed to track the target and estimate it's state using the measurements from  HPS  devices. In this state, the DPU, the transceiver and the  HPS  devices are enabled while the  LPS  devices are disabled. Figure~\ref{fig:hps} shows the flowchart of the algorithm, which is described below.

\begin{figure*}[t]
\centering
    \begin{subfigure}[t]{.24\textwidth}
        \centering
        \includegraphics[width=\textwidth]{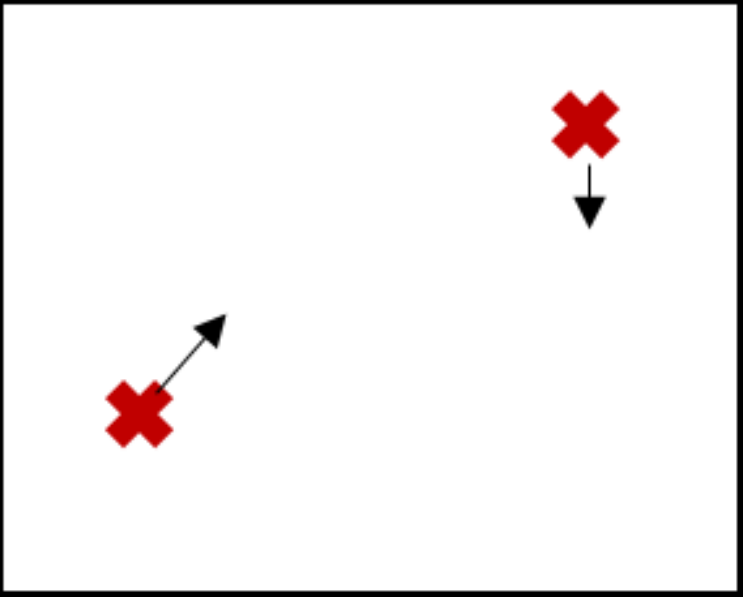}
        \caption{Previous state estimates}
    \end{subfigure}	
    \begin{subfigure}[t]{.24\textwidth}
        \centering
        \includegraphics[width=\textwidth]{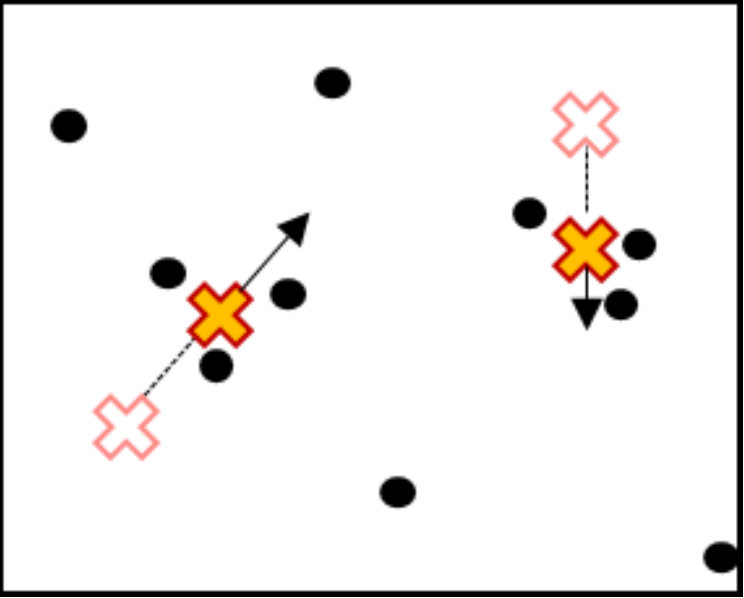}
        \caption{Predicted states and measurement set}
    \end{subfigure}	
    \begin{subfigure}[t]{.24\textwidth}
        \centering
        \includegraphics[width=\textwidth]{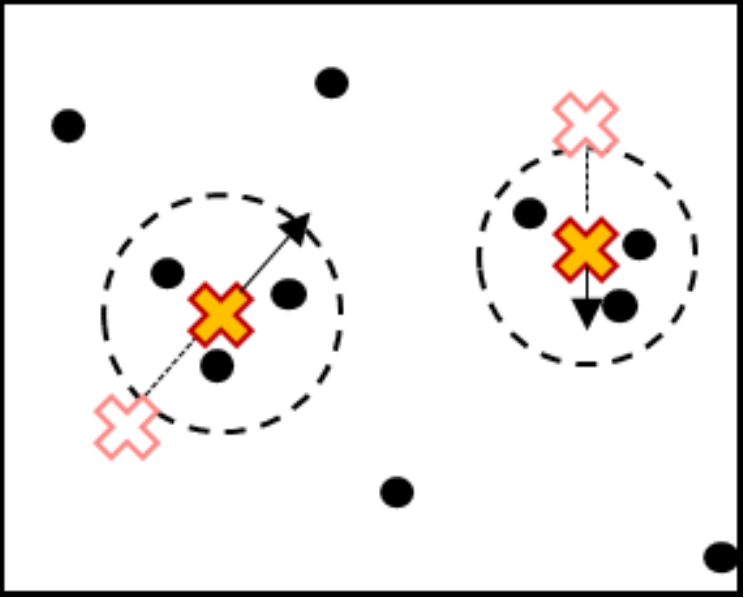}
        \caption{Associated measurements}
    \end{subfigure}	
    \begin{subfigure}[t]{.24\textwidth}
        \centering
        \includegraphics[width=\textwidth]{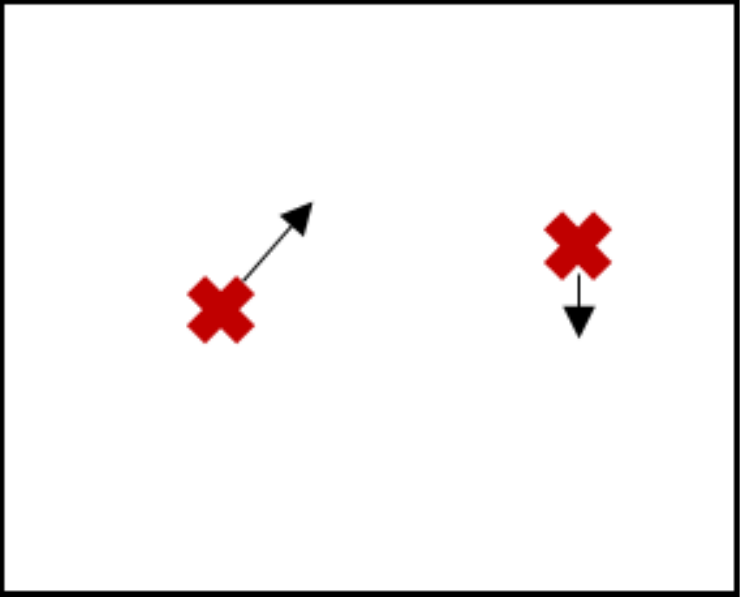}
        \caption{Updated state estimates}
    \end{subfigure}	
    \caption{Simple example of the Joint Probabilistic Data Association Filter. This process consists of first: (a) obtaining the previous state estimates; (b) predicting the state estimates during the next time step $k$ and collecting a set of measurements, as  shown by filled "x"s and black dots respectively; then, (c) the measurements are associated to the predicted state estimates based on the validation region shown in dotted circles; and finally, (d) the validated measurements are used to update the predicted state estimates.}
    \label{fig:jpda_example} \vspace{-12pt}
\end{figure*}

\vspace{6pt}
\subsubsection{Data Association and State Estimation} In the  HPS state, node $s_i$ first collects a set of measurements, $\mathbf{z}(k)$, from it's  HPS  devices with sensing range $R_{HPS}^{s_i}(k)$, where $R_{HPS}^{s_i}(k)$ was selected during the previous time step as part of the node selection process. Subsequently, the track is estimated by a Gaussian distribution with the state and covariance estimates, $\hat{\mathbf{x}}^{s_i}(k|k)$ and $\hat{\mathbf{\Sigma}}^{s_i}(k|k)$, respectively. The previous $\hat{\mathbf{x}}^{s_i}(k-1|k-1)$, $\hat{\mathbf{\Sigma}}^{s_i}(k-1|k-1)$ are updated using the \emph{Joint Probabilistic Data Association} (JPDA) method \cite{BDH2009} to generate $\hat{\mathbf{x}}^{s_i}(k|k)$ and $\hat{\mathbf{\Sigma}}^{s_i}(k|k)$.  Additionally, during the JPDA update step, the node maintains the Kalman filter gain matrix $\hat{\mathbf{W}}^{s_i}(k)$ to be utilized in the DNC algorithm. A simple example of the JPDA process is shown in Fig.~\ref{fig:jpda_example}. If the measurements do not associate to a previous state estimate, $s_i$ initializes a new state estimate \cite{BCB1989}.

\vspace{6pt}
\subsubsection{M-of-N Track Confirmation}
The HPS device measurements may contain false measurements at each time step, as discussed in Section~\ref{sec:detectnmeasure}. This can cause $s_i$ to initialize a new state estimate if a false measurement does not associate to a previous estimate. To account for false measurements and to ensure that a false track is not propagated throughout the network, $s_i$ utilizes the $M$-of-$N$ \emph{Track Confirmation Logic} \cite{CC2005} to allow the network to be robust to false measurements. This approach ensures that $M$ out of $N$ consecutive measurements are associated to a target state estimate before the node confirms that it is not a false track. Furthermore, once a target track has been confirmed, the node can only drop the track if $M$ consecutive measurements do not associate to it. Subsequently, the confirmed target's state and covariance estimates, $\hat{\mathbf{x}}^{s_i}(k|k)$ and $\hat{\mathbf{\Sigma}}^{s_i}(k|k)$, and the filter gain matrix, $\hat{\mathbf{W}}^{s_i}(k)$, are broadcasted.

Next, $s_i$ checks if it has received any information from  HPS  sensors in it's neighborhood $\mathcal{N}_{HPS}^{s_i}$. Since $s_i$ is in the  HPS  state and has broadcasted information to it's neighbors, the set of  HPS  sensors is redefined as $\mathcal{N}_{HPS}^{s_i}= \mathcal{N}_{HPS}^{s_i} \cup \{s_i\}$. However, if $s_i$ has not transmitted a confirmed track, $\mathcal{N}_{HPS}^{s_i}$ does not include $s_i$. If $\mathcal{N}^{s_i}_{HPS}=\emptyset$ (\textbf{Line 9, Alg.~\ref{alg:lps}}), i.e., no information is received, then $s_i$ relies on it's own measurement probability, $P_{HPS}^{\tau_{\ell},s_i}(k)$, to remain in the  HPS  state. The corresponding updates to the state transition probabilities are shown in \textbf{Line 11, Alg.~\ref{alg:lps}}. If $\mathcal{N}^{s_i}_{HPS}\ne \emptyset$ (\textbf{Line 13, Alg.~\ref{alg:lps}}), i.e., information is received, then $s_i$ performs distributed node collaboration (DNC) (\textbf{Line 14, Alg.~\ref{alg:lps}}) to make an informed switching decision.

\vspace{6pt}
\subsubsection{DNC and Computation of the State Transition Probabilities} \label{sec:hps_stp} Node collaboration and computation of the state transition probabilities follow the same processes as described in the  LPS  state in Sections~\ref{sec:lps_dsc} and \ref{sec:lps_stp}, respectively. Full details are available in Section~\ref{sec:dsc}.

\section{Distributed Node Collaboration} \label{sec:dsc}
This section presents the details of the DNC algorithm. Let $\mathcal{N}_{RC}=\{s_j\in \mathcal{S}: \mathcal{N}_{HPS}^{s_j}\neq \emptyset\}$ be the set of all nodes that have received the target's state information from the  HPS  sensors in their neighborhood who are currently tracking the target. Then, if $s_i \in \mathcal{N}_{RC}$, then it runs the DNC algorithm.  The three steps of the DNC algorithm are described below.

\subsection{STEP 1: Distributed Fusion for Prediction of Target State (DUPS)}
The first step in DNC consists of fusing the received target state information to obtain a fused state estimate and then a one-step prediction. Since $s_i \in \mathcal{N}_{RC}$, it could be in the  LPS  or  HPS  state. If $s_i$ is in the  HPS  state, then DUPS improves its target state prediction, and if $s_i$ is in the  LPS  state, then DUPS enables state prediction without sensing. The information ensemble received by $s_i$ is

\begin{eqnarray}
\hat{\mathbf{I}}^{s_i}(k)=\Big\{\Big[\hat{\mathbf{x}}^{s_j},\hat{\mathbf{\Sigma}}^{s_j},\hat{\mathbf{W}}^{s_j}\Big](k),\forall s_j \in \mathcal{N}_{HPS}^{s_i}\Big\}
\end{eqnarray}
where $\hat{\mathbf{x}}^{s_j}(k|k)$, $\hat{\mathbf{\Sigma}}^{s_j}(k|k)$, and $\hat{\mathbf{W}}^{s_j}(k)$ are the target state, covariance, and filter gain estimates made by node $s_j$ at time $k$. This information ensemble is used to make target state prediction as follows.

\vspace{6pt}
\subsubsection{Trustworthy Set Formation} Due to false measurements from the HPS sensor, noise, and other factors, it is possible that the information received may contain false tracks, which requires the node to first validate the information to ensure that it is accurate and reliable before processing. False measurements associated to a target track may result in a movement that differs from the target motion model. This causes the covariance of the estimate to increase above the initialized value. This increase in estimation error provides the node with an indication of whether the track information is trustworthy. Therefore, this step aims to reduce false tracks by forming a set of trustworthy neighbors $\mathcal{N}_T^{s_i}\subseteq \mathcal{N}^{s_i}_{HPS}$ by evaluating the sum of the estimated position error as follows
\begin{equation}
\mathcal{N}_T^{s_i} = \Big\{s_j \in \mathcal{N}^{s_i}_{HPS}: Trace\big(\mathbf{H}(k)\hat{\mathbf{\Sigma}}^{s_j}(k|k)\mathbf{H}(k)'\big)\leq \xi\Big\}
\end{equation}
where $\mathbf{H}(k)$ is the Jacobian of the measurement model defined in Eq. (\ref{eq:tar_mod}); $\mathbf{H}(k)\hat{\mathbf{\Sigma}}^{s_j}(k|k)\mathbf{H}(k)'$ is the target's estimated position error, obtained from a subset of the covariance matrix associated to only the position state variables;  and $\xi$ is the maximum tolerance of the estimate. In this paper, $\xi = \frac{R_{1}^2\sigma_{\phi}^2+\sigma_R^2}{2}$, where $\sigma_{\phi}$ and $\sigma_R$ are the standard deviations in the azimuth and range measurements of the HPS sensor encompassed in the measurement noise w(k). This is chosen based on the initialized state position error such that if the estimated error increases above $\xi$ the track will be discarded. Thus, node $s_i$ accumulates the following trustworthy information ensemble:
\begin{equation}
\mathbf{\hat{I}}_T^{s_i}(k)=\Big\{\Big[\hat{\mathbf{x}}^{s_j}, \hat{\mathbf{\Sigma}}^{s_j}, \hat{\mathbf{W}}^{s_j}\Big](k), \forall s_j \in \mathcal{N}_{T}^{s_i}\Big\}
\end{equation}

\begin{figure*}[t!]
    \centering
    \begin{subfigure}[t]{0.45\textwidth}
        \centering
        \includegraphics[width=\textwidth]{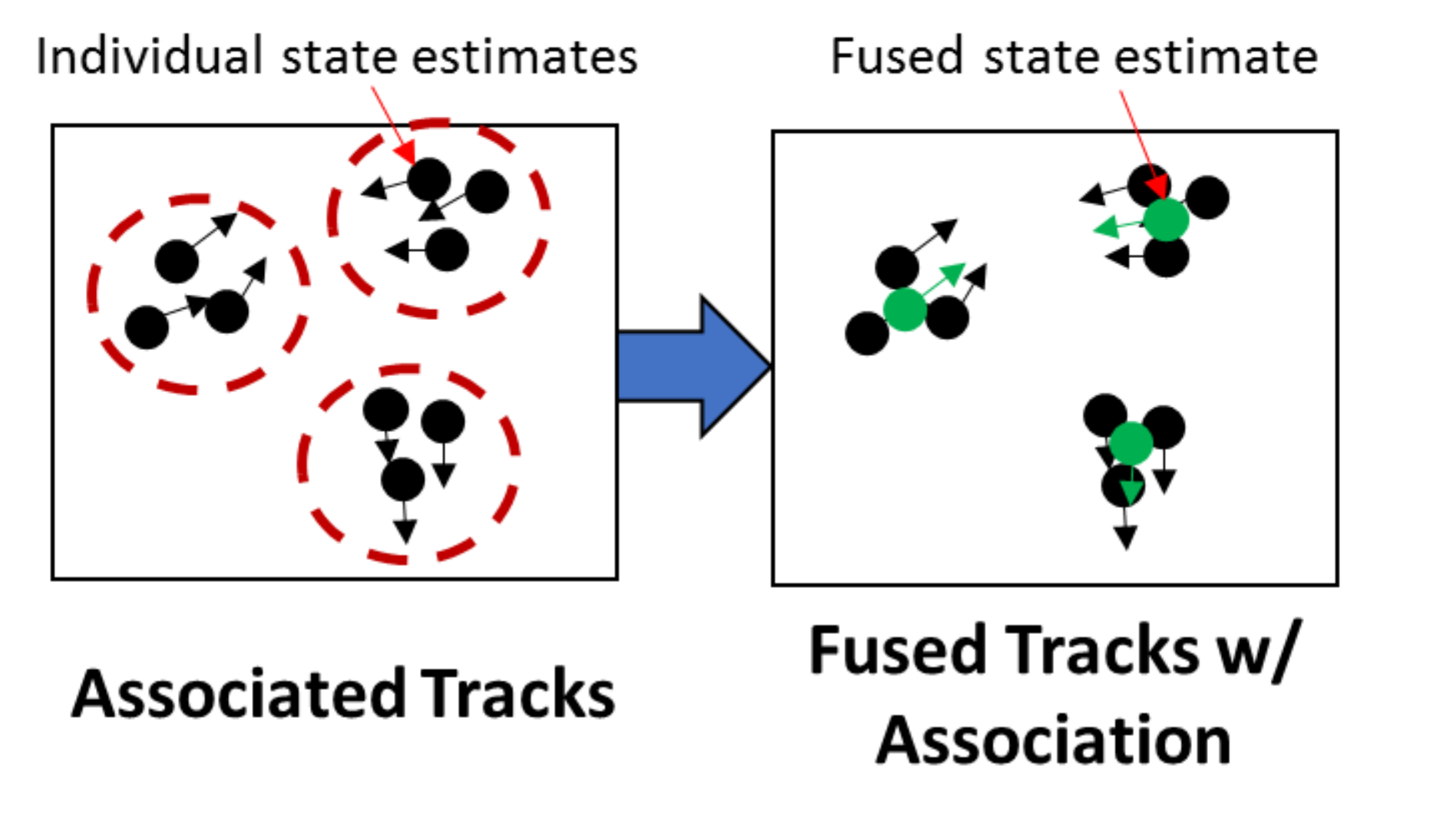}
        \caption{With T2TA}
        \label{fig:t2ta_t2tf}
    \end{subfigure}
    ~
    \begin{subfigure}[t]{0.45\textwidth}
        \centering
        \includegraphics[width=\textwidth]{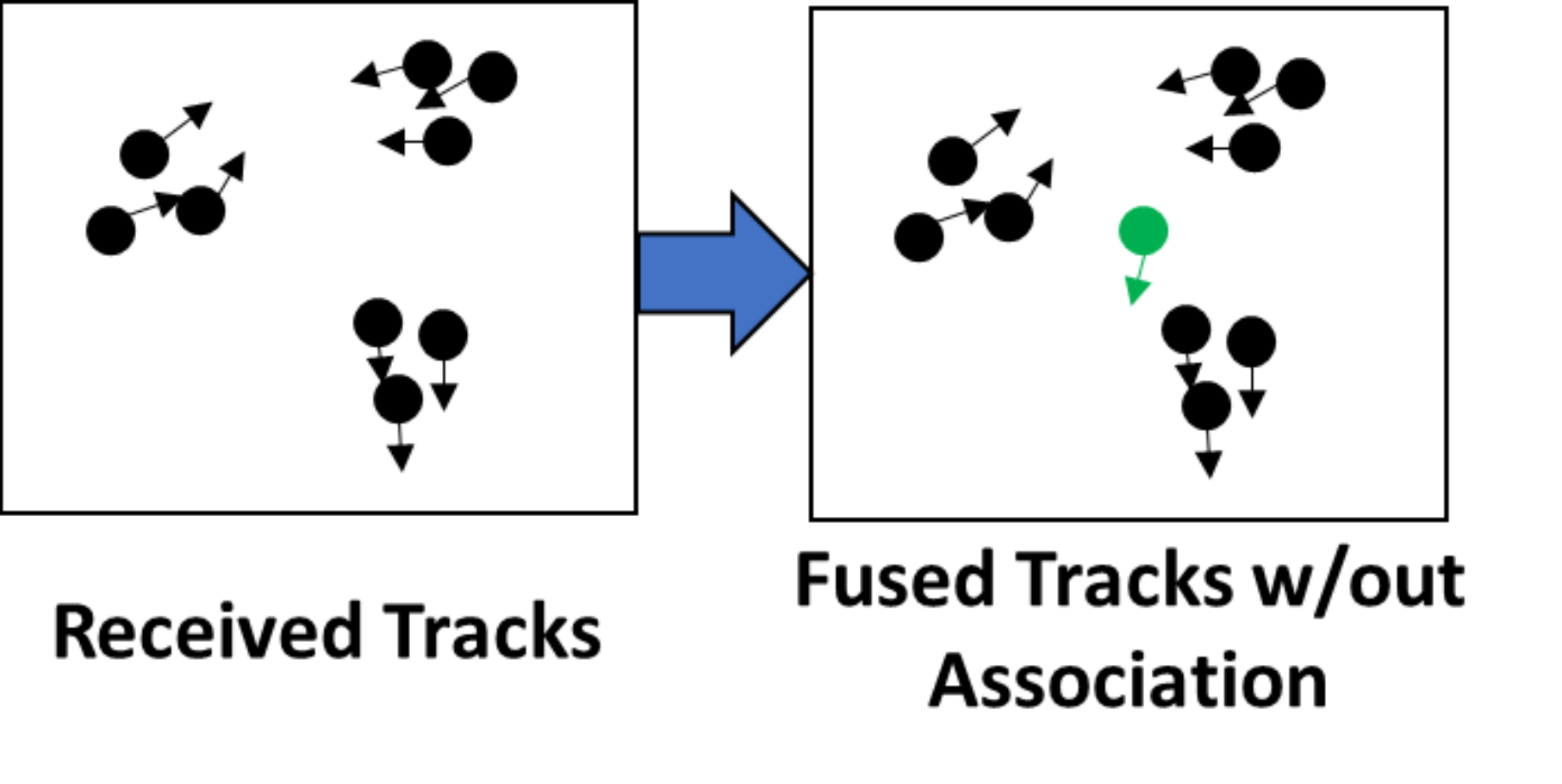}
        \caption{Without T2TA}
        \label{fig:t2tf}
    \end{subfigure} \vspace{-6pt}
    \caption{Example of the fused state estimates (a) with T2TA and (b) without T2TA. The balls and their arrows represent the position and velocity state variables, respectively. The black balls and their velocities are the different estimates received by the node at time $k$, while the green ones are the fused estimates. The association step is critical for nodes to identify the different targets located in their coverage area.} \label{fig:t2taf} \vspace{-9pt}
\end{figure*}

\vspace{6pt}
\subsubsection{Track-to-Track Association and Fusion} Next, the trustworthy information is associated to ensure that it is related to the same target to further improve fusion. In this work, the \textit{Track-to-Track Association Method} (T2TA) \cite{B1981} is used for this purpose. In this method, node $s_i$ associates the trustworthy information into $C$ different groups which correspond to the $C$ different targets that could be present within the node $s_i$'s neighborhood; thus forming the information ensembles:
\begin{equation}
\mathbf{\hat{I}}_T^{s_i,c}(k) \subseteq \mathbf{\hat{I}}_T^{s_i}(k), \ c = 1,...C.
\end{equation}
Subsequently, for each $c$, the state information in $\mathbf{\hat{I}}_T^{s_i,c}(k)$ is fused using the \textit{Track-to-Track Fusion} (T2TF) algorithm \cite{BWT2011}, to form a single state $\hat{\mathbf{x}}^{s_i,c}(k|k)$ and covariance $\hat{\mathbf{\Sigma}}^{s_i,c}(k|k)$ estimate. Fig.~\ref{fig:t2taf} shows an example of the advantage of association on the fused estimates.

\vspace{0pt}
\subsubsection{Target State Prediction} Once the fused estimates $\hat{\mathbf{x}}^{s_i,c}(k|k)$ and $\hat{\mathbf{\Sigma}}^{s_i,c}(k|k)$ are computed, node $s_i$ performs a one-step prediction using the Extended Kalman Filter to construct an estimate of the target states at time $k+1$ \cite{BDH2009}, as follows:
\begin{align}
&\mathbf{\hat{\mathbf{x}}}^{s_i,c}(k+1|k) =  \mathbf{f}(\hat{\mathbf{x}}^{s_i,c}(k|k),k) \nonumber \\
&\mathbf{\hat{\mathbf{\Sigma}}}^{s_i,c}(k+1|k)  =  \mathbf{F}(k)\mathbf{\hat{\mathbf{\Sigma}}}^{s_i,c}(k|k)\mathbf{F}(k)' + \mathbf{Q}   \label{predstate}
\end{align}
where $\mathbf{F}(k)$ is the Jacobian of the state transition matrix $\mathbf{f}(\cdot)$ evaluated at $\hat{\mathbf{x}}^{s_i,c}(k|k)$ and $\mathbf{Q}$ is the covariance matrix of the process noise $\boldsymbol{\upsilon}(k)$. The predicted state estimates are not communicated by the nodes; however, due to the fusion step, the predictions are the same for all neighbors.

Note: For simplicity, we drop the superscript $c$ in the remaining paper for all variables computed for each $c$. We will describe the content therein as necessary.

\vspace{0pt}
\subsection{STEP 2: Distributed Adaptive Node Selection (DANS)} \label{sec:ass}
After obtaining the target state prediction, the second step of DNC is distributed adaptive node selection for target tracking. Here, a node $s_i \in \mathcal{N}_{RC}$ determines if it belongs to the set of optimal nodes to track the target during the next time step. For this purpose, the predicted state of each target from Eq.~(\ref{predstate}) is used for selection of the optimal node set, $S^*(k+1)$, where $|S^*(k+1)|=N_{sel}$, with $N_{sel}>1$ to ensure robustness and to improve state estimate via distributed fusion and geometric diversity. Along with the optimal node selection, the sensing ranges of the selected nodes are optimized for maximizing coverage and minimizing energy consumption, to output $\mathcal{R}^*(k+1)=\left\{R_{HPS}^{s_j}(k+1)\in\{R_{1},...R_{L}\}, \forall s_j \in \mathcal{S}^*(k+1)\right\}$. As stated earlier, the base sensing range ($R_1$) is enough in high node density areas, while the extended sensing ranges ($>R_1$) are needed for resilience, i.e., to ensure target coverage in coverage gaps or low node density areas. Fig.~\ref{fig:dans} shows the flowchart of the DANS algorithm, whose details are in Sections~\ref{sec:ICN}-\ref{sec:CFD} below.

\vspace{6pt}
\subsubsection{Identification of Candidate Nodes}\label{sec:ICN}
To begin the process of DANS, node $s_i$ first uses the target's predicted state and covariance estimates, $\hat{\mathbf{x}}^{s_i}(k+1|k)$ and $\hat{\mathbf{\Sigma}}^{s_i}(k+1|k)$ from Eq.~(\ref{predstate}), to identify the set of candidate nodes that can completely cover the uncertainty region around the target's predicted position.

\begin{figure}[t]
\centering
        \includegraphics[width=\textwidth]{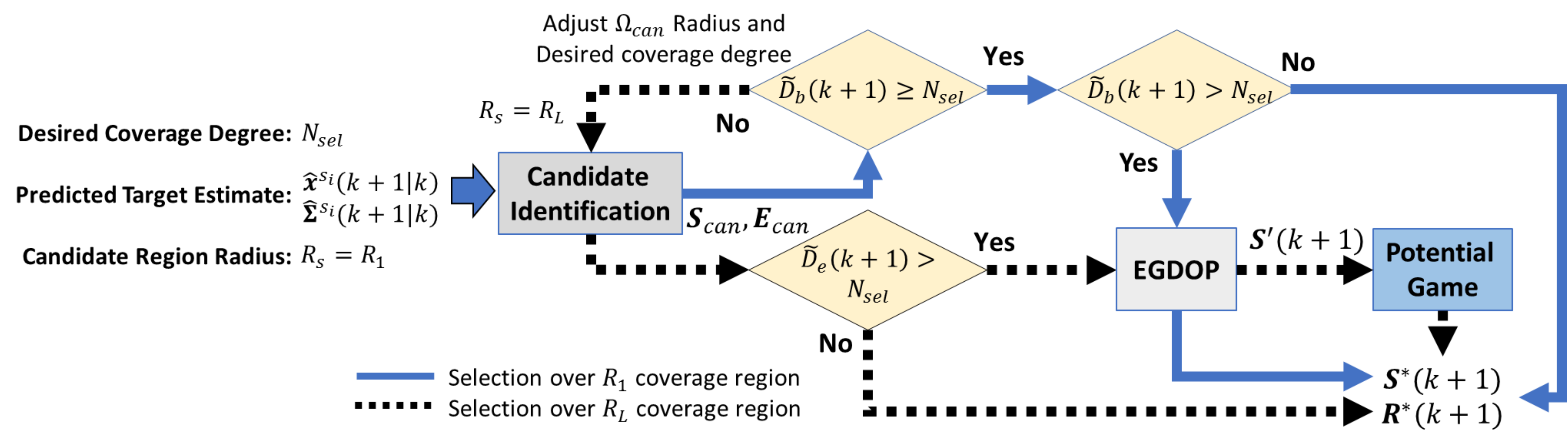}
        \caption{Flowchart of the DANS algorithm.} \label{fig:dans}
	\vspace{-12pt}
\end{figure}

Consider a sensing range parameter $R_{s}\in \{R_1,R_L\}$; by default $R_{s}=R_1$. Let $\Omega^{R_s}_{can}(k+1)\subset \Omega$, be the region such that any node lying within $\Omega^{R_s}_{can}(k+1)$ can cover the $6\sigma$ uncertainty region around the target's predicted position. Then, $\Omega^{R_s}_{can}(k+1)$ forms an elliptical region as follows
\begin{eqnarray}\label{canreg}
\left(\frac{x -\hat{x}^{s_i}(k+1|k)}{R_{s}-3\sigma_x}\right)^2+\left(\frac{y -\hat{y}^{s_i,}(k+1|k)}{R_{s}-3\sigma_y}\right)^2\le 1
\end{eqnarray}
where $\{\hat{x}^{s_i}(k+1|k),\hat{y}^{s_i}(k+1|k)\}$ is the predicted position estimate of the target; and $\sigma_x$ and $\sigma_y$ are the corresponding standard deviations of the uncertainty estimate. Note that $\Omega^{R_s}_{can}(k+1)$ lies inside a circle with center at $\hat{x}^{s_i}(k+1|k)$ and $\hat{y}^{s_i}(k+1|k)$ and radius $R_{s}$. The set of candidate nodes capable of tracking the target is defined as:
\begin{eqnarray}
\mathcal{S}^{R_s}_{can}(k+1) = \big\{s_j \in \mathcal{N}_{RC}: \mathbf{u}^{s_j}\in \Omega^{R_s}_{can}(k+1) \big\}
\end{eqnarray}
where the nodes that do not belong to $\mathcal{S}^{R_s}_{can}(k+1)$ are considered ineligible.

Next, if $s_i \in \mathcal{S}^{R_s}_{can}(k+1)$, then it broadcasts it's energy remaining, $E_{rem}^{s_i}(k)=1-\frac{E^{s_i}(k)}{E_0}$, to indicate that it is available for tracking, where $E_0$ is the node's initial energy and $E^{s_i}(k)$ is the total energy consumed, as defined in Section~\ref{sec:pf}. Similarly, $s_i$ receives the energy information from the other nodes in $\mathcal{S}^{R_s}_{can}(k+1)$ and forms the set of remaining energies of the candidate nodes
\begin{eqnarray}
E^{R_s}_{can}(k)=\big\{E_{rem}^{s_j}(k), \ \forall s_j \in \mathcal{S}^{R_s}_{can}(k+1)\big\}
\end{eqnarray}
which will be used for optimal node selection later. Note that the nodes in the sleep state do not transmit their energies; thus only the nodes in the LPS or HPS state are considered as candidates.

\vspace{6pt}
\subsubsection{Coverage Degree Identification} First, $s_i$ finds $\mathcal{S}^{R_1}_{can}(k+1)$. Then it determines the base coverage degree at time $k+1$ considering the uncertainty in the target's predicted position. This is defined as $\widetilde{D}_{b}(k+1)=|\mathcal{S}^{R_1}_{can}(k+1)|$. Following the flowchart in Fig.~\ref{fig:dans}, two situations can arise:

\begin{itemize}
\item \textit{Base coverage degree is sufficient} (i.e., $\widetilde{D}_b(k+1)\ge N_{sel}$): In this case, node $s_i$ can select a set of optimal nodes $\mathcal{S}^*(k+1)\subseteq \mathcal{S}^{R_1}_{can}(k+1)$ to track the target during the next time step, s.t. $|\mathcal{S}^*(k+1)|=N_{sel}$. Since $\Omega^{R_1}_{can}(k+1)$ lies within a circle of radius $R_1$, the optimal sensing ranges of sensors in $\mathcal{S}^*(k+1)$ can be simply chosen as $\mathcal{R}^*(k+1)=\{R^{s_j}_{HPS}(k+1)=R_1:\forall s_j\in\mathcal{S}^*(k+1)\}$.
Specifically, if $\widetilde{D}_b(k+1)=N_{sel}$, then $\mathcal{S}^*(k+1)=\mathcal{S}^{R_1}_{can}(k+1)$. On the other hand, if $\widetilde{D}_b(k+1)>N_{sel}$, then $\mathcal{S}^*(k+1)\subset \mathcal{S}^{R_1}_{can}(k+1)$ is obtained using the Energy-based Geometric Dilution of Precision (EGDOP), described in Section~\ref{sec:egdop}.
\vspace{3pt}
\item \textit{Base coverage degree is insufficient}  (i.e., $\widetilde{D}_b(k+1)<N_{sel}$): This implies that the target is located either in a low node density region (i.e., $0<\widetilde{D}_b(k+1)<N_{sel}$) or in a coverage gap (i.e., $\widetilde{D}_b(k+1)=0$). This scenario represents real world situations where the sensor deployment is biased (e.g. due to physical obstacles or air deployment). Additional, this can occur when a group of spatially co-located nodes fail (e.g. an attack on a particular sector of the network). In either case, in order to find sufficient nodes for tracking, node $s_i$ expands the candidate region to $\Omega^{R_L}_{can}(k+1)$ by setting the sensing range parameter $R_s = R_L$ in Eq.~(\ref{canreg}). This results in a larger candidate set $\mathcal{S}^{R_L}_{can}(k+1)$ that includes nodes that can detect the target with optimal sensing ranges chosen from the set $\{R_1,...R_L\}$. The extended coverage degree is then defined as $\widetilde{D}_e(k+1)=|\mathcal{S}^{R_L}_{can}(k+1)|$.

\begin{itemize}
\item if $\widetilde{D}_e(k+1)\leq N_{sel}$, then even after expansion to $\Omega^{R_L}_{can}(k+1)$, less than or equal to $N_{sel}$ nodes have been found. Thus, the optimal node set is obtained as  $\mathcal{S}^*(k+1)=\mathcal{S}^{R_L}_{can}(k+1)$.
\item If $\widetilde{D}_e(k+1)> N_{sel}$, then several new nodes have been added to the candidate pool. Thus, the following two steps are conducted: i) Filter a set of healthy nodes with high energies and that are geometrically diverse using the EGDOP measure (details are in Section~\ref{sec:egdop}), and (ii) select the optimal node set $\mathcal{S}^*(k+1)\subset \mathcal{S}^{R_L}_{can}(k+1)$ and their optimal range set $\mathcal{R}^*(k+1)$ using network potential games (details are in Section~\ref{sec:pg}).
\end{itemize}
\end{itemize}

\begin{figure*}
    \centering
    \includegraphics[width=0.45\textwidth]{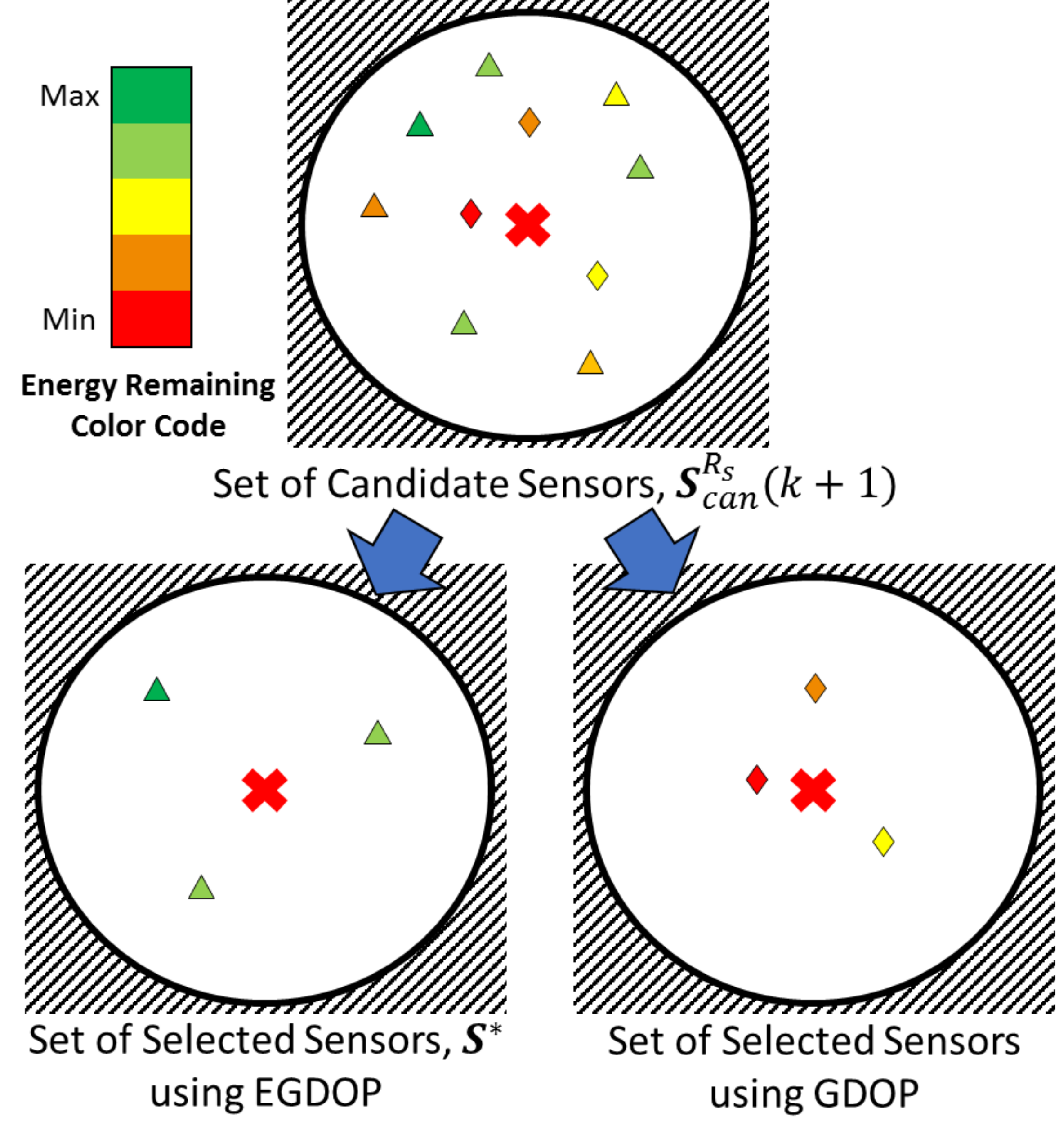} \vspace{-6pt}
    \caption{Example of sensors selected using Energy-based Geometric Dilution of Precision (EGDOP) and Geometric Dilution of Precision (GDOP). As seen, EGDOP selects geometrically diverse sensors with high energy, while GDOP selects nodes that are geometrically diverse and close to the target with low energy.}
    \label{fig:EGDOP}\vspace{-12pt}
\end{figure*}

\vspace{6pt}
\subsubsection{Energy-based Geometric Dilution of Precision (EGDOP)}\label{sec:egdop}
Typically, it is observed that the nodes with the largest energy remaining may not achieve the minimum mean squared estimation error due to their relative locations. In contrast, the nodes selected to minimize the mean squared estimation error may not maximize the energy remaining. Therefore, to jointly optimize these two criteria, this paper proposes a measure, called EGDOP, whose objective is to compute the optimal set of nodes that maximizes the energy remaining while minimizing the mean squared error of the target estimate. An example of this process is shown in Fig.~\ref{fig:EGDOP}. The nodes selected by EGDOP are geometrically distributed around the target's predicted position with high remaining energies. Thus, these nodes are reliable and produce accurate fused estimates. Formally, EGDOP is the {\emph{Geometric Dilution of Precision} (GDOP)} \cite{K2006} measure weighted by the remaining energy. This is computed as

\begin{eqnarray}
\mu(\widetilde{\mathcal{S}}) = \frac{det (\mathbf{J(\widetilde{\mathcal{S}})})}{trace(\mathbf{J(\widetilde{\mathcal{S}})})},
\end{eqnarray} \vspace{-6pt}
\begin{eqnarray}
\mathbf{J}(\widetilde{\mathcal{S}}) = \sum_{s_j\in \widetilde{\mathcal{S}}} \frac{E_R^{s_j}(k)}{\sigma_{\phi,n}^2 r_{s_j,n}^2} \left[ \begin{array}{cc} \sin^2(\phi_{s_j}) & -\sin(\phi_{s_j})\cos(\phi_{s_j}) \\ -\sin(\phi_{s_j})\cos(\phi_{s_j}) & \cos^2(\phi_{s_j}) \end{array} \right], \nonumber
\end{eqnarray}
where $\phi_{s_j}$ is the azimuth angle between sensor $s_j$ and the target's predicted position; $r^2_{s_j,n}=\left(\frac{x -\hat{x}^{s_i,c}(k+1|k)}{R_{s}-3\sigma_x}\right)^2+\left(\frac{y -\hat{y}^{s_i,c}(k+1|k)}{R_{s}-3\sigma_y}\right)^2$ is the normalized range of sensor $s_j$ to the target's predicted position; $\sigma_{\phi,n}=\frac{\sigma_{\phi}}{2\pi}$ is the  normalized measurement angle standard deviation; and $\widetilde{\mathcal{S}}\subseteq \mathcal{S}^{R_s}_{can}(k+1)$.

As described in the previous subsection, node $s_i$ runs the EGDOP algorithm under two conditions:
\begin{itemize}
\item [i)]  $\widetilde{D}_b(k+1)>N_{sel}$:

In this case, $\mathcal{S}^*(k+1)\subset \mathcal{S}^{R_1}_{can}(k+1)$. Then, the sets $\mathcal{S}^*(k+1)$  and $\mathcal{R}^*(k+1)$ are computed as
\begin{eqnarray}
\mathcal{S}^*(k+1) & = & \underset{\widetilde{\mathcal{S}}\subseteq \mathcal{S}^{R_1}_{can}(k+1)}{\arg\max} (\mu(\widetilde{\mathcal{S}})), \ \text{s.t.} \ |\widetilde{\mathcal{S}}|=N_{sel} \nonumber \\
\mathcal{R}^*(k+1) & = & \{R^{s_j}_{HPS}(k+1)=R_1:\forall s_j\in\mathcal{S}^*(k+1)\}
\end{eqnarray}
\item [ii)] $\widetilde{D}_b(k+1) < N_{sel}$ $\&$ $\widetilde{D}_e(k+1)>N_{sel}$:

In this case, $\mathcal{S}^*(k+1)\subset \mathcal{S}^{R_L}_{can}(k+1)$. However, in this case some nodes will lie at ranges greater than $R_1$, thus the node selection process should  optimize for the  HPS  sensing ranges of nodes to maximize coverage under uncertainty, as well as their energy remaining and geometric diversity. Since the EGDOP cost function does not account for range selection for maximizing target coverage, it alone cannot be used to identify $\mathcal{S}^*(k+1)$ and $\mathcal{R}^*(k+1)$. Also, the new candidate set of sensors $\mathcal{S}^{R_L}_{can}(k+1)$ could be very large, which can make the joint range selection computationally expensive to be performed in real time. Therefore, it is necessary to filter the candidate set $\mathcal{S}^{R_L}_{can}(k+1)$ to reduce complexity. Due to the above reasons, a two step node selection process is followed:
\begin{itemize}
\item  First, node $s_i$ uses the EGDOP cost function to identify a candidate set, $\mathcal{S}'(k+1)\subseteq \mathcal{S}^{R_L}_{can}(k+1)$, consisting of good (i.e., energetic and diverse) nodes, as follows
\begin{eqnarray}
\mathcal{S}'(k+1) = \underset{\widetilde{\mathcal{S}}\subseteq \mathcal{S}^{R_L}_{can}(k+1)}{\arg\max} (\mu(\widetilde{\mathcal{S}})); \ \text{s.t.} \ |\widetilde{\mathcal{S}}|=N'_{sel}>N_{sel}.
\end{eqnarray}
\item Subsequently, if $s_i\in \mathcal{S}'(k+1)$, then it utilizes a game-theoretic framework consisting of potential games (Section~\ref{sec:pg}), to jointly optimize for the sensing ranges of the candidate set. Whereas, if $s_i\notin\mathcal{S}'(k+1)$, sensor selection is complete and node $s_i$ computes its state transition probabilities described in Sections~\ref{sec:lps_stp} and \ref{sec:hps_stp}.
\end{itemize}
\end{itemize}

To validate the performance of the EGDOP metric, we computed the energy remaining and predicted covariance error of the target achieved using the EGDOP metric and compared them against the ones achieved by the classical GDOP and selection based on maximum energy remaining. For continuity of reading, these results are presented in Appendix~\ref{app:EGDOP}.

\vspace{6pt}
\subsubsection{Potential Games for Optimal Range Selection} \label{sec:pg}
After obtaining the candidate set $\mathcal{S}'(k+1)$ by filtering $\mathcal{S}^{R_L}_{can}(k+1)$ using EGDOP, the nodes in $\mathcal{S}'(k+1)$ must collaborate to jointly optimize their sensing ranges to a) maximize target coverage considering uncertainty in it's predicted state, and b) minimize total energy consumption in the extended sensing range. For this purpose, this paper develops a game-theoretic approach as described below.

A game $G$ in strategic form \cite{M13} is formulated  to consist of the following:
\begin{itemize}
\renewcommand{\labelitemi}{\textendash}
\item A finite set of players, $\mathcal{S}'(k+1)$.
\item A non-empty set of actions $\mathcal{A}_i$ associated to each player $s_i \in \mathcal{S}'(k+1)$. In this paper, each action $a_i \in \mathcal{A}_i$ indicates a different sensing range. Specifically, the action set $\mathcal{A}_i = \{0, R_{1},...R_L\}$, where action $0$ implies that the node is not selected to track the target during the next time step and will transition to either the  LPS  or $Sleep$ state. The action set is assumed to be identical for all players, i.e., $\mathcal{A}_i = \mathcal{A}_j$, $\forall s_i, s_j \in \mathcal{S}'(k+1)$.

\item The \textit{utility function} associated with each player $s_i$, defined as $\mathcal{U}_i: \mathcal{A}_{\mathcal{S}'(k+1)} \rightarrow \mathbb{R}$, where $\mathcal{A}_{\mathcal{S}'(k+1)} = \mathcal{A}_1 \times \ldots \times \mathcal{A}_{|\mathcal{S}'(k+1)|}$, denotes the set of joint actions for all players. The utility function computes the payoff that a node $s_i \in \mathcal{S}'(k+1)$ can expect by taking an action $a_i \in \mathcal{A}_i$, given that the rest of the players jointly select $a_{-i} \in \mathcal{A}_{-i}$, where $\mathcal{A}_{-i} := \mathcal{A}_1 \times \ldots \times \mathcal{A}_{i-1} \times \mathcal{A}_{i+1} \times \ldots \times \mathcal{A}_{|\mathcal{S}'(k+1)|}$. In this paper, the utility function is designed to jointly maximize target coverage and minimize the total predicted energy consumption.
\end{itemize}

A joint action of all players $a \in \mathcal{A}_{\mathcal{S}'(k+1)}$ is often written as $a = (a_i, a_{-i})$.

\begin{defn}[\textbf{Nash Equlibrium}]\label{defn:ne}
A joint action $a = (a_i^\star, a_{-i}^\star) \in \mathcal{A}_{\mathcal{S}'}$ is called a pure Nash Equilibrium if
\begin{align*}\label{eq:ne}
\mathcal{U}_i(a_i^\star, a_{-i}^\star) = \underset{a_i \in \mathcal{A}_i}{\max} \ \mathcal{U}_i(a_i, a_{-i}^\star), \  \forall s_i \in \mathcal{S}'(k+1)
\end{align*}
\end{defn}

Specifically, in this paper, the game-theoretic framework is built using Potential games~\cite{MAS2009}.

\begin{defn}[\textbf{Potential Game}]\label{defn:potentialgames}
A game $G$ in strategic form with action sets $\{\mathcal{A}_i\}_{i=1}^{|\mathcal{S}'(k+1)|}$ together with utility functions $\{\mathcal{U}_i\}_{i=1}^{|\mathcal{S}'(k+1)|}$ is a potential game if and only if, a potential function $\Phi: \mathcal{A}_{\mathcal{S}'(k+1)} \rightarrow \mathbb{R}$ exists, s.t. $\forall$ $s_i \in \mathcal{S}'(k+1)$
\begin{equation*}\label{eq:potentialgames}
\mathcal{U}_i(a_i', a_{-i}) - \mathcal{U}_i(a_i'', a_{-i}) = \Phi(a_i', a_{-i}) - \Phi(a_i'', a_{-i})
\end{equation*}
$\forall$ $a_i', a_i'' \in \mathcal{A}_i$ and $\forall$ $a_{-i} \in \mathcal{A}_{-i}$.
\end{defn}

A potential game requires the perfect alignment between the utility of an individual player and a globally shared objective function, called the \textit{potential function} $\Phi$, for all players. That is, the change in $\mathcal{U}_i$ by unilaterally deviating the action of player $s_i$ is equal to the amount of change in the potential function $\Phi$. In this regard, as the players negotiate towards maximizing their individual utilities, the global objective is also optimized.

The use of potential games has these advantages:
(i) at least one pure Nash Equilibrium is guaranteed to exist, which represents the optimal set of sensing ranges; (ii) there exist learning algorithms that can asymptotically converge to the optimal equilibrium with a fast convergence rate (e.g., the Max-Logit algorithm ~\cite{DHY2015}) to allow for real-time implementation; and (iii) the utility of each player is perfectly aligned with a global objective function, this implies that when the players negotiate to maximize their own utilities, the potential function is simultaneously maximized upon reaching the optimal equilibrium.

\vspace{6pt}
$\bullet$ \emph{\textbf{Leader Identification}}: Before the game is started, a node in $\mathcal{S}'(k+1)$ is identified as a group leader to compute the optimal sensing ranges for the whole group $\mathcal{S}'(k+1)$. This enables reduction of the communication overhead and energy consumption. The criteria for leader selection is the maximum available energy. Thus, the leader is selected as
\begin{eqnarray}
s_{Lead} = \argmax{s_j\in\mathcal{S}'(k+1)}\left(E_{rem}^{s_j}(k)\right).
\end{eqnarray}
If node $s_i=s_{Lead}$, then it continues to the next step, while if $s_i \ne s_{Lead}$, then it waits until $s_{Lead}$ computes the optimal ranges for $\mathcal{S}'(k+1)$ and transmits the result.

\vspace{6pt}

$\bullet$ \emph{\textbf{Partitioning of the Uncertainty Zone Around the Target's Predicted Position}}:
If $s_i=s_{Lead}$, then it partitions the uncertainty zone consisting of the $6\sigma$ confidence region around the target's predicted position at time $k+1$. Let $\Omega_{u}\subset \Omega$ be the rectangular area that contains the $6\sigma$ uncertainty zone of the target's predicted position, as shown in Fig. \ref{fig:tcpr}. Then,

\begin{eqnarray}
\Omega_{u} =  \left\{(x,y)\in \Omega ; -3\sigma_x \le ||x-\hat{x}^{s_i}(k+1|k)|| \le 3\sigma_x, -3\sigma_y \le ||y-\hat{y}^{s_i}(k+1|k)|| \le 3\sigma_y \right\}.
\end{eqnarray}

\begin{figure*}[t!]
	\centering
	\includegraphics[width=\textwidth]{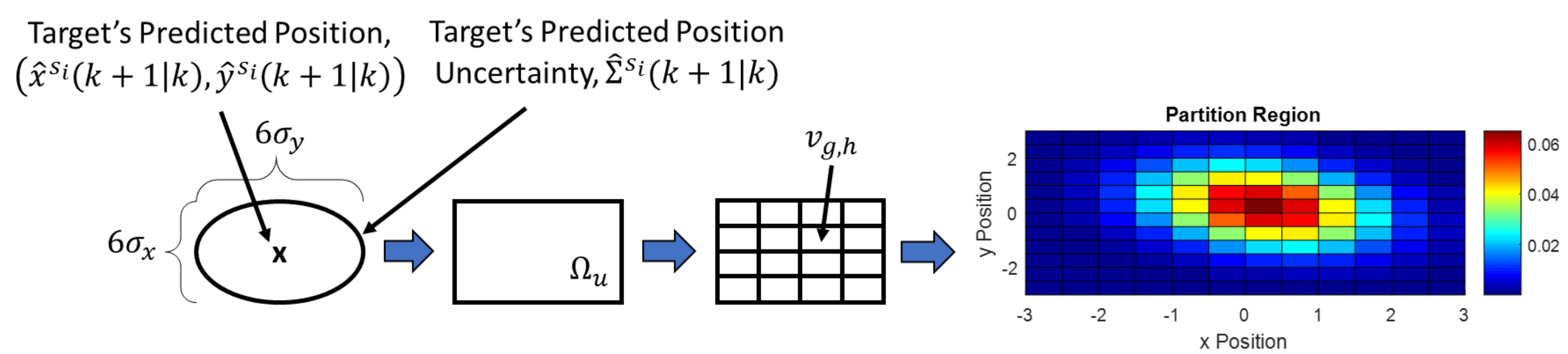}
	\caption{$6\sigma$ uncertainty region around the target's predicted position $\rightarrow$ rectangular region that covers the $6\sigma$ uncertainty region  $\rightarrow$ partition of the rectangular region $\rightarrow$ worth distribution over the partition.} \label{fig:tcpr}
	\vspace{-8pt}
\end{figure*}

\vspace{3pt}
Next, $\Omega_{u}$ is partitioned into $U\times V$ cells to form a grid, where each cell is denoted as $v_{g,h}$, $g=1,...,U$; $h=1,...,V$. Next, each cell $v_{g,h}$ is assigned a worth $\omega_{g,h}$ which represents the probability that the target is found in $v_{g,h}$, at time $k+1$. This is computed using the multivariate normal probability density function as follows:
\begin{eqnarray} \label{eq:weight}
\omega_{g,h}=\frac{1}{\Delta} \iint_{v_{g,h}} \mathcal{N}\left( \big[\hat{x}^{s_i}(k+1|k), \hat{y}^{s_i}(k+1|k)\big],\hat{\mathbf{\Sigma}}^{s_i}_z(k+1|k)\right)dxdy,
\end{eqnarray}
where $\Delta$ is a normalization constant s.t. $\sum_{g=1}^{U}\sum_{h=1}^{V} \omega_{g,h} =1$ and $\hat{\Sigma}^{s_i}_z(k+1|k) = \mathbf{H}(k)\hat{\Sigma}^{s_i}(k+1|k)\mathbf{H}(k)'$ is the target's predicted position uncertainty. In practice,~\eqref{eq:weight} is computed by numerically estimating the multivariate Gaussian cumulative density function \cite{G2004}.

\vspace{6pt}
$\bullet$ \emph{\textbf{Construction of the Potential Function}}: As stated earlier, the potential function must jointly maximize the overall coverage of the uncertainty zone around the target's predicted position, and minimize the predicted energy consumption. Thus, the potential function is designed as
\begin{eqnarray} \label{eq:potential_function}
\Phi(a)= \sum_{g=1}^{U} \sum_{h=1}^{V} \omega_{g,h} B_{g,h}\big(J_{g,h}(a)\big) - \frac{1}{N'_{sel} E_{c}(R_L)}\sum_{s_j\in \mathcal{S}'(k+1)} E_{c}(a_j),
\end{eqnarray}
where $J_{g,h}(a)$ is the number of nodes that can cover cell $v_{g,h}$ given the joint action $a$ of players; $B_{g,h}\big(J_{g,h}(a)\big)$ is the coverage function that depends on $J_{g,h}(a)$; $N'_{sel}=|\mathcal{S}'(k+1)|$; and $E_{c}(a_j)$ is the predicted energy consumption of sensor $s_j\in \mathcal{S}'(k+1)$ at time $k+1$, which is defined as
\begin{eqnarray} \label{eq:energy_game}
E_{c}(a_j)= \left\{ \begin{array}{ll}
e^{s_j}_{HPS}(a_j)\Delta T &  \text{if } a_j \ne 0 \\
e_{LPS}\Delta T & \text{if } a_j = 0 \end{array} \right.
\end{eqnarray}
Note that the potential function does not consider energy remaining because the players have been already selected with high energy remaining using EGDOP. Thus, the objective now is to select the sensing ranges of players to ensure coverage, while minimizing predicted energy consumption.

\vspace{6pt}
$\bullet$ \emph{\textbf{Details of Coverage Function Design:}} The coverage function $B_{g,h}\big(J_{g,h}(a)\big)$, $g=1,...U$, $h=1,...V$, is designed as a piece-wise linear function such that
\begin{eqnarray}\label{covfunction}
B_{g,h}\big(J_{g,h}(a)\big)=\left\{
\begin{array}{ll}
\Delta b_1J_{g,h}(a) & \text{if } J_{g,h}(a) \le N_{sel} \\
 \Delta b_1N_{sel} - \Delta b_2 (J_{g,h}(a)-N_{sel}) & \text{if } J_{g,h}(a) > N_{sel}
\end{array} \right.
\end{eqnarray}
where $\Delta b_1$ and $\Delta b_2$ are chosen to ensure that the game's equilibrium solution achieves an overall target coverage degree of $D(k+1)=N_{sel}$. In particular, $\Delta b_1$ is designed to incentivize the node to take action $a_i$ if $J_{g,h}(a)<=N_{sel}$ by increasing the potential function~\eqref{eq:potential_function}; while $\Delta b_2$ is chosen to ensure that the potential function decreases when $>N_{sel}$ nodes are covering the cells. An example of the coverage function $B_{g,h}\big(J_{g,h}(a)\big)$ is shown in Fig.~\ref{fig:cov_fun}, where for simplicity we chose a symmetric shape about $J_{g,h}(a) = N_{sel}$. Below, we present a theorem that allows the network designer to choose the slopes $\Delta b_1$ and $\Delta b_2$ to meet their specifications.

\begin{figure} [t]
\centering
\begin{tikzpicture}[
    scale = 0.75, declare function={
    func(\x)= (\x <= 3) * (0.5*\x)  + (\x>3)*(3-0.5*\x)
    ;
  }
]
\begin{axis}[
  axis x line=middle, axis y line=middle,
  x label style={at={(axis description cs:0.5,-0.1)},anchor=north},
  y label style={at={(axis description cs:-0.1,.5)},rotate=90,anchor=south},
  ymin=0, ymax=2.5, ytick={0,0.5,...,2.5}, ylabel=$B_{g,h}(J_{g,h})$,
  xmin=0, xmax=7, xtick={0,...,7}, xlabel=$J_{g,h}$,
  domain=0:7,samples=100
]

\addplot [blue,thick] {func(x)};
\addplot[domain=0:3] [red,thick,dashed] {(5/(5*3*60*0.0115))*x};
\addplot +[mark=none]  [black,thick,dotted] coordinates {(3, 0) (3, 2.5)};
\addplot[domain=3:6] [red,thick,dashed] {(5/(5*3*60*0.0115))*3 };

\addlegendentry{$\Delta b_1 = 0.5$ and $\Delta b_2 = \Delta b_1$}
\addlegendentry{$\Delta b_1$ and $\Delta b_2$ defined in (\ref{eq:b_bounds})}
\addlegendentry{$J_{g,h} = N_{sel}$}
\end{axis}
\end{tikzpicture}
\caption{Example of a Coverage function (\ref{covfunction}) with $\Delta b_1=0.5$ and $\Delta b_2 = \Delta b_1$ that satisfies the conditions (\ref{eq:b_bounds}) of Theorem \ref{thm:main} for $\delta = 0.035$, $\Delta R = 5m$, $N'_{sel}=5$, $N_{sel}=3$ and $R_L=60m$.} \label{fig:cov_fun}\vspace{-12pt}
\end{figure}
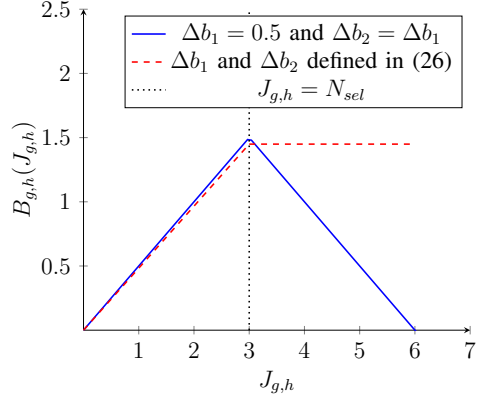

\begin{assum} \label{assum:uncertain_size}
The uncertainty in the target's predicted position is small enough, i.e., $R_L\ge \max(6\sigma_x,6\sigma_y)$, and there are sufficient available nodes, i.e., $|\mathcal{S}^{R_L}_{can}(k+1)|>N_{sel}$, such that there exists an action set $a^*$ that allows for at least $N_{sel}$ nodes to cover the entire uncertainty region.
\end{assum}

\vspace{6pt}
\begin{thm} \label{thm:main}
Given that Assumption~\ref{assum:uncertain_size} holds, the Nash equilibrium $a^*=(a_i^*,a_{-i}^*)$ achieves a coverage degree of $D(k+1) = N_{sel}$ with probability $Pr(D(k+1)=N_{sel}|a^*)\ge 1-\delta$, \ $0< \delta < 1$, if the slopes $\Delta b_1$ and $\Delta b_2$ of the coverage function $B_{g,h}\big(J_{g,h}(a)\big)$ in Eq. (\ref{covfunction}) satisfy the following
\begin{eqnarray} \label{eq:b_bounds}
\Delta b_1 &>& \frac{\Delta R}{N'_{sel} R_L \delta}, \nonumber \\
\Delta b_2 &>& 0,
\end{eqnarray}
where $\Delta R$ is the increment between any two consecutive sensing ranges.
\end{thm}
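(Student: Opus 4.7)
I plan to use the potential-game property that at the NE $a^\star$, $\Phi(a^\star) \ge \Phi(a_i', a^\star_{-i})$ for every player $s_i$ and every alternative $a_i' \in \mathcal{A}_i$. First, I would partition the uncertainty cells by their coverage degree under $a^\star$ into $U_< = \{(g,h): J_{g,h}(a^\star)<N_{sel}\}$, $U_= = \{(g,h): J_{g,h}(a^\star)=N_{sel}\}$, and $U_> = \{(g,h): J_{g,h}(a^\star)>N_{sel}\}$. By Eq.~(\ref{eq:weight}), the worth $\omega_{g,h}$ is exactly the probability the target lies in cell $v_{g,h}$, so $\Pr(D(k+1)=N_{sel}\mid a^\star) = \sum_{(g,h)\in U_=}\omega_{g,h}$, and it suffices to show $\sum_{(g,h)\in U_<}\omega_{g,h} + \sum_{(g,h)\in U_>}\omega_{g,h} \le \delta$.

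Next, I would compute $\Phi(a_i^\star \pm \Delta R, a^\star_{-i}) - \Phi(a^\star)$ explicitly. By Eq.~(\ref{eq:elin}) and $E_c(R_L) = w R_L \Delta T$, the energy term changes by exactly $\mp\Delta R/(N'_{sel} R_L)$; the coverage term changes only on the thin annulus $A_i^\pm$ swept by the range change, picking up slope $+\Delta b_1$ on cells of $U_<$ and slope $-\Delta b_2$ on cells of $U_= \cup U_>$ per unit of worth, directly from Eq.~(\ref{covfunction}). Applied to a range increment, the NE condition yields $\Delta b_1 \sum_{(g,h)\in A_i^+\cap U_<}\omega_{g,h} - \Delta b_2 \sum_{(g,h)\in A_i^+\cap (U_=\cup U_>)}\omega_{g,h} \le \Delta R/(N'_{sel} R_L)$, with a symmetric inequality for decrements. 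The decrement inequality, combined with $\Delta b_2 > 0$, forces $\sum_{(g,h)\in U_>}\omega_{g,h} = 0$: any over-covered cell of positive worth would lie on the outer annulus of some active player, whose $\Delta R$-decrement would both save $\Delta R/(N'_{sel} R_L)$ in energy and gain $\Delta b_2$ per unit of over-covered worth, contradicting NE.

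The remaining task is to bound $\sum_{(g,h)\in U_<}\omega_{g,h}$. I would invoke Assumption~\ref{assum:uncertain_size} to obtain a reference profile $a^\dagger$ with $J_{g,h}(a^\dagger)\ge N_{sel}$ on every cell, and trace a monotone $\Delta R$-step path of unilateral range increments from $a^\star$ to $a^\dagger$. Along this path each step contributes at most $\Delta R/(N'_{sel} R_L \Delta b_1)$ of newly covered worth from $U_<$ (after discarding the non-negative $\Delta b_2$ correction), and a careful choice of the path---guaranteed to exist by Assumption~\ref{assum:uncertain_size}---ensures every cell of $U_<$ is ``paid for'' by exactly one step. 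Summing the elementary NE inequalities along the path gives $\Delta b_1 \sum_{(g,h)\in U_<}\omega_{g,h} \le \Delta R/(N'_{sel} R_L)$, and the hypothesis $\Delta b_1 > \Delta R/(N'_{sel} R_L \delta)$ yields $\sum_{(g,h)\in U_<}\omega_{g,h} < \delta$, completing the proof.

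The hard part is the path-aggregation argument in the last step: NE inequalities only control one player's one-step annulus at a time, whereas $\sum_{(g,h)\in U_<}\omega_{g,h}$ is a global quantity, so I must carefully choose the $a^\star\to a^\dagger$ path so that each undercovered cell is paid for exactly once, and verify that intermediate configurations do not introduce spurious over-coverage whose $\Delta b_2$ penalty would absorb the energy slack. Assumption~\ref{assum:uncertain_size} is essential here because it guarantees that such a path exists within the players' reachable sensing envelopes, and the $\Delta b_2>0$ hypothesis plays a supporting role by keeping intermediate configurations clean during the accounting.
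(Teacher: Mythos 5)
Your strategy---decompose the worth $\omega_{g,h}$ by the coverage degree at equilibrium into under-, exactly-, and over-covered sets and control each piece by unilateral-deviation inequalities---is genuinely different from the paper's, which instead works with the expected coverage degree $\mathcal{E}(D(k+1)\mid a)=\sum_j j\sum_{g,h}\omega_{g,h}\mathcal{I}(J_{g,h}(a)=j)$, proves a switching lemma showing a player changes action only if this global quantity rises by more than $\tfrac{\Delta R}{\Delta b_1 N'_{sel}R_L}$, sandwiches $N_{sel}-\tfrac{\Delta R}{\Delta b_1 N'_{sel}R_L}<\mathcal{E}(D\mid a^*)\le N_{sel}$, and converts this to the probability bound via $\mathcal{E}(D\mid a^*)\le (N_{sel}-1)+Pr(D=N_{sel}\mid a^*)$. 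Unfortunately your route has gaps that I do not see how to close.

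First, the increment inequality points the wrong way for your purposes: at equilibrium it reads $\Delta b_1\sum_{A_i^+\cap U_<}\omega_{g,h}\le \tfrac{\Delta R}{N'_{sel}R_L}+\Delta b_2\sum_{A_i^+\cap(U_=\cup U_>)}\omega_{g,h}$, so the $\Delta b_2$ term sits on the right with a positive sign and cannot be ``discarded'' to conclude that a step covers at most $\tfrac{\Delta R}{\Delta b_1 N'_{sel}R_L}$ of new worth; since the theorem allows $\Delta b_2$ arbitrarily large, a big over-coverage penalty on part of the annulus can legitimately deter expansion even when substantial under-covered worth is reachable. Second, the decrement inequality does not force $\sum_{U_>}\omega_{g,h}=0$: it only yields $\Delta b_1\sum_{A_i^-\cap(U_=\cup U_<)}\omega_{g,h}\ge \tfrac{\Delta R}{N'_{sel}R_L}+\Delta b_2\sum_{A_i^-\cap U_>}\omega_{g,h}$, i.e.\ over-covered worth may persist whenever every node covering it would, by shrinking, also uncover enough exactly-covered worth on the same annulus. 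Third, and most seriously, the path aggregation is not licensed by the equilibrium condition: after the first step your path sits at a non-equilibrium profile, and subsequent steps are deviations by \emph{other} players from that profile (a joint deviation overall), about which the Nash inequalities at $a^\star$ say nothing; and even granting a per-step bound, an $m$-step path would only give $\Delta b_1\sum_{U_<}\omega_{g,h}\le m\,\tfrac{\Delta R}{N'_{sel}R_L}$, weaker than what you assert by the factor $m$. The fix is essentially the paper's device: replace the cell-by-cell, annulus-by-annulus accounting with a single scalar functional (the expected coverage degree) that aggregates all cells at once, so that one application of the switching lemma at the equilibrium, followed by a Markov-type inequality, delivers the bound $Pr(D(k+1)=N_{sel}\mid a^*)>1-\tfrac{\Delta R}{\Delta b_1 N'_{sel}R_L}$, from which $\Delta b_1>\tfrac{\Delta R}{\delta N'_{sel}R_L}$ follows, with $\Delta b_2>0$ handled separately by checking that any deviation pushing a cell above $N_{sel}$ strictly lowers the deviator's utility.
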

\begin{proof}
Please see Appendix~\ref{app:proof}.
\end{proof}

Note that Assumption~\ref{assum:uncertain_size} is only necessary to make a theoretical claim on the probability that the coverage degree will be equal to $N_{sel}$ when sufficient nodes are available. When Assumption~\ref{assum:uncertain_size} is violated, the algorithm will still work and use the available nodes for tracking the target.

To ensure that the game is a potential game, the utility function is designed based on the concept of \emph{Marginal Contribution} \cite{MW13}. Marginal contribution has each player compute their utility based on the amount of worth that the agent contributes to the group by selecting an action as opposed to selecting the null action. In this work, the null action is the sensing range $0$. Thus, the utility function is designed as follows,
\begin{eqnarray}\label{eq:utility}
\mathcal{U}(a_{i},a_{-i}) &= &\Phi(a_{i},a_{-i})-\Phi(\emptyset,a_{-i}) \nonumber \\
&= & \sum_{g=1}^{U} \sum_{h=1}^{V} \omega_{g,h} \left(B_{g,h}(J_{g,h}(a_{i},a_{-i})) - B_{g,h}(J_{g,h}(\emptyset,a_{-i})\right)) - \frac{E_c(a_i)-E_c(\emptyset)}{N'_{sel} E_c(R_L)}
\end{eqnarray}
where $\emptyset$ represents player $s_i$'s null action.

\begin{thm}\label{theorem:potentialgames}
The game $G$ with potential function $\Phi$ of Eq. (\ref{eq:potential_function}) and the utility function $\mathcal{U}_i$ of Eq. (\ref{eq:utility}) is a potential game.
\end{thm}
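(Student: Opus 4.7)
The plan is to verify the potential game condition directly by substituting the marginal-contribution definition of $\mathcal{U}_i$ and observing that the terms depending only on $a_{-i}$ cancel. Specifically, fix any player $s_i \in \mathcal{S}'(k+1)$, any pair of actions $a_i', a_i'' \in \mathcal{A}_i$, and any joint action $a_{-i} \in \mathcal{A}_{-i}$ of the remaining players. I must show
\begin{equation*}
\mathcal{U}_i(a_i', a_{-i}) - \mathcal{U}_i(a_i'', a_{-i}) = \Phi(a_i', a_{-i}) - \Phi(a_i'', a_{-i}).
\end{equation*}

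First, I would apply the utility definition in Eq.~(\ref{eq:utility}) to both $\mathcal{U}_i(a_i', a_{-i})$ and $\mathcal{U}_i(a_i'', a_{-i})$, writing each as the difference between the potential evaluated at the actual joint action and the potential evaluated when player $s_i$ takes its null action $\emptyset$ while the others still play $a_{-i}$. Subtracting these two expressions, the two copies of $\Phi(\emptyset, a_{-i})$ are identical (they depend only on $a_{-i}$, not on $a_i'$ or $a_i''$), so they cancel, leaving precisely $\Phi(a_i', a_{-i}) - \Phi(a_i'', a_{-i})$, which matches Defn.~\ref{defn:potentialgames}. This is the entire argument; no further computation is required because the cancellation is exactly the content of the marginal-contribution design principle cited in the construction of $\mathcal{U}_i$.

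The only subtlety worth checking is that $\Phi(\emptyset, a_{-i})$ is well-defined, i.e., that the potential in Eq.~(\ref{eq:potential_function}) makes sense when $a_i = \emptyset$ (encoded as the sensing range $0$ in the action set $\mathcal{A}_i = \{0, R_1, \ldots, R_L\}$). This is immediate: taking $a_i = 0$ simply means that $s_i$ does not contribute to any cell-coverage count $J_{g,h}$, and its predicted energy consumption in Eq.~(\ref{eq:energy_game}) becomes $e_{LPS}\Delta T$. Both terms of $\Phi$ remain well-defined, so the cancellation step is legitimate. Thus the main obstacle is essentially bookkeeping rather than any deep argument: the result follows because marginal-contribution utilities are, by construction, one of the canonical families of utility functions that align individual incentives with a global potential.
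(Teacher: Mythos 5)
Your argument is correct and is essentially identical to the paper's own proof: both add and subtract $\Phi(\emptyset, a_{-i})$, identify each resulting difference as a marginal-contribution utility via Eq.~(\ref{eq:utility}), and conclude that Defn.~\ref{defn:potentialgames} is satisfied. Your extra remark that $\Phi(\emptyset,a_{-i})$ is well-defined under the null action is a harmless bit of bookkeeping the paper leaves implicit.
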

\begin{proof}
Given a joint action $a_{-i}$, the difference in $\Phi$ for sensor node $s_i \in \mathcal{S}'(k+1)$ to deviate its action from $a_i'$ to $a_i''$ is:
\begin{eqnarray}
\Phi(a_i', a_{-i}) - \Phi(a_i'', a_{-i}) &=& \left(\Phi(a_i', a_{-i}) - \Phi(\emptyset, a_{-i})\right) - \left(\Phi(a_i'', a_{-i}) - \Phi(\emptyset, a_{-i})\right) \nonumber \\
 &=&  \mathcal{U}_i(a_i', a_{-i}) - \mathcal{U}_i(a_i'', a_{-i})
\end{eqnarray}
Thus, game $G$ satisfies Defn. \ref{eq:potentialgames} and is a potential game.
\end{proof}

\subsubsection{Obtaining Game Equilibrium using Maxlogit Learning}\label{sec:CFD}
The leader $s_{Lead}$ identifies the optimal sensing ranges for all players in the game using the Maxlogit Learning algorithm~\cite{DHY2015}, which can converge fast to the optimal
equilibrium. The goal of the Maxlogit learning algorithm is to identify the Nash equilibrium of the potential function. Therefore, $s_{Lead}$ utilizes the utility function of Eq. (\ref{eq:utility}) in the Maxlogit learning algorithm to find the best joint action. The Maxlogit algorithm adopts a repeated learning framework where at each iteration $\kappa \in \mathbb{N}^+$, $s_{Lead}$ randomly selects one player $s_j \in \mathcal{S}'(k+1)$ and randomly selects a new action $\hat{a}_j(\kappa)$, while keeping the actions of remaining players, $a_{-j}(\kappa)$, the same. Then, $s_{Lead}$ computes the utility function $\mathcal{U}_j(\hat{a_j}(\kappa), a_{-j}(\kappa))$ and updates $s_j$'s action in a probabilistic manner~\cite{SWL11} as follows:
\[
 a_j(\kappa+1) = \begin{cases}
              \hat{a_j}(\kappa),& \text{with probability} \  \mu  \\
              a_j(\kappa),& \text{with probability} \  1 - \mu
            \end{cases}
\]
where $\mu = \frac{\psi(\hat{a_j})}{\max\{\psi(a_j), \psi(\hat{a_j})\}}$, $\psi(\hat{a_j}) = e^{\mathcal{U}_j(\hat{a_j}, a_{-j})/\tau}$, and $\tau > 0$.  The learning process stops when a predefined maximum number of learning steps are reached. Once the equilibrium $a^\star \in \mathcal{A}$ is reached, the joint action $\mathcal{R}^*(k+1)=a^\star$ is distributed to all the players.

To validate the performance of the potential games for optimal range selection, we compared the game efficiency against the optimal solution for various values of $N'_{sel}$. For continuity of reading, these results are presented in Appendix~\ref{app:Game}.

\begin{rem}
This paper assumes reliable communication as stated in Remark~\ref{rem:com}. However, if packets are dropped throughout the process, then the distributed sensor selection algorithm will still continue to operate, but the number of nodes selected to track the target using their HPS devices may vary from $N_{sel}$. If in the worst case scenario $>N_{sel}$ nodes are activated, then it will result in slightly more energy consumption. Furthermore, if target state estimates are dropped, then the fused states and target predictions may vary among the nodes and may result in an increased root mean squared error.  The effects of communication problems on the network will be studied in future work.
\end{rem}

\subsection{STEP 3: Distributed Computation of the Probability of Success of Target Detection (DOPS)}
\begin{wrapfigure}{r}{0.5\textwidth} \vspace{-15pt}
        \includegraphics[width=0.45\textwidth]{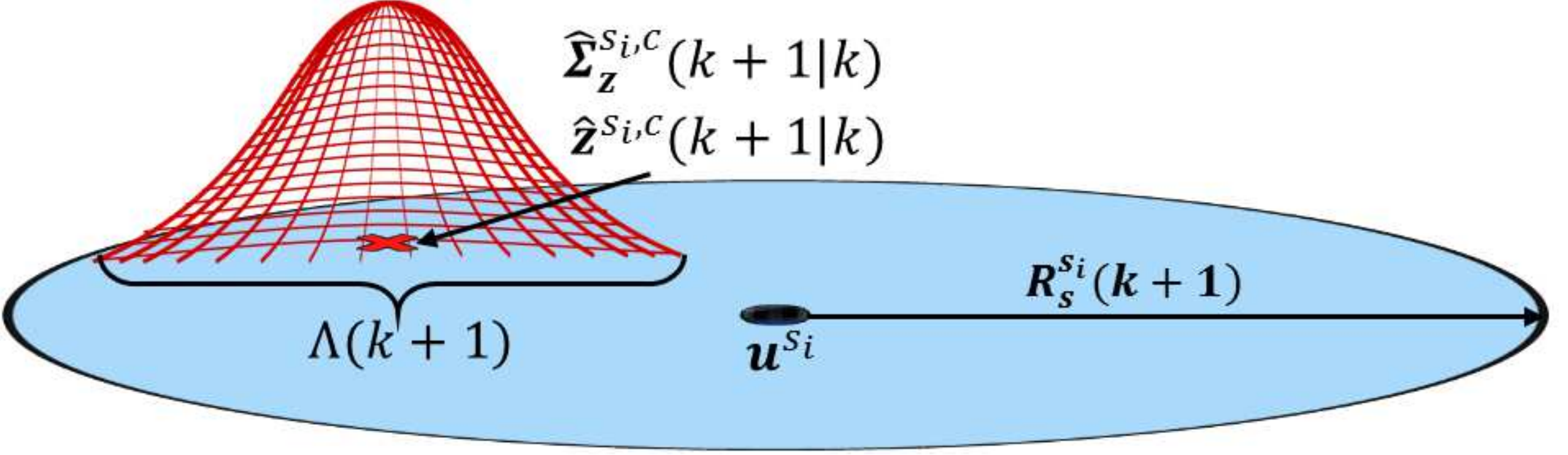}\vspace{-3pt}
        \caption{Computation of $\hat{P}_{HPS}^{s_i}(k+1)$.} \label{fig:Pd}
	\vspace{-15pt}
\end{wrapfigure}Finally, if $s_i \in \mathcal{S}^*(k+1)$ for any target track, then it should transition to the  HPS  state with a sensing range $R_{HPS}^{s_i}(k+1)\in \mathcal{R}^*(k+1)$ to track the target during the next time step. In order to make this transition, it computes its probability of success $\hat{P}_{HPS}^{s_i}(k+1)$ in detecting the target based on the target's predicted position, as shown in Fig.~\ref{fig:Pd}. Let
\vspace{-0pt}
\begin{eqnarray} \nonumber
\Lambda^{s_i,c}(k+1)=p_d\iint_G{N\Big(\hat{\mathbf{z}}^{s_i,c}(k+1|k),\mathbf{\hat{\mathbf{\Sigma}}}^{s_i,c}_{\mathbf{z}}(k+1|k)\Big)dxdy}
\end{eqnarray}
represent the scaled cumulative distribution function of the target $c$'s predicted position $\hat{\mathbf{z}}^{s_i,c}(k+1|k)$ over the coverage area of node $s_i$, where $p_d$ is the probability of detection of an HPS device and $G=\{(x,y):||(x,y)-\textbf{u}^{s_i}|| \le R_{HPS}^{s_i}(k+1)\}$. Then, the maximum probability of success of target detection over all tracks is given as
\begin{eqnarray} \label{eq:phps}
\hat{P}_{HPS}^{s_i}(k+1)  =  \max_c \left\{\Lambda^{s_i,c}(k+1)\right\},
\end{eqnarray}
which is used to transition to the  HPS  state as described in Section~\ref{sec:poser}.

\begin{rem} The optimal set $\mathcal{S}^*$ is chosen in a distributed manner and is unique if all nodes in $\mathcal{S}_{can}$ are connected. This is guaranteed when $R_c\geq 2R_L$. Note that $\mathcal{S}^*$ is computed for each track $c$.
\end{rem}

\begin{table}[b]
\centering
\caption{Simulation Parameters}
\label{tb:sim_params}\vspace{-6pt}
\begin{tabular}{ccccc}
\hline
$e_{clock}=0.01 W$  & $e_{LPS} = 115 mW$ &  $R_r = 15 m$  &   $\alpha = 0.95$  &  $\beta = 0.0036$ \\
$e_{HPS} = 0.2 \frac{W}{m}$ &  $e_{TX} = 1.26 W$ &  $R_{LPS} = 30 m$    &  $N_{sel}=3$ &  $N'_{sel} = 5$ \\
$e_{RX} = 0.63 W$ &  $e_{DPU} = 1 W$  &   $R_c = 120 m$   &  $\sigma_{\phi}= 0.25^{\circ}$ &  $\sigma_R = 0.075 m$ \\
$E_0 = 137592 J$ & $\Delta T = 0.5 s$ & $R_1= 30 m$   & $\sigma_{\boldsymbol{\upsilon}, x}=\sigma_{\boldsymbol{\upsilon},y} = 0.1 m$ & $\sigma_{\boldsymbol{\upsilon},\psi} = 0.1^{\circ}$  \\
$\mu_{cl}=0.025$ & $p_{fa} = 0.01$  &  $R_L= 60 m$   &  $\Delta R= 6 m$ &  $\chi=0.1$ \\
\hline
\end{tabular}
\end{table}

\section{Results and Discussion} \label{sec:results}

This section presents the results of the POSE.R algorithm in comparison with other methods to validate its effectiveness in providing resilient and efficient target tracking even in the presence of coverage gaps. First, we present the characteristics of the POSE.R network. For this purpose, the POSE.R algorithm was simulated in a $500m \times 500m$ deployment region generated in the Matlab environment. For validation, $500$ Monte Carlo runs were conducted, where the distribution of sensor nodes was regenerated in each run according to a uniform distribution (In Section~\ref{rescomp} we also intentionally created coverage gaps). This paper assumes that the nodes are deployed into an underwater environment where each heterogeneous sensor node has a hydrophone array \cite{lps_sense} as the LPS device and an active sonar \cite{hps_sense} as the HPS device. This work assumes that the amount of power applied to the active sonar device allows the node to adjust it's HPS sensing range. Table~\ref{tb:sim_params} lists the energy costs, sensing ranges, process noises ($\sigma_{\boldsymbol{\upsilon}, x}, \sigma_{\boldsymbol{\upsilon},y}, \sigma_{\boldsymbol{\upsilon},\psi}$), measurement noises ($\sigma_R, \sigma_{\phi}$), and sensor selection parameters.

\subsection{POSE.R Characteristics}
Fig.~\ref{fig:POSER_Char} presents the performance characteristics of the POSE.R algorithm in terms of: i) missed detection rates, ii) network lifetime, and iii) the number of active  HPS  nodes. The network density was varied as $\rho = [0.6,0.7,...,1.4]\times 10^{-3} (\frac{nodes}{m^2})$ and the probability of sleeping was varied as $p_{sleep}=[0, 0.25, 0.5, 0.75]$. For low network densities, coverage gaps could be created for fixed range sensing networks.

\vspace{10pt}
\subsubsection{Missed Detection Characteristics}  Figure~\ref{fig:POSER_PM} presents the probability of missed detection $P_m$ vs network density $\rho$ for various $p_{sleep}$ values. These characteristics indicate that the POSE.R algorithm achieves quite low missed detection rates using the DANS method even for less dense networks. This demonstrates resilience, i.e., the power of POSE.R algorithm in maintaining the detection capability for low density networks, which could result from sparse initial deployment or node failures.  Furthermore, it can be seen that as the value of $p_{sleep}$ increases, the missed detection probability increases as well. This is because as $p_{sleep}$ increases there is a higher probability that the nodes are sleeping around the target's position. Thus, there is a trade off between $p_{sleep}$ and $P_m$, especially for low densities.

\begin{figure*}[t!]
    \centering
    \begin{subfigure}[t]{0.32\textwidth}
        \centering
        \includegraphics[width=\textwidth]{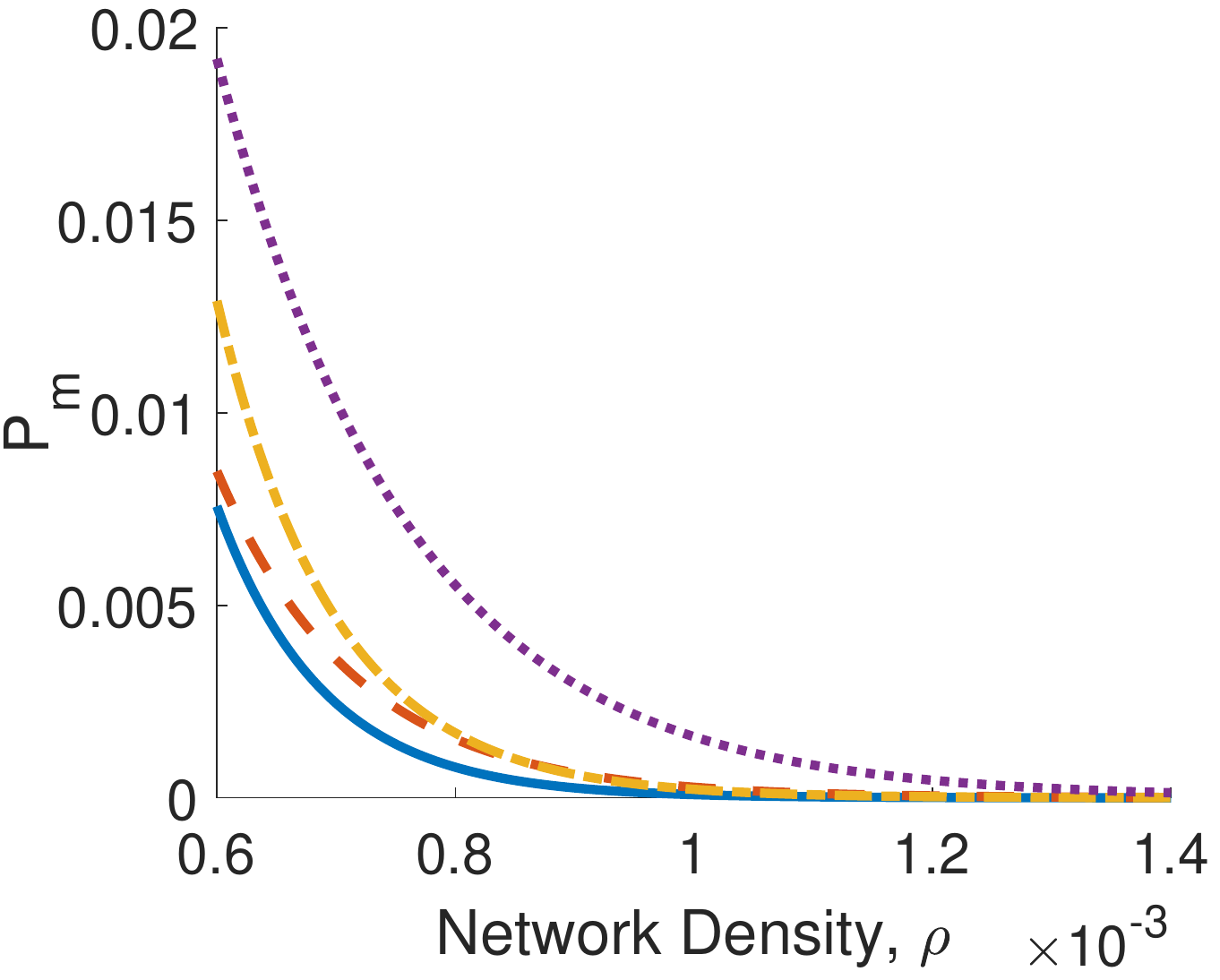}
        \caption{Missed Detection Rate}
        \label{fig:POSER_PM}
    \end{subfigure}
    ~
    \begin{subfigure}[t]{0.32\textwidth}
        \centering
        \includegraphics[width=\textwidth]{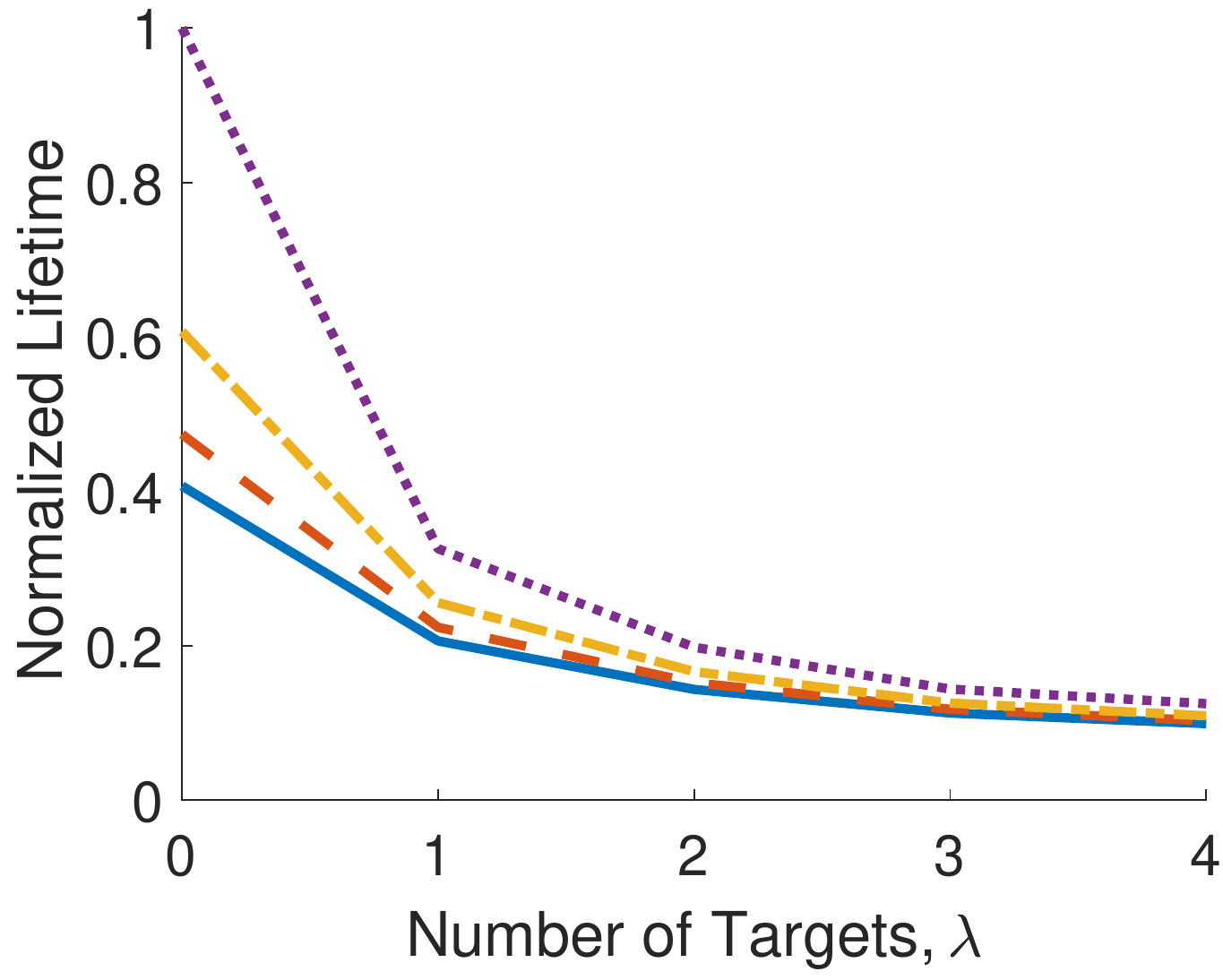}
        \caption{Normalized Network lifetime}
        \label{fig:POSER_Lifetime}
    \end{subfigure}%
    ~
    \begin{subfigure}[t]{0.32\textwidth}
        \centering
        \includegraphics[width=\textwidth]{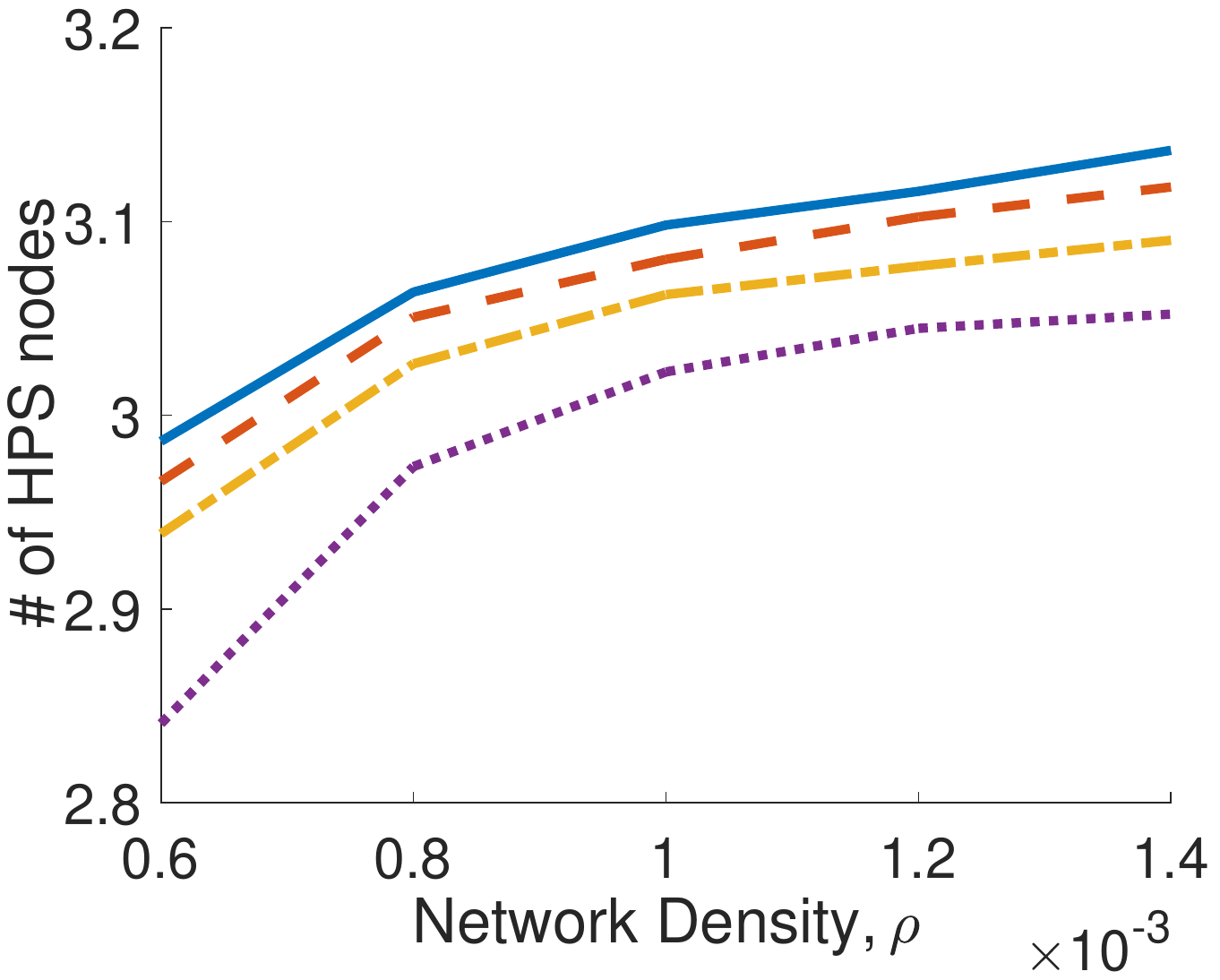}
        \caption{Average Number of  HPS  Nodes}
        \label{fig:POSER_NHPS}
    \end{subfigure}%

    \begin{subfigure}[t]{\textwidth}
    	\centering
        \includegraphics[width=.8\textwidth]{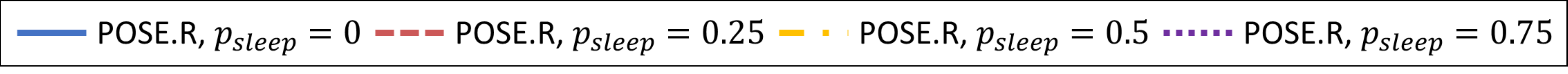}
    \end{subfigure}%

    \caption{POSE.R network characteristics.} \label{fig:POSER_Char} \vspace{-8pt}

\end{figure*}

\vspace{10pt}
\subsubsection{Network Lifetime Characteristics} \label{sec:poser_char_netlife} Fig.~\ref{fig:POSER_Lifetime} presents the network lifetime (Defn.~\ref{defn:nlife})  characteristics of the POSE.R network. The network lifetime is normalized with the lifetime of a network with no targets, i.e. $\lambda=0$, and for  $p_{sleep}$=0.75. For simulations, a tube of size $2R_{L} \times 600m$ was considered with targets traveling in a straight line through the center of the tube. The total lifetime of the network is computed when all of the nodes within $R_{LPS}$ of the targets' trajectories have zero remaining energy. The  number of targets $\lambda$, passing through the tube at a given time are varied between $\lambda=[0,4]$ and the network lifetime is computed for different values of $p_{sleep}$. The results indicate that as $\lambda$ increases the network lifetime decreases, because more nodes are needed to track more number of targets. Furthermore, it is seen that the effect of parameter $p_{sleep}$ is predominant for lower number of targets, that is higher $p_{sleep}$ results in higher network lifetime. However, as $\lambda$ increases, more number of nodes are triggered to track the target by the $DANS$ method, hence the effect of  $p_{sleep}$ diminishes.

\vspace{10pt}
\subsubsection{Number of Active  HPS  Nodes for Tracking a Target} Fig.~\ref{fig:POSER_NHPS} shows the average number of nodes activated in the  HPS  state to track a single target. The desired number was $N_{sel}=3$ during each time step. The results shows that for low density networks, i.e., $\rho < 0.8\times 10^{-3}$, the number of   HPS  nodes is slightly below $N_{sel}$. This is because for low density networks the number of available nodes within $R_L$ distance of the target could be less that $N_{sel}$. Furthermore, as the value of $p_{sleep}$ increases, the number of available nodes decreases; hence reducing the number of  HPS  nodes. For higher density networks, i.e., $\rho\ge 0.8\times 10^{-3}$, the number of  HPS  nodes is slightly larger than $N_{sel}$. This is due to the false alarm probability $p_{fa}$ causing nodes in the LPS state away from the target to transition to the  HPS  state. This effect is minimized for higher values of $p_{sleep}$.

\subsection{POSE.R vs. Existing Methods}
In this section we compare the performance of the POSE.R algorithm with existing scheduling methods. Specifically, POSE.R is compared against three distributed scheduling methods: (1) Autonomous Node Selection (ANS), (2) LPS-HPS Scheduling, and (3) Random Scheduling.
\begin{wrapfigure}{R}{0.45\textwidth}
        \includegraphics[width=.45\textwidth]{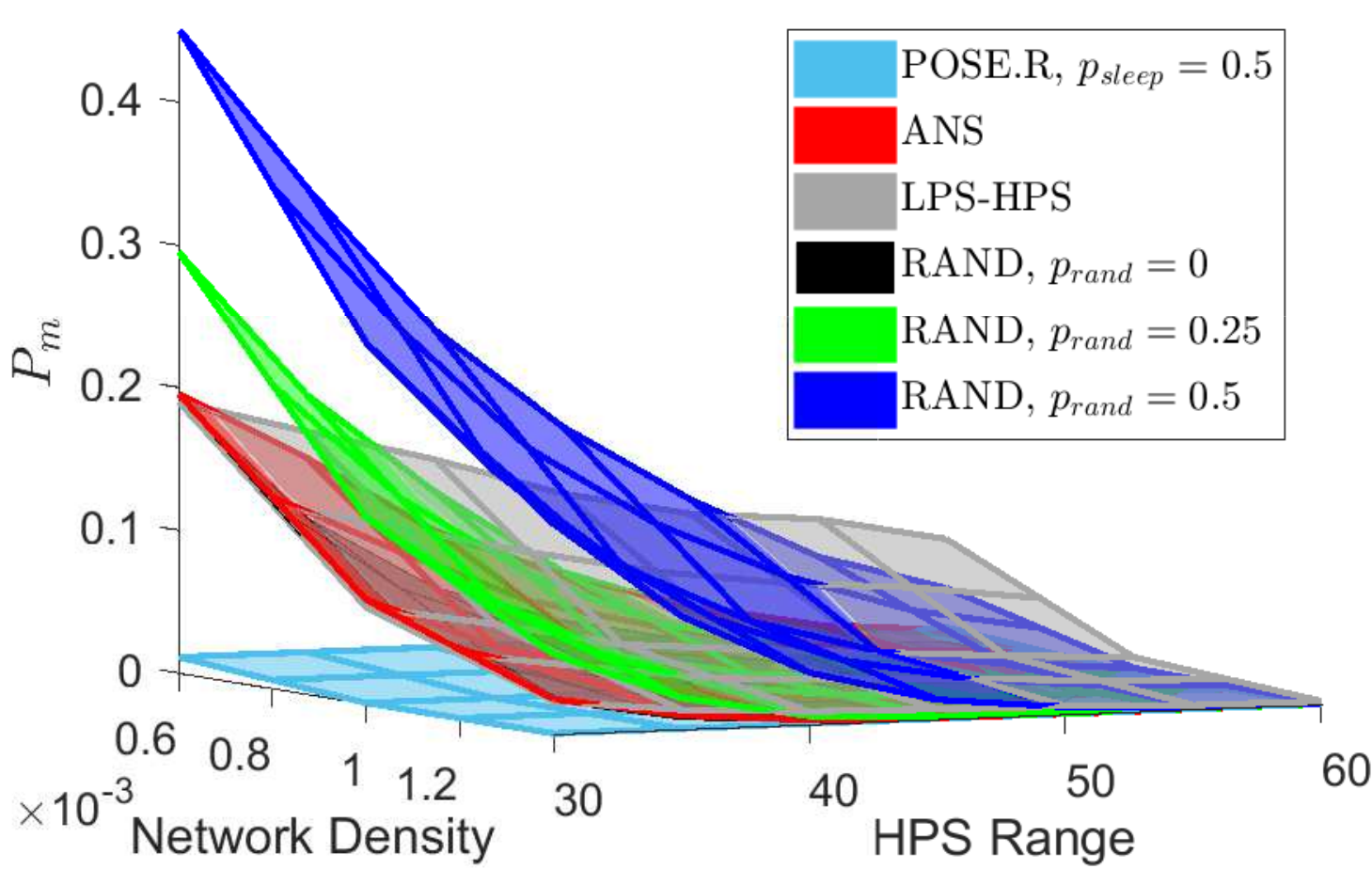}
        \caption{Probability of Missed Detection comparison with other methods.}
        \label{fig:3D_PM} \vspace{-0pt}
\end{wrapfigure}

The ANS method \cite{K2006} is a distributed node selection method that utilizes GDOP to select the optimal nodes to track the target. Here, the nodes collaborate in a distributed manner to make scheduling decisions. However, the ANS method considers passive sensors and does not include multi-modal sensor nodes. Therefore, to ensure an apple to apple comparison, the ANS method is adapted to include multi-modal operating conditions, where the selected nodes  track the target in active (HPS) state, while the others stay in the passive (LPS) state with their receivers on. As compared to ANS, the POSE.R algorithm incurs additional energy cost only when the target travels into a low density region. In this case, the nodes in a POSE.R network expand their sensing ranges to ensure coverage. Therefore, the energy costs of transmission and HPS increases. However, since POSE.R algorithm includes a Sleep state, the total energy cost is reduced overall.

The LPS-HPS Scheduling method is a distributed trigger-based activation method that utilizes two operating states, passive (LPS) and active (HPS). The nodes remain passive until a target is detected. Once a target is detected, the node remains active until the target passes out of the node's detection range.

 The Random Scheduling method is a distributed probabilistic method where the nodes randomly switch between sleeping and actively sensing (HPS). During each time step, a node sleeps with a probability $p_{rand}$ and senses with a probability $1-p_{rand}$. Thus, for $p_{rand}=1$ the network is always sleeping, while for $p_{rand}=0$ the network is always sensing. Note that the LPS-HPS and Random Scheduling methods do not facilitate node collaboration.

 As compared to the LPS-HPS and Random Scheduling methods, the only additional energy cost in POSE.R is the cost of exchanging messages. However, this is compensated by the significant energy savings of the POSE.R algorithm using the sleep state and through efficient node scheduling. The additional complexity in POSE.R arises in the state association/fusion, sensor selection, and target prediction steps. However, association/fusion along with sensor selection improve the accuracy of the state estimate. Furthermore, the sensor selection and target prediction steps minimize the number of sensors active around the target, which reduces the overall energy consumption.

Additionally, we compared the performance of the POSE.R algorithm with our prior work, POSE and POSE.3C. However, this section strictly focuses on the results comparing POSE.R with the above methods. The comparison of POSE, POSE.3C, and POSE.R are presented in Appendix~\ref{app:net_comp}.

\vspace{10pt}
\subsubsection{Missed Detection Comparison} Fig.~\ref{fig:3D_PM} shows the comparison of the missed detection characteristics of the POSE.R algorithm with the other distributed scheduling methods. While POSE.R assumes adaptive sensing range, each of the other scheduling methods were simulated with a fixed  HPS  sensing range chosen from $\{R_1,... R_L\}$. As seen in Fig.~\ref{fig:3D_PM}, the POSE.R algorithm achieves a significantly lower missed detection rate than the other methods for low network density, thus demonstrating resilience. The missed detection probability $P_m$ of the other distributed methods approach that of the POSE.R algorithm only for high network density and large  HPS  sensing ranges. Therefore, in order for the other methods to achieve similar characteristics as POSE.R, the network must contain a high density of sensor nodes that are utilizing a large  HPS  sensing range. In other words, the missed detection performance of the POSE.R network supersedes all other networks.

\begin{figure*}[t!]
    \centering
    \begin{subfigure}[t]{0.48\textwidth}
        \centering
        \includegraphics[width=\textwidth]{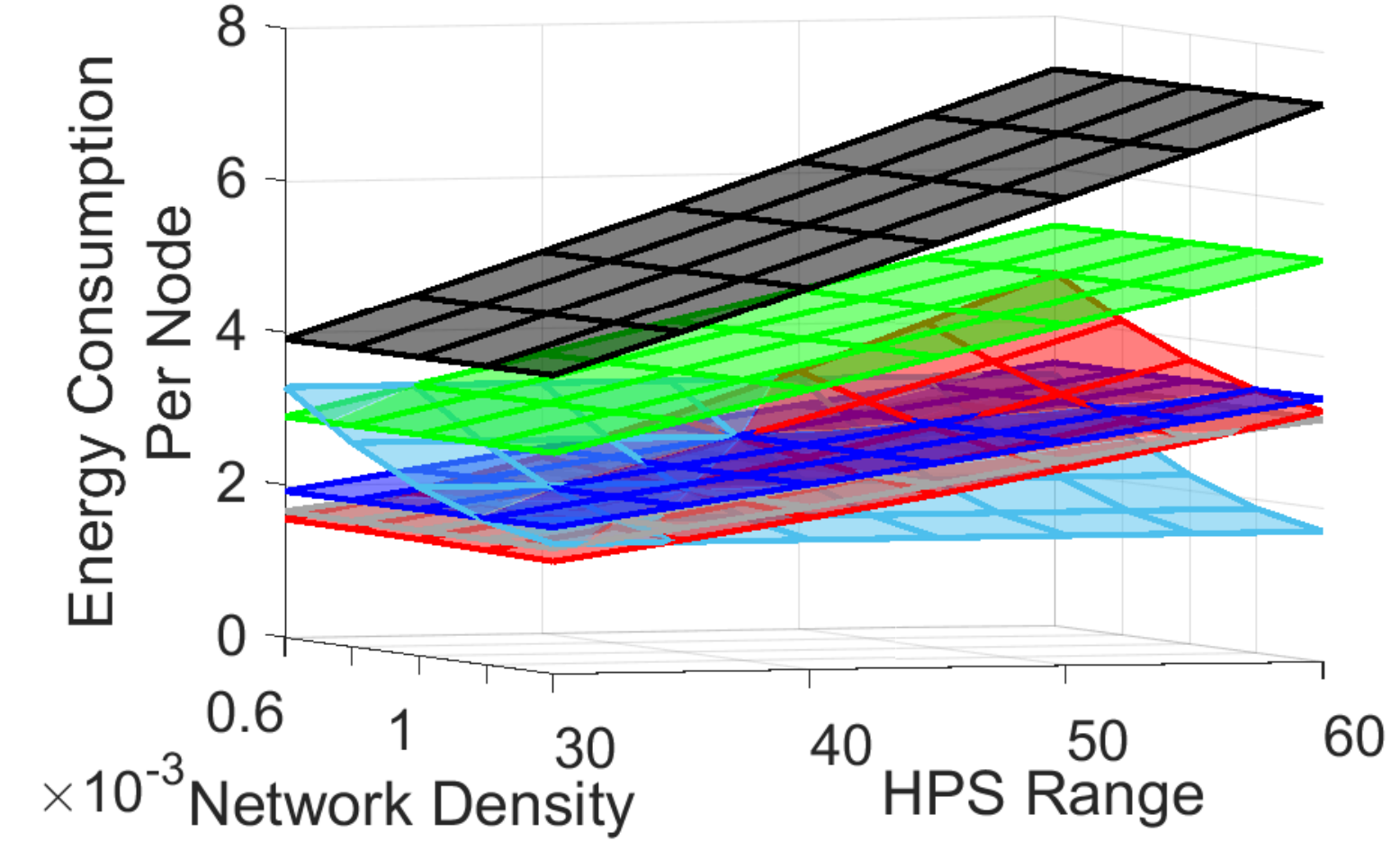}
        \caption{Average energy consumed per node located within $R_L$ of the target's position.}
        \label{fig:3D_AVG_E_IN}
    \end{subfigure} \hspace{6pt}%
    \begin{subfigure}[t]{0.48\textwidth}
        \centering
        \includegraphics[width=\textwidth]{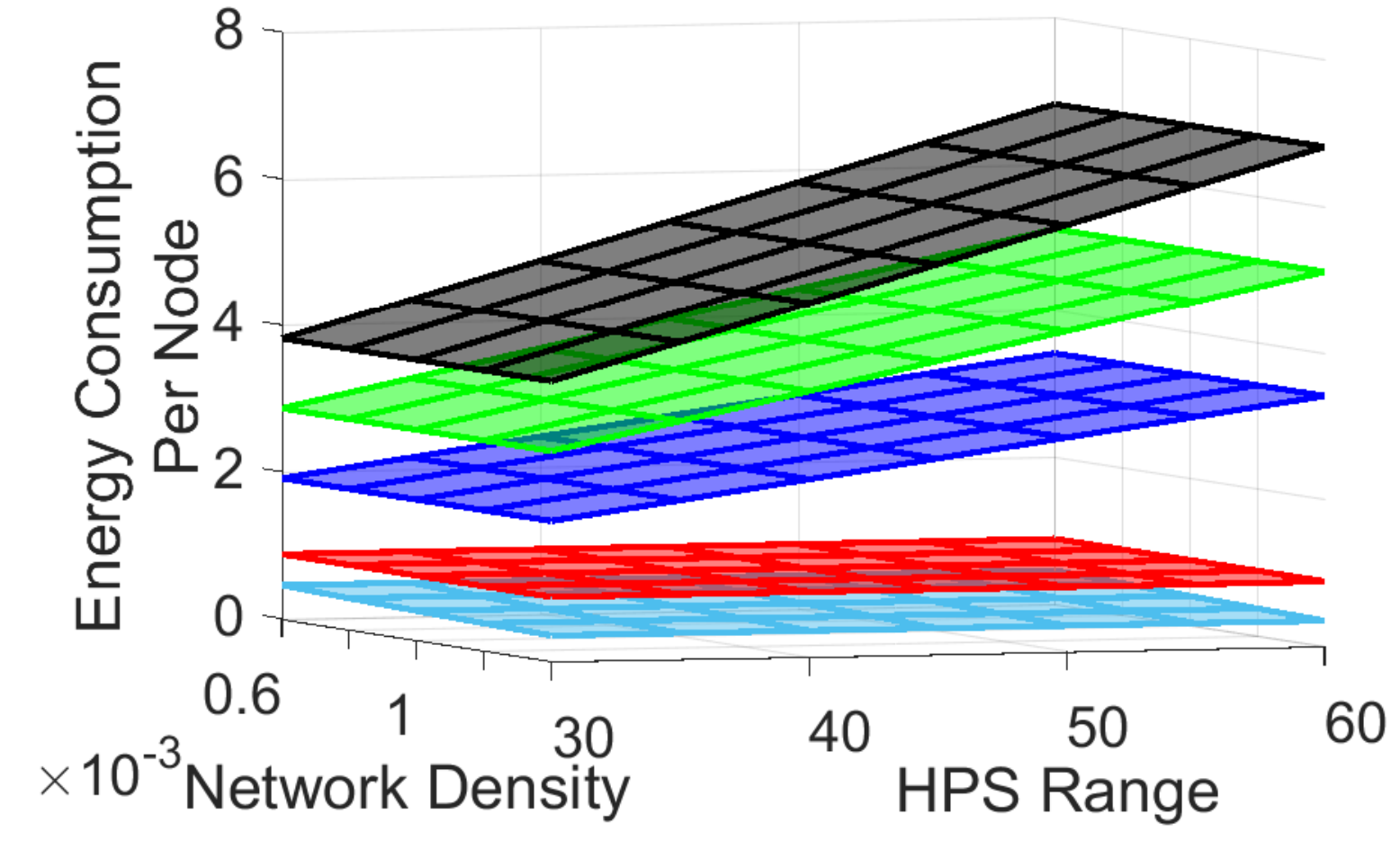}
        \caption{Average energy consumed per node located outside $R_L$ of the target's position.}
        \label{fig:3D_AVG_E_OUT}
    \end{subfigure} \vspace{6pt}

	\begin{subfigure}[t]{0.48\textwidth}
        \centering
        \includegraphics[width=\textwidth]{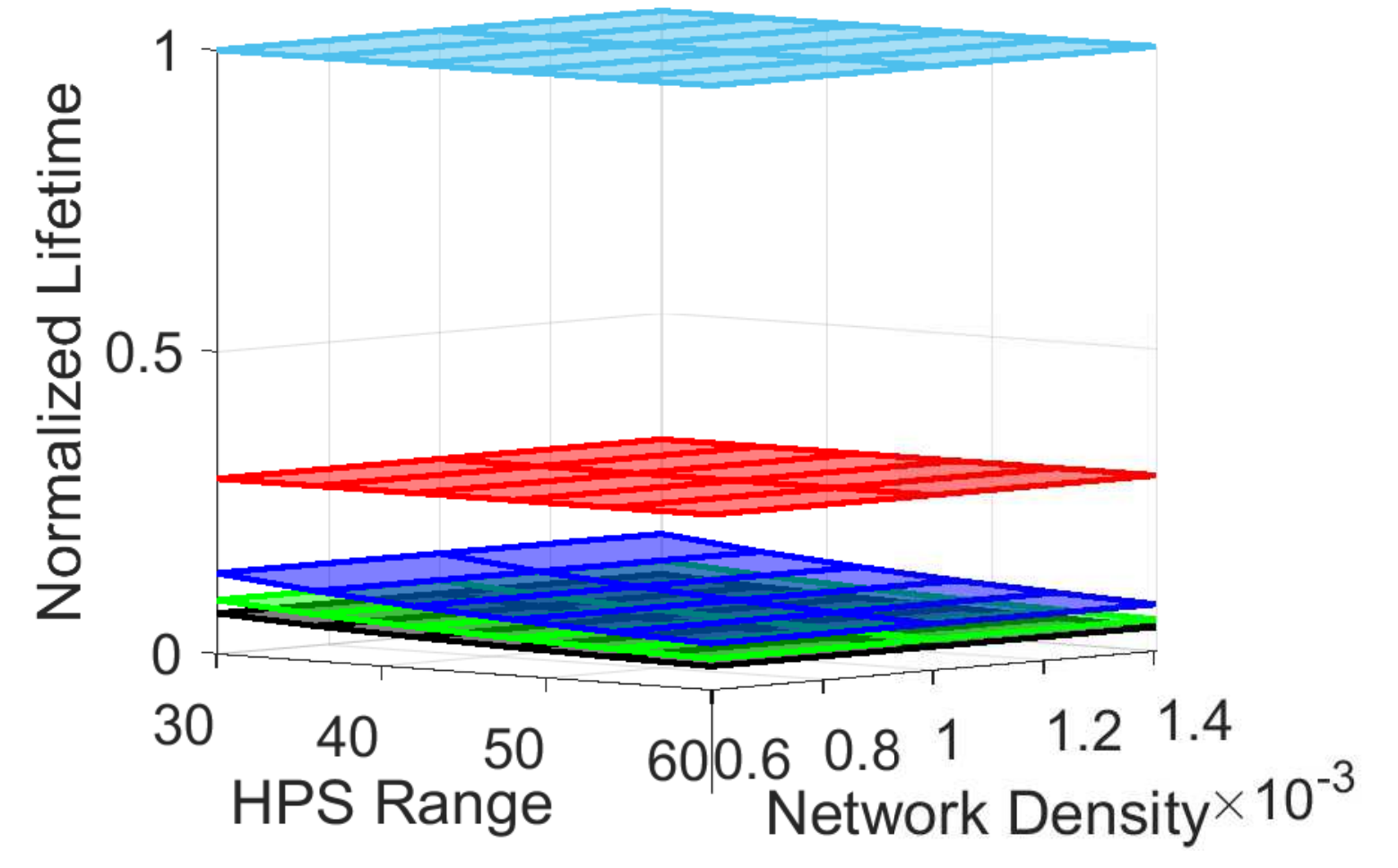}
        \caption{Network Lifetime with $\lambda=0$ targets in $\Omega_\gamma$}
        \label{fig:3D_LIFE_0}
    \end{subfigure}\hspace{6pt}%
    \begin{subfigure}[t]{0.48\textwidth}
        \centering
        \includegraphics[width=\textwidth]{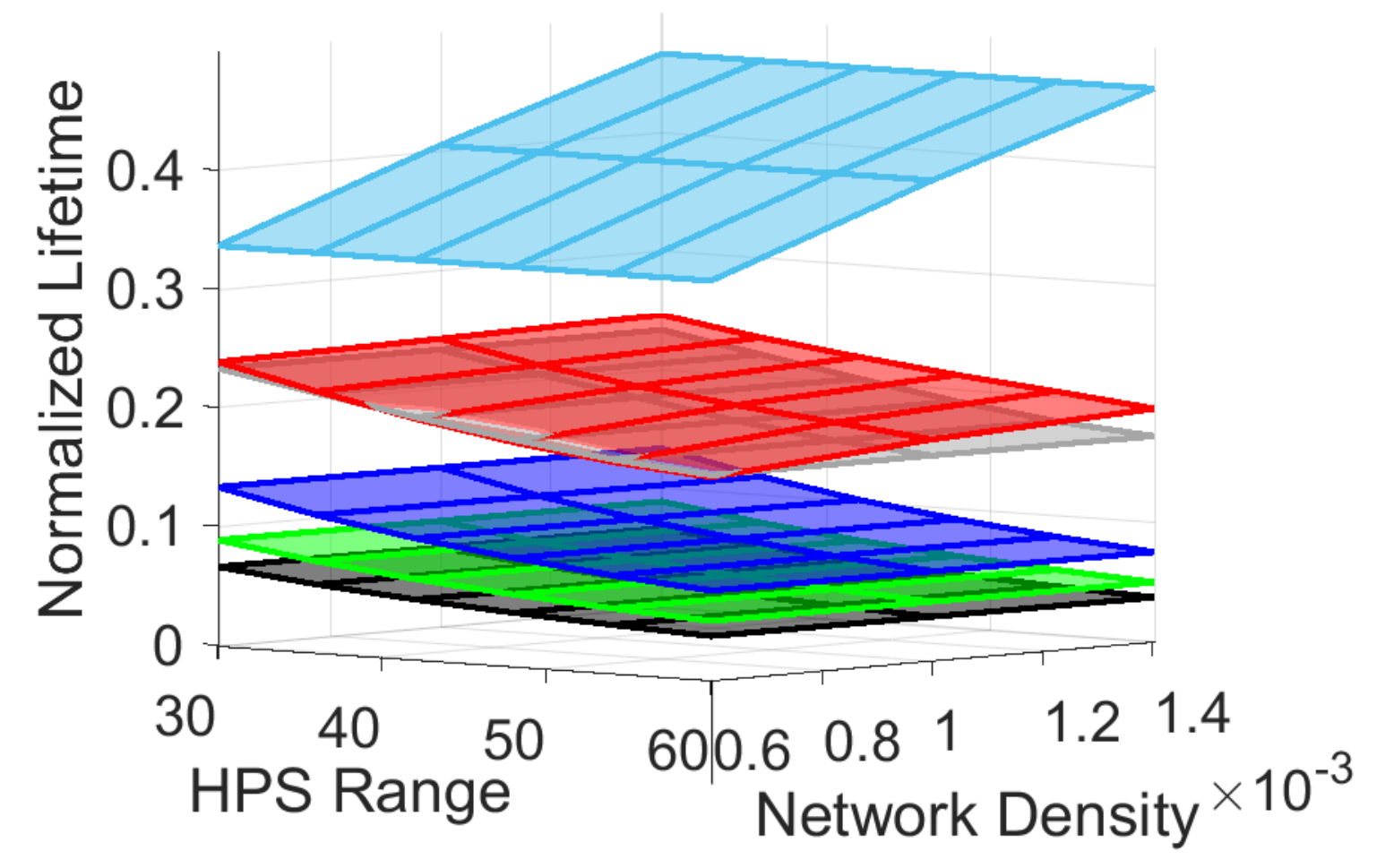}
        \caption{Network Lifetime with $\lambda=1$ targets in $\Omega_\gamma$}
        \label{fig:3D_LIFE_1}
    \end{subfigure}\vspace{6pt}%

	\begin{subfigure}[t]{0.48\textwidth}
        \centering
        \includegraphics[width=\textwidth]{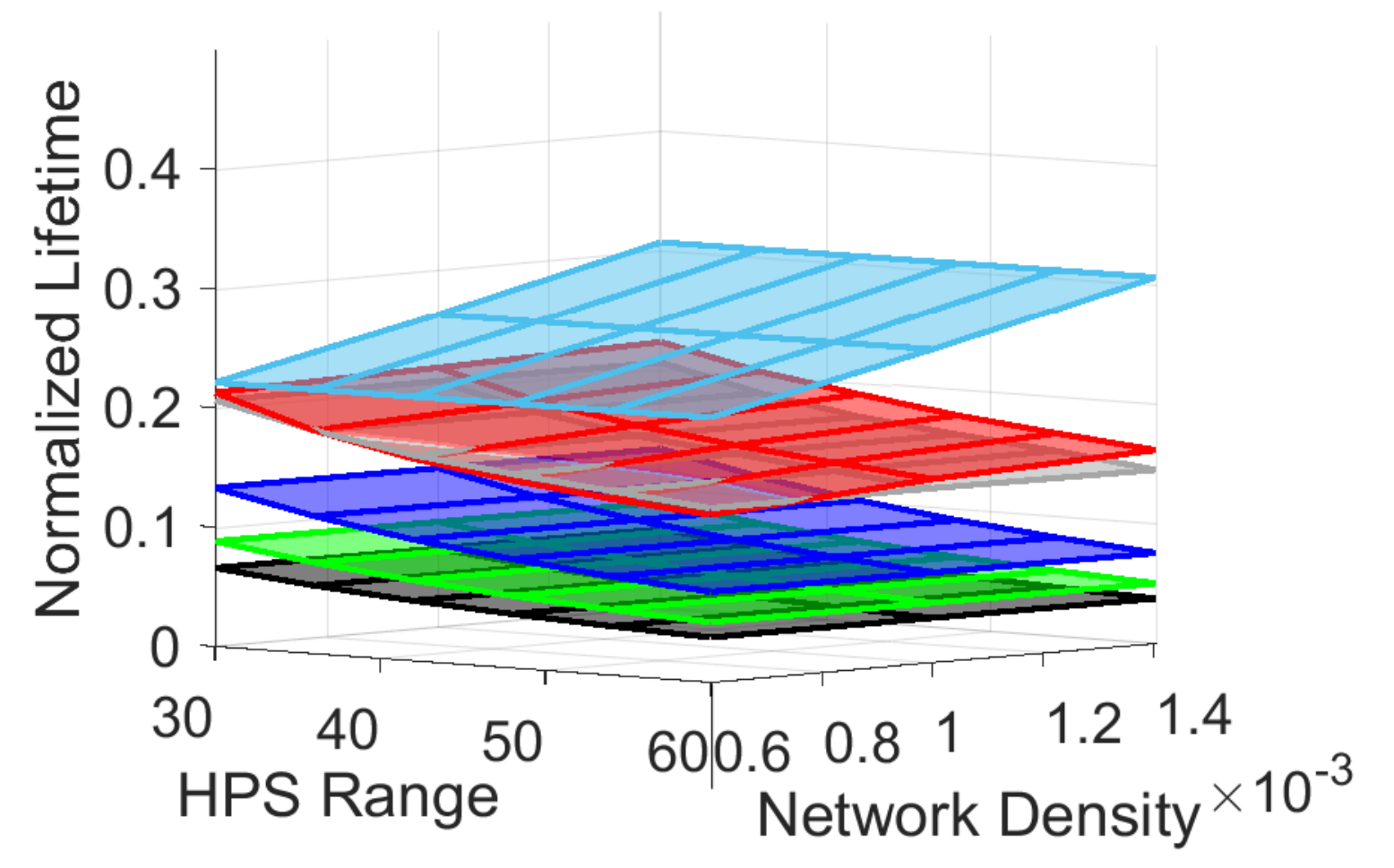}
        \caption{Network Lifetime with $\lambda=2$ targets in $\Omega_\gamma$}
        \label{fig:3D_LIFE_2}
    \end{subfigure}\hspace{12pt}%
    \begin{subfigure}[t]{0.48\textwidth}
        \centering
        \includegraphics[width=\textwidth]{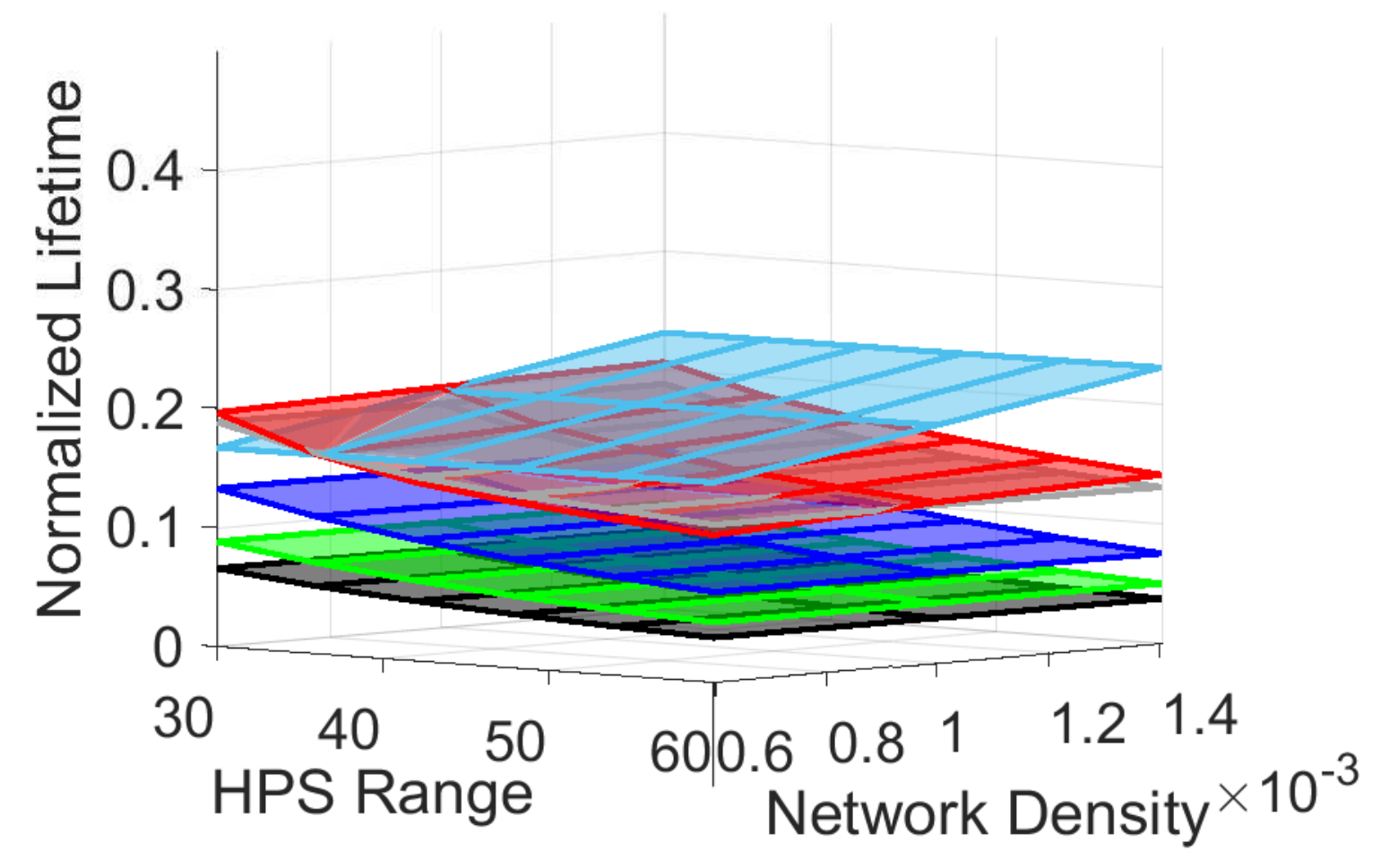}
        \caption{Network Lifetime with $\lambda=3$ targets in $\Omega_\gamma$}
        \label{fig:3D_LIFE_3}
    \end{subfigure}\vspace{6pt}

    \begin{subfigure}[t]{\textwidth}
    	\centering
        \includegraphics[width=0.9\textwidth]{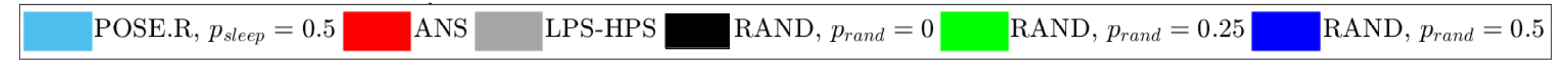}
    \end{subfigure}%

    \caption{POSE.R algorithm energy characteristics compared to existing methods.} \label{fig:3D_E_comp} \vspace{0pt}
\end{figure*}

\vspace{10pt}
\subsubsection{Energy Consumption and Network Lifetime Comparison}  While the POSE.R network achieves lower missed detection rates as compared to the other methods, it also consumes significantly less energy.  Specifically, Fig.~\ref{fig:3D_AVG_E_IN} shows the average energy consumption per node located within a distance of $R_L$ from the target's position. This result shows that for low network densities and when the other methods utilize small  HPS  ranges, the POSE.R network consumes slightly more energy. This is because for low network densities, $D_{b}^{\tau_\ell}<N_{sel}$, which requires POSE.R to select nodes outside of $R_1$ with larger sensing ranges to maintain the tracking performance, while the other methods are using a fixed small  HPS  range (yielding poor detection performance, as shown in Fig.~\ref{fig:3D_PM}). However, as the network density increases while the other methods use a small  HPS  range, the energy consumption of POSE.R decreases and approaches that of the ANS algorithm. This is because as the network density increases, it is likely that POSE.R is able to select $N_{sel}$ nodes within the $R_1$ distance of the target's position. Also, as seen in Fig.~\ref{fig:3D_AVG_E_IN}, when the other methods use larger  HPS  ranges, then they consume more energy than the POSE.R network. This is because the POSE.R algorithm opportunistically selects the optimal sensing range to track the target, thus highlighting the benefits of the DANS algorithm.

Fig.~\ref{fig:3D_AVG_E_OUT} shows the average energy consumption per node located at a distance greater than $R_L$ from the target's position. It is clearly seen that POSE.R consumes less energy than all the other methods. Since POSE.R, LPS-HPS, and ANS algorithms are opportunistic sensing methods, they consume less energy than the random methods. However, by virtue of incorporating a \emph{Sleep} state, POSE.R is the most energy-efficient algorithm.

Figs.~\ref{fig:3D_LIFE_0}, \ref{fig:3D_LIFE_1}, \ref{fig:3D_LIFE_2}, \ref{fig:3D_LIFE_3} compare the lifetime of the POSE.R network with the other networks for $\lambda=0,1,2,$ and $3$ targets, respectively. The network lifetime is normalized with the lifetime of a network with no targets, i.e. $\lambda=0$, and for  $p_{sleep}$=0.75. Each network was simulated in a $2R_{L} \times 600m$ tube with $\lambda$ targets traveling through its center in a straight line. The total life of the network is computed when all of the nodes located within $R_{LPS}$ of the targets' trajectories have no remaining energy. For $\lambda =0, 1$ and $2$ targets, as seen in Figs.~\ref{fig:3D_LIFE_0}, \ref{fig:3D_LIFE_1}, \ref{fig:3D_LIFE_2},  respectively, POSE.R achieves a significantly larger network lifetime as compared to the other methods. As $\lambda$ becomes large, i.e., $\lambda = 3$, as seen in Fig.~\ref{fig:3D_LIFE_3}, the lifetime of POSE.R method is still higher than all methods; however, the margin is less. This is because the tube $\Omega_\gamma$ becomes completely occupied with targets and almost all of the nodes are either in the LPS state or HPS state and are consuming more energy. Specifically, for very low network densities and when the other methods use low  HPS  ranges, the POSE.R network has slightly less lifetime as compared to the ANS network. This because the DANS algorithm in POSE.R opportunistically increases the  HPS  range of the selected nodes to ensure target tracking at the expense of energy consumption, while the ANS network conserves energy but is not always able to track the target with a low HPS range.

\begin{figure*}[t!]
    \centering
    \begin{subfigure}[t]{0.32\textwidth}
        \centering
        \includegraphics[width=\textwidth]{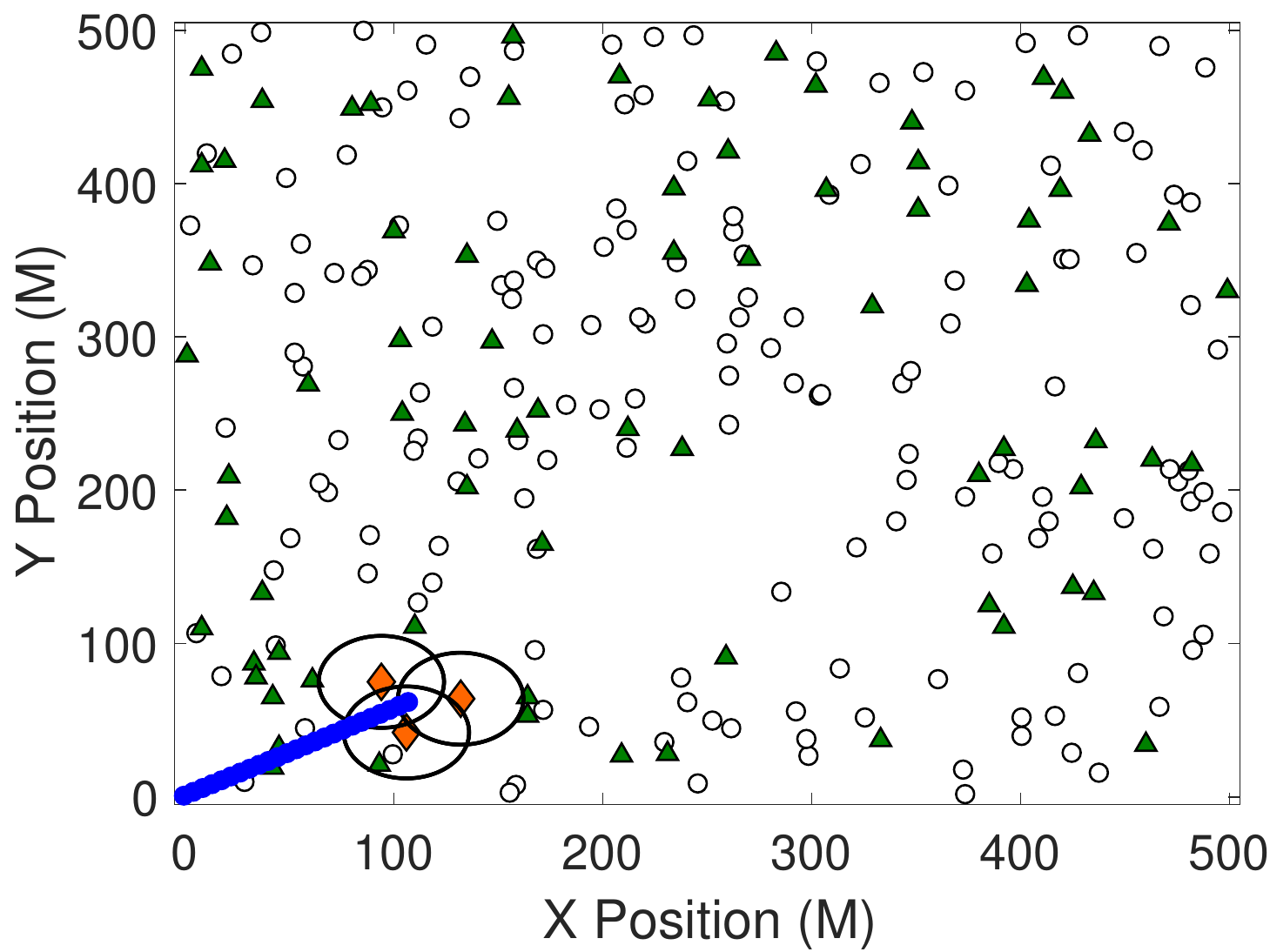}
        \caption{Time $k=25$: Target travelling in a high density region, where the selected sensors use their default smallest  HPS  range.}
        \label{fig:snap_25}
    \end{subfigure}%
	\begin{subfigure}[t]{0.32\textwidth}
        \centering
        \includegraphics[width=\textwidth]{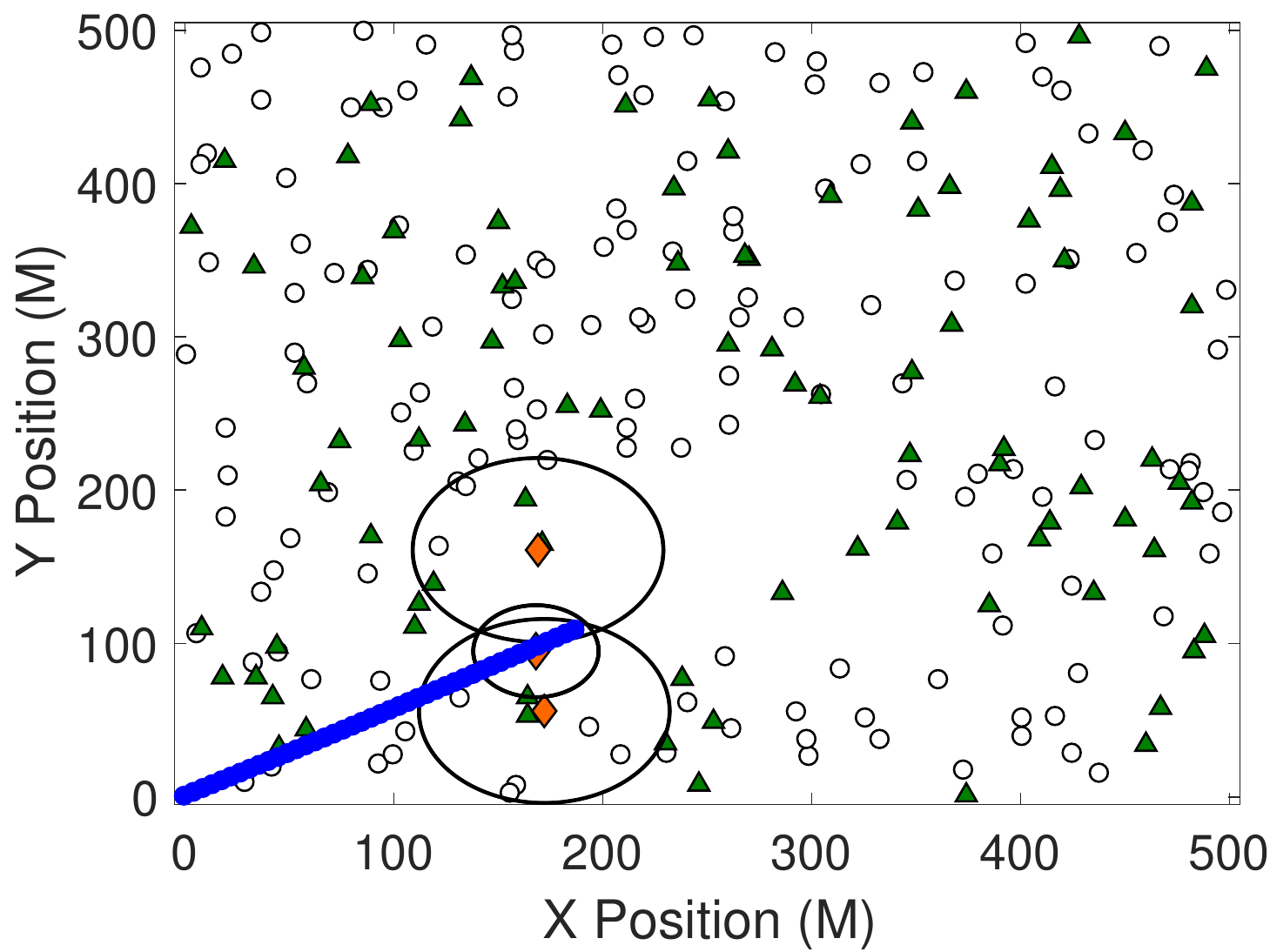}
        \caption{Time $k=43$: Target travelling in a low density region requiring adaptive sensor range selection.}
        \label{fig:snap_43}
    \end{subfigure}%
    \begin{subfigure}[t]{0.32\textwidth}
        \centering
        \includegraphics[width=\textwidth]{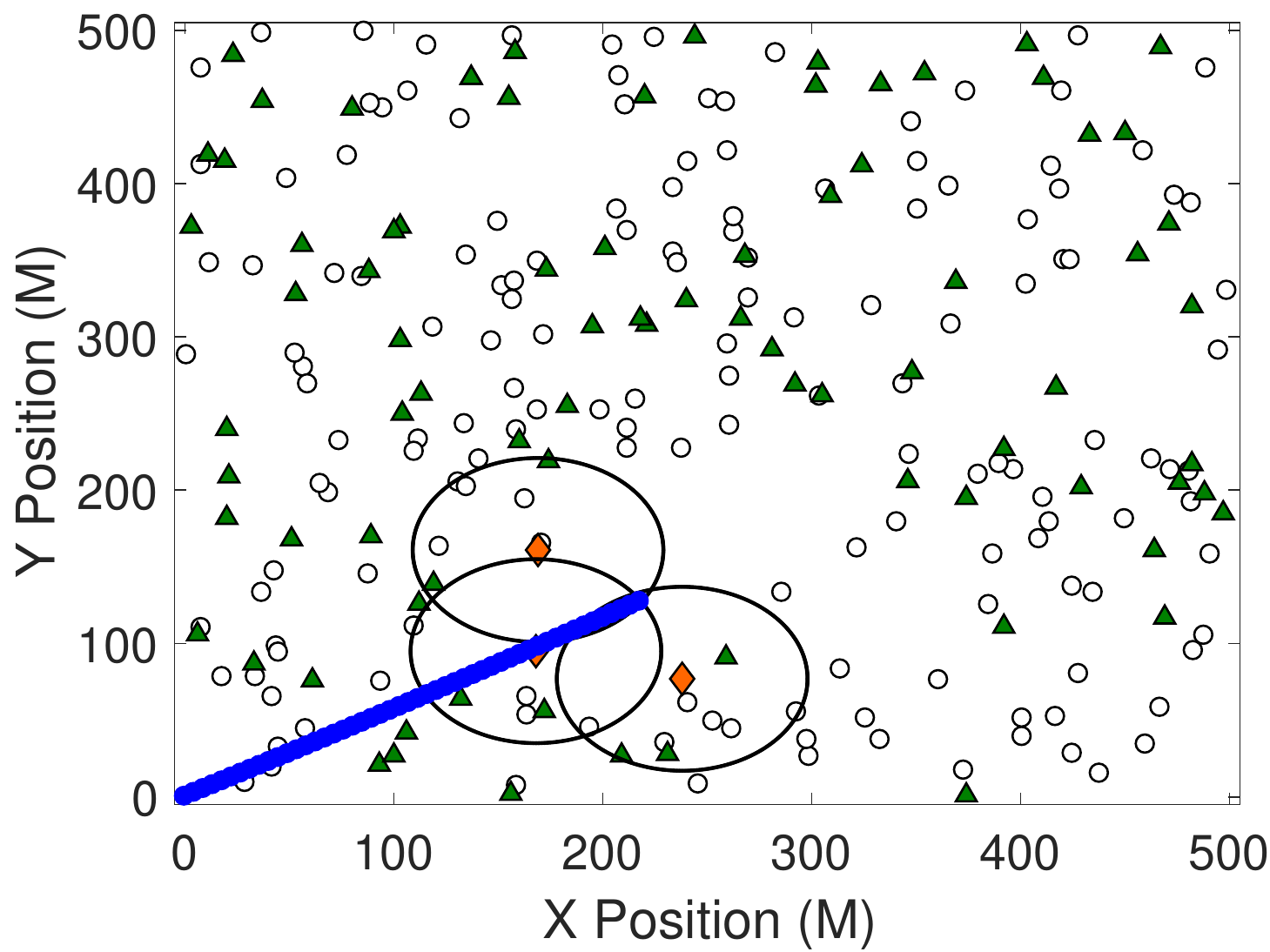}
        \caption{Time $k=50$: Target enters a coverage gap and all the selected sensors use their largest HPS sensing range.}
        \label{fig:snap_50}
    \end{subfigure}\vspace{6pt}

    \begin{subfigure}[t]{0.32\textwidth}
        \centering
        \includegraphics[width=\textwidth]{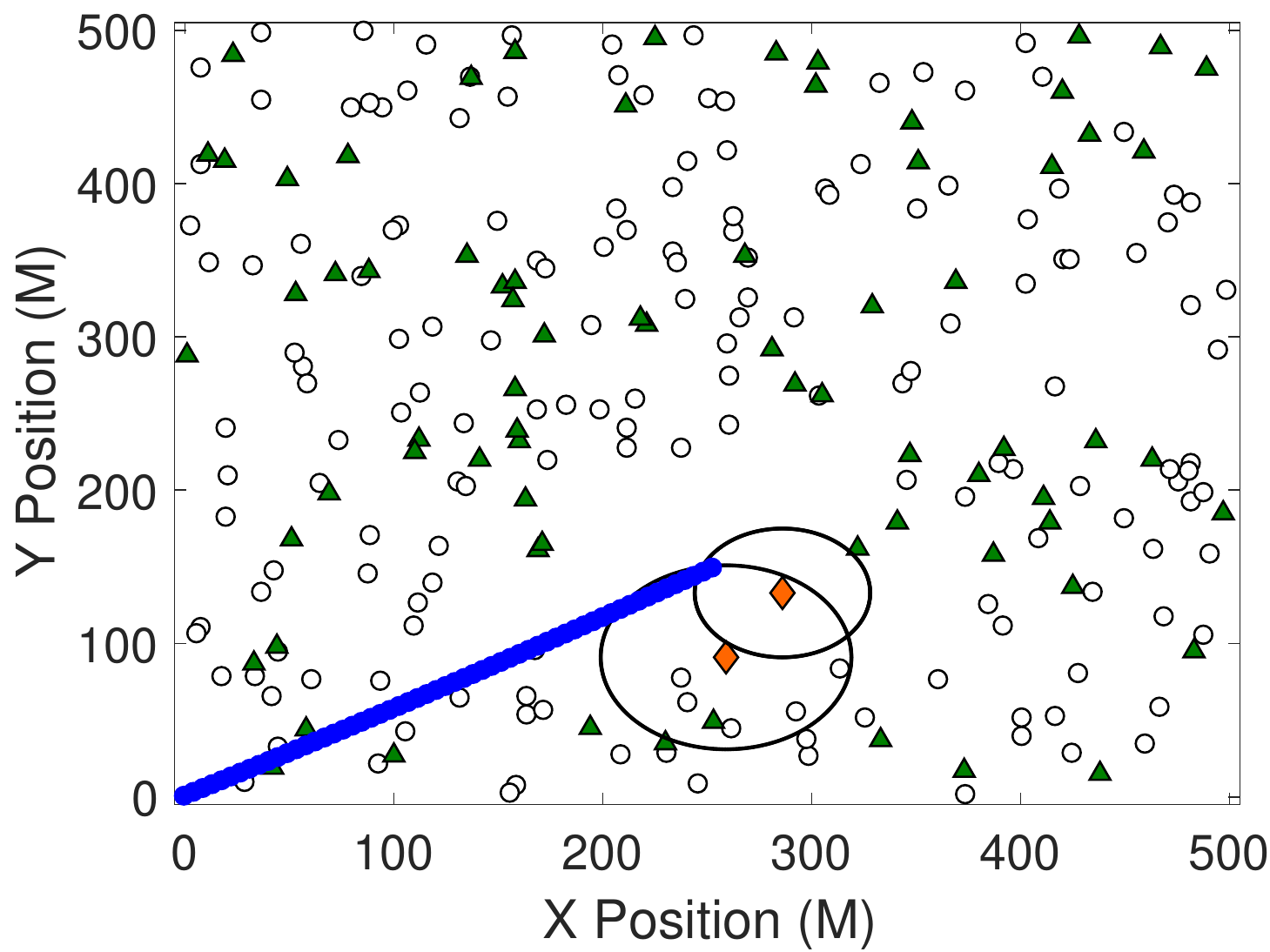}
        \caption{Time $k=58$: Target travelling in a low density region requiring adaptive sensor range selection.}
        \label{fig:snap_58}
    \end{subfigure}%
	\begin{subfigure}[t]{0.32\textwidth}
        \centering
        \includegraphics[width=\textwidth]{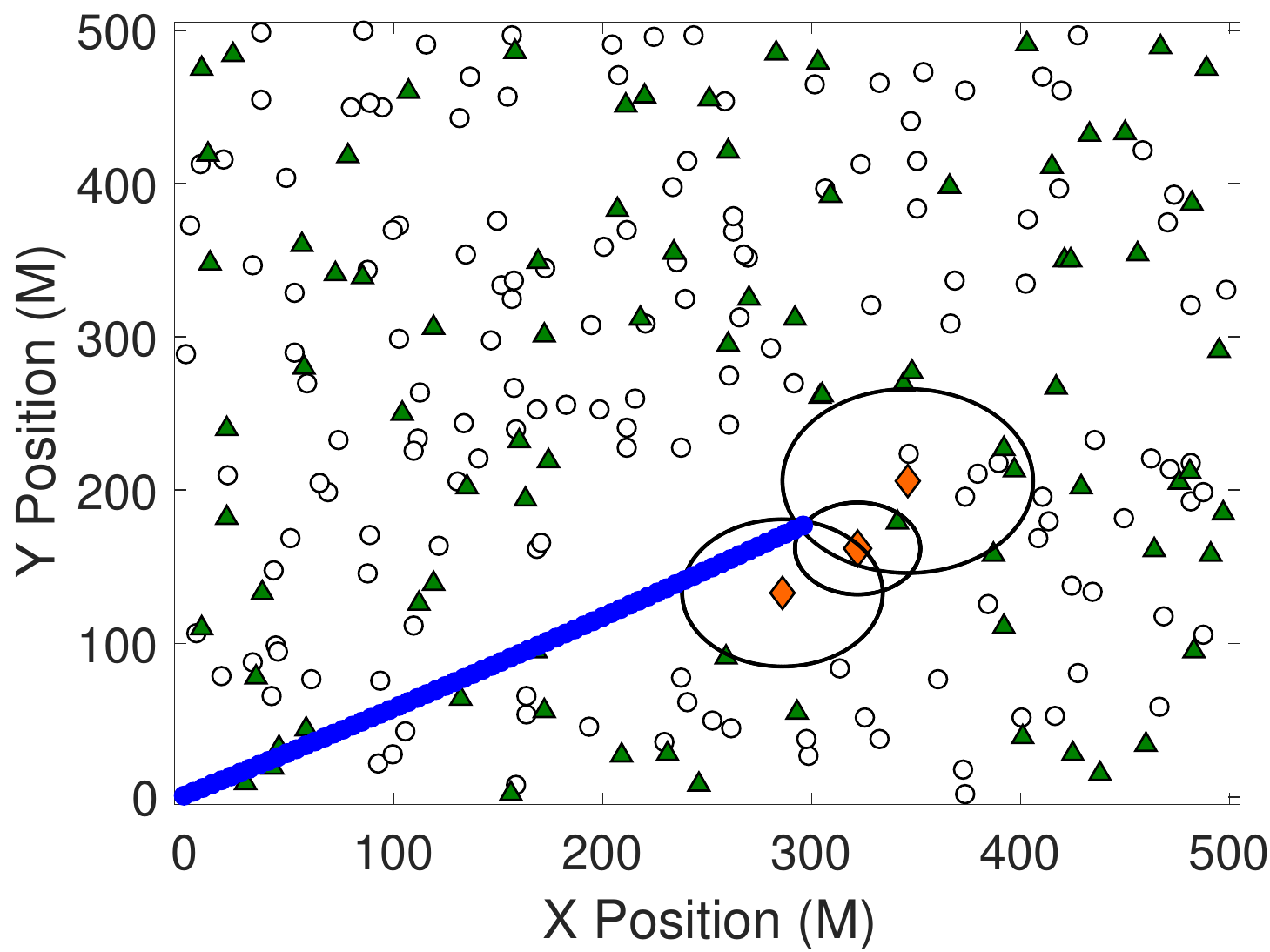}
        \caption{Time $k=68$: Target travelling in a low density region requiring adaptive sensor range selection.}
        \label{fig:snap_68}
    \end{subfigure}%
    \begin{subfigure}[t]{0.32\textwidth}
        \centering
        \includegraphics[width=\textwidth]{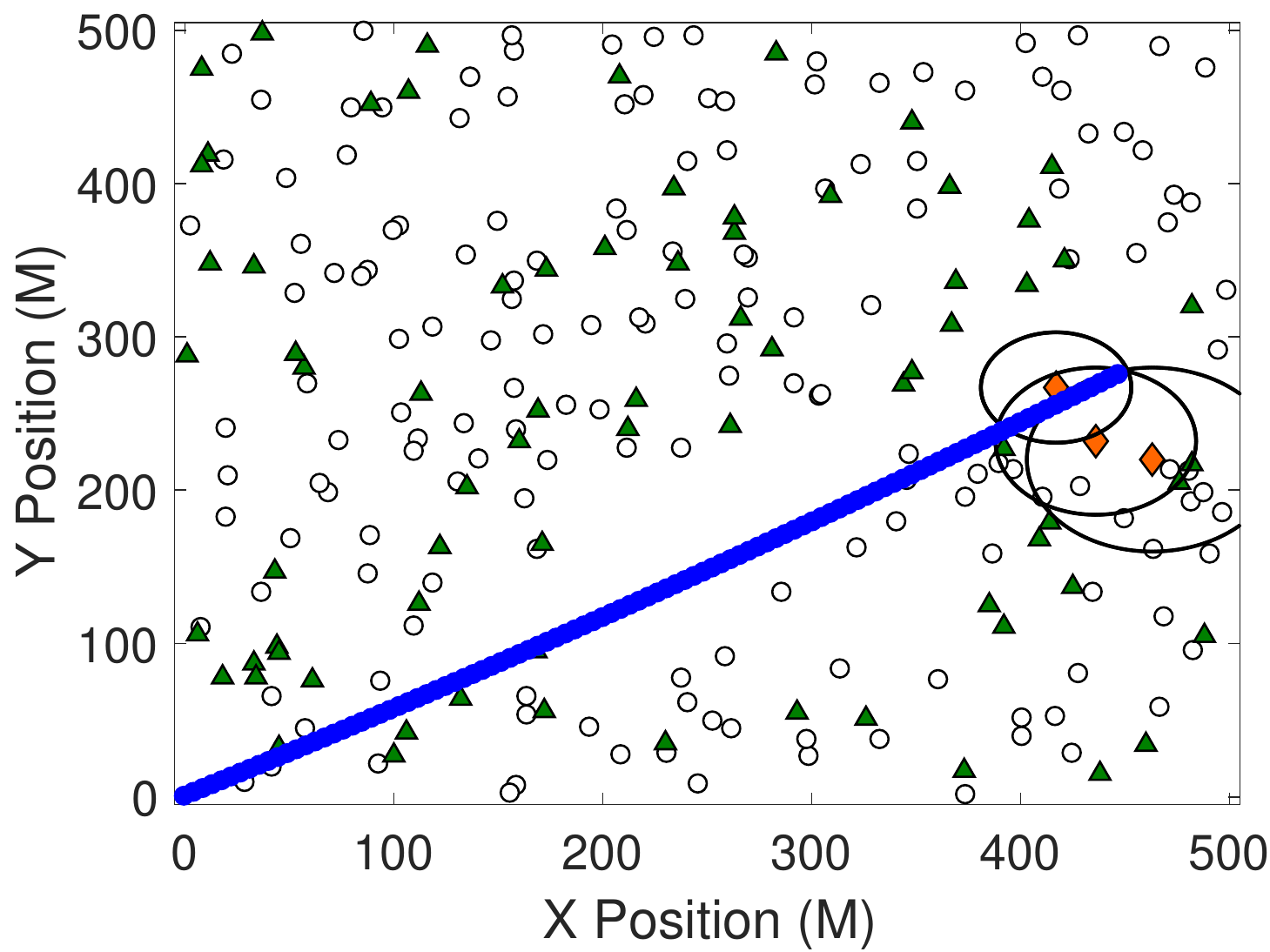}
        \caption{Time $k=103$: Target travelling in a low density region requiring adaptive sensor range selection.}
        \label{fig:snap_103}
    \end{subfigure}\vspace{6pt}

    \begin{subfigure}[t]{\textwidth}
    	\centering
        \includegraphics[width=0.9\textwidth]{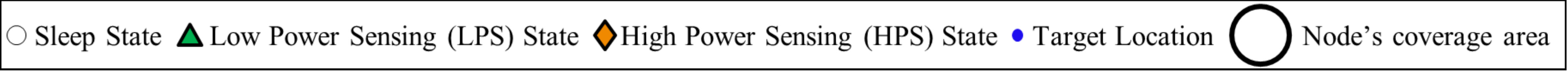}
    \end{subfigure}%

    \caption{Illustration of how POSE.R provides resilient tracking by adapting  HPS  sensing ranges of selected nodes when the target travels through low density regions or a coverage gap.} \label{fig:Snap_Shots} \vspace{6pt}
\end{figure*}

The choice of stopping after $3$ targets is due to the length of the simulated tube, for which the POSE.R lifetime characteristic is  saturating, as shown in Fig~\ref{fig:POSER_Char}. Once the number of targets increases above $3$, the number of nodes in the Sleep state decreases significantly and the POSE.R network acts as an LPS-HPS network due to majority of the tube being covered. Eventually, in the limiting case where a constant procession of targets are traveling through the tube, the network will act as an all on network (i.e., Random Scheduling network with $p_{sleep}=0$) to ensure that every target is tracked. The baseline lifetime performance is seen by the black planes in Figs.~\ref{fig:3D_LIFE_0}, \ref{fig:3D_LIFE_1}, \ref{fig:3D_LIFE_2}, \ref{fig:3D_LIFE_3}. For bigger networks, POSE.R will show significant energy savings for more number of targets.

\begin{figure*}[t!]
    \centering
    \begin{subfigure}[t]{0.32\textwidth}
        \centering
        \includegraphics[width=\textwidth]{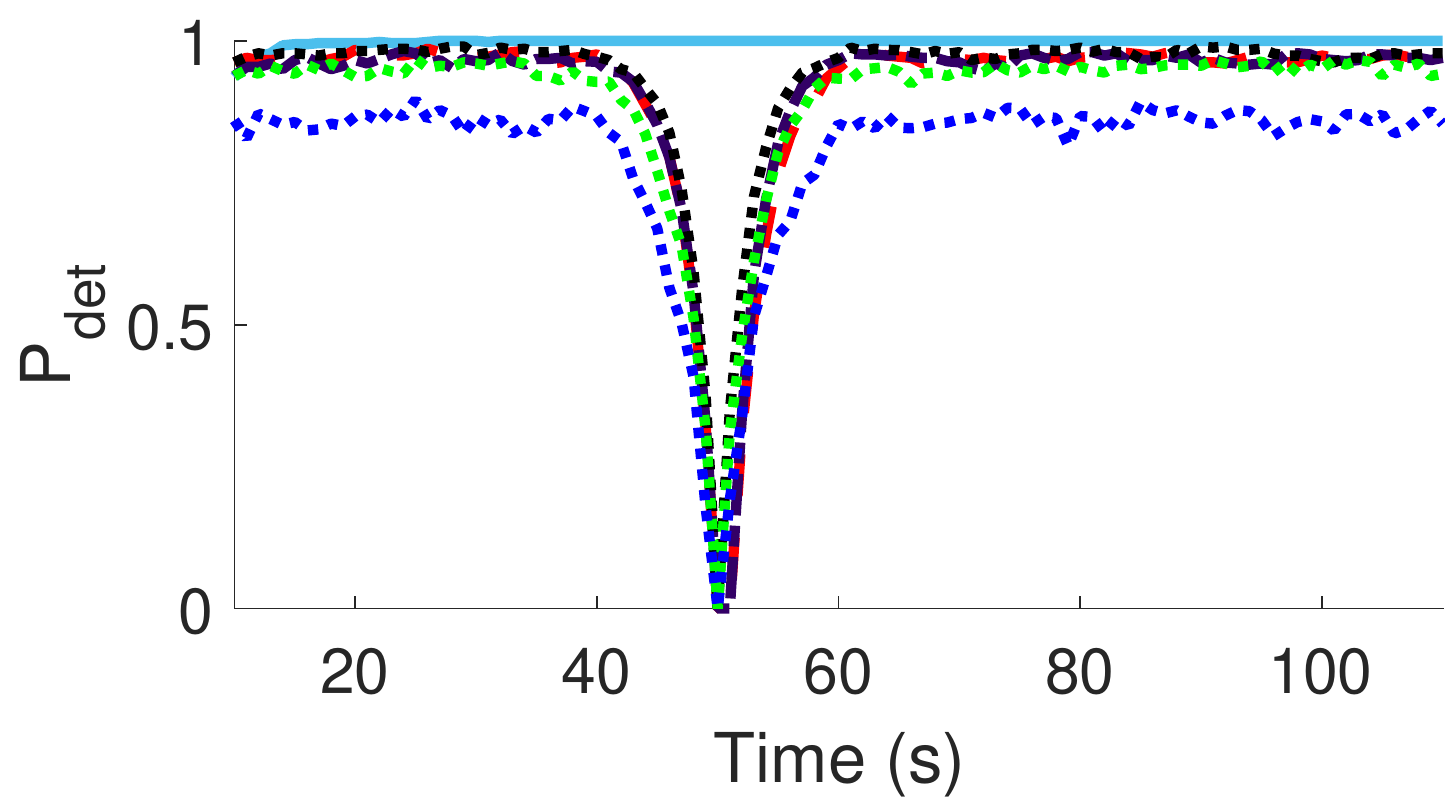} \vspace{-6pt}
     \caption{$R_{gap}=30, R_{HPS} = 30$}
    \end{subfigure}%
    ~
    \begin{subfigure}[t]{0.32\textwidth}
        \centering
        \includegraphics[width=\textwidth]{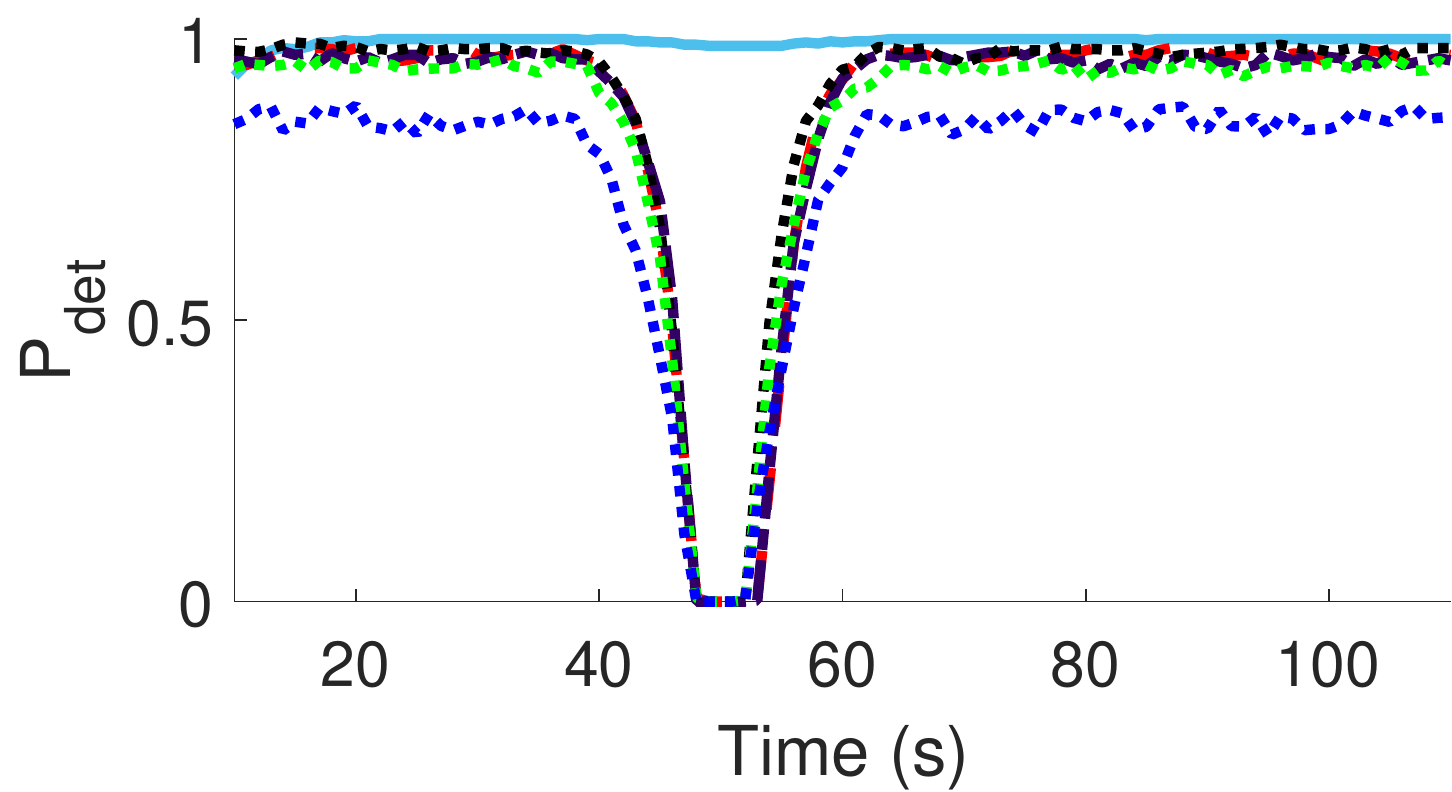}  \vspace{-6pt}
     \caption{$R_{gap}=40, R_{HPS} = 30$}
    \end{subfigure}%
     ~
    \begin{subfigure}[t]{0.32\textwidth}
        \centering
        \includegraphics[width=\textwidth]{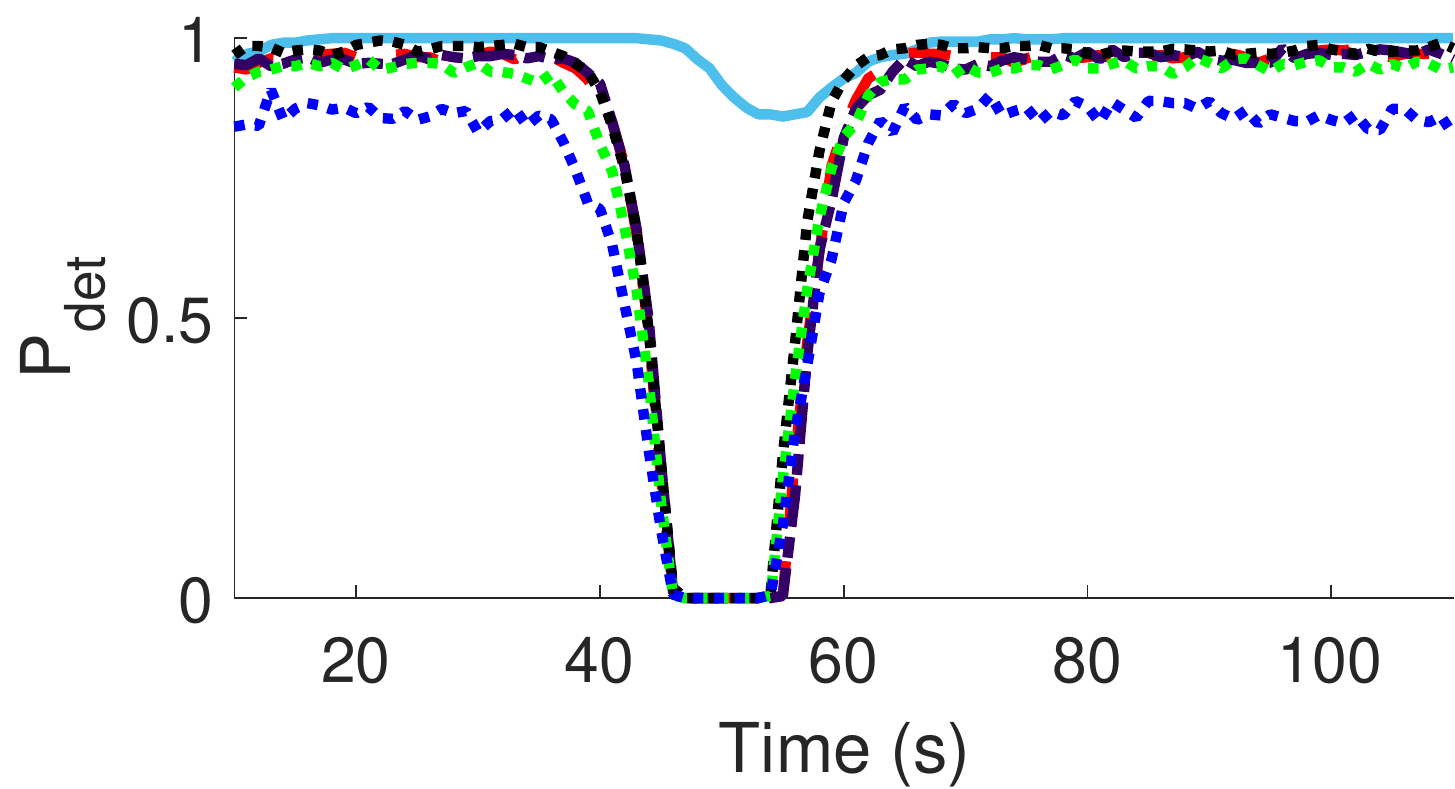}  \vspace{-6pt}
     \caption{$R_{gap}=50, R_{HPS} = 30$}
    \end{subfigure}\vspace{6pt}

    \begin{subfigure}[t]{0.32\textwidth}
        \centering
        \includegraphics[width=\textwidth]{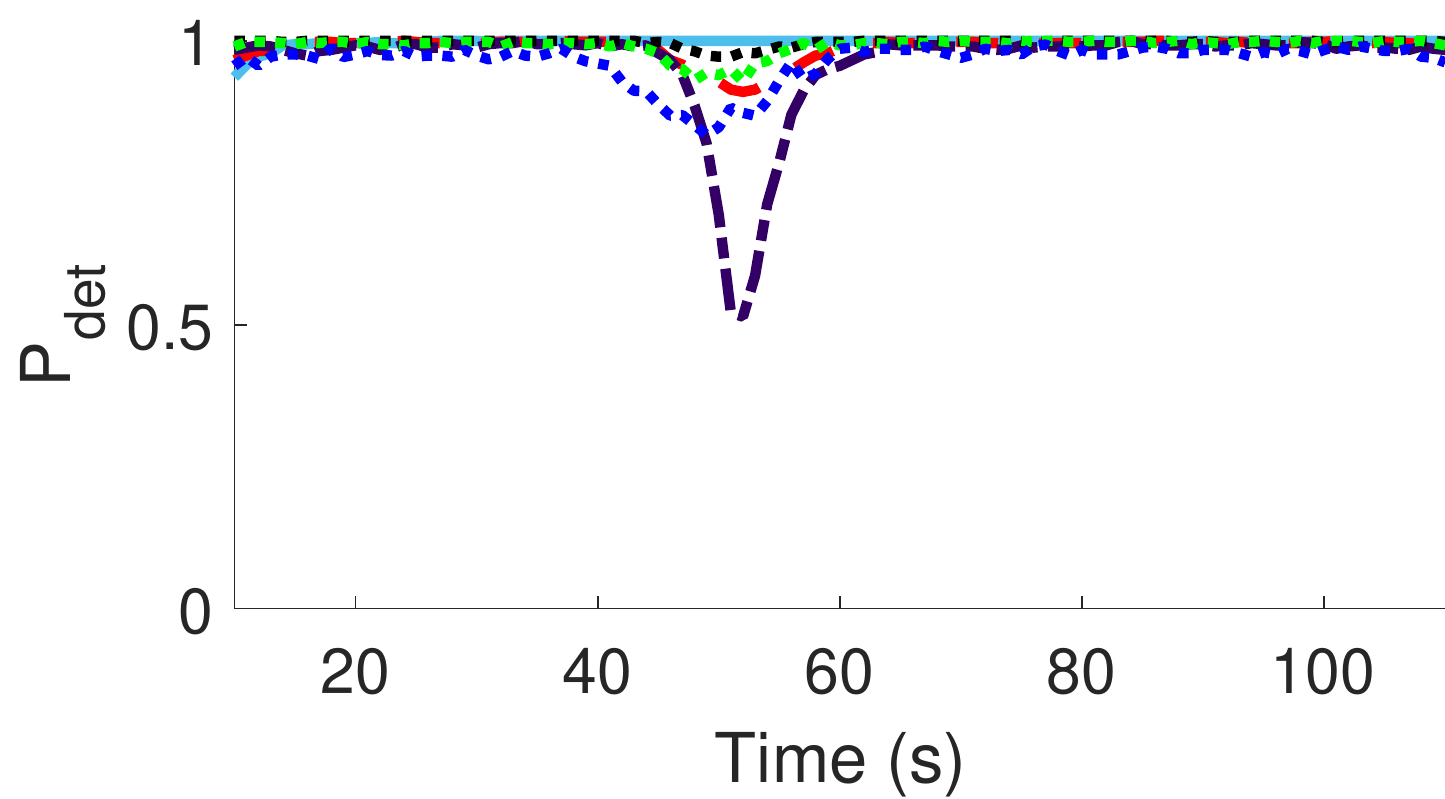}
     \caption{$R_{gap}=30, R_{HPS} = 42$}
    \end{subfigure}%
    ~
    \begin{subfigure}[t]{0.32\textwidth}
        \centering
        \includegraphics[width=\textwidth]{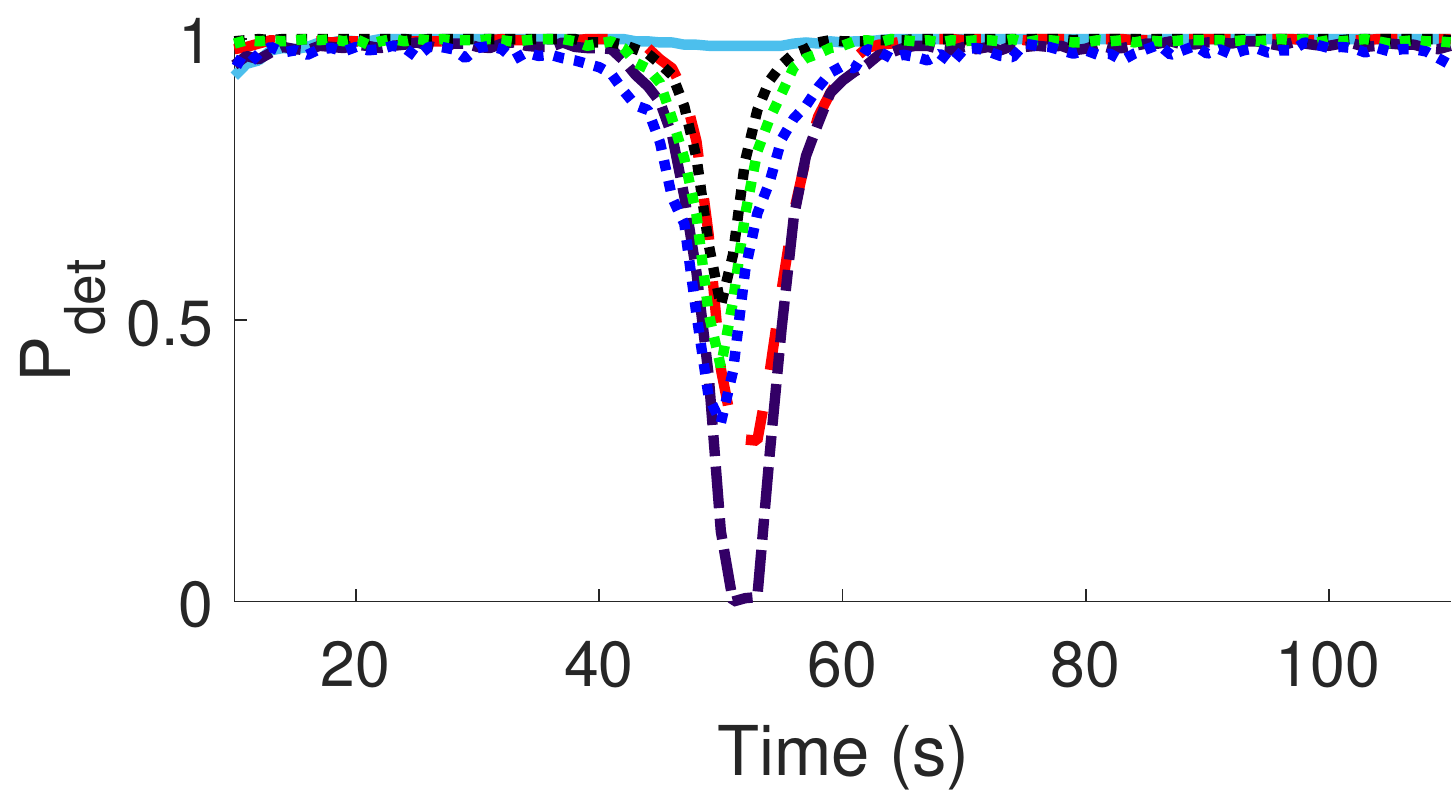}
     \caption{$R_{gap}=40, R_{HPS} = 42$}
    \end{subfigure}%
    ~
    \begin{subfigure}[t]{0.32\textwidth}
        \centering
        \includegraphics[width=\textwidth]{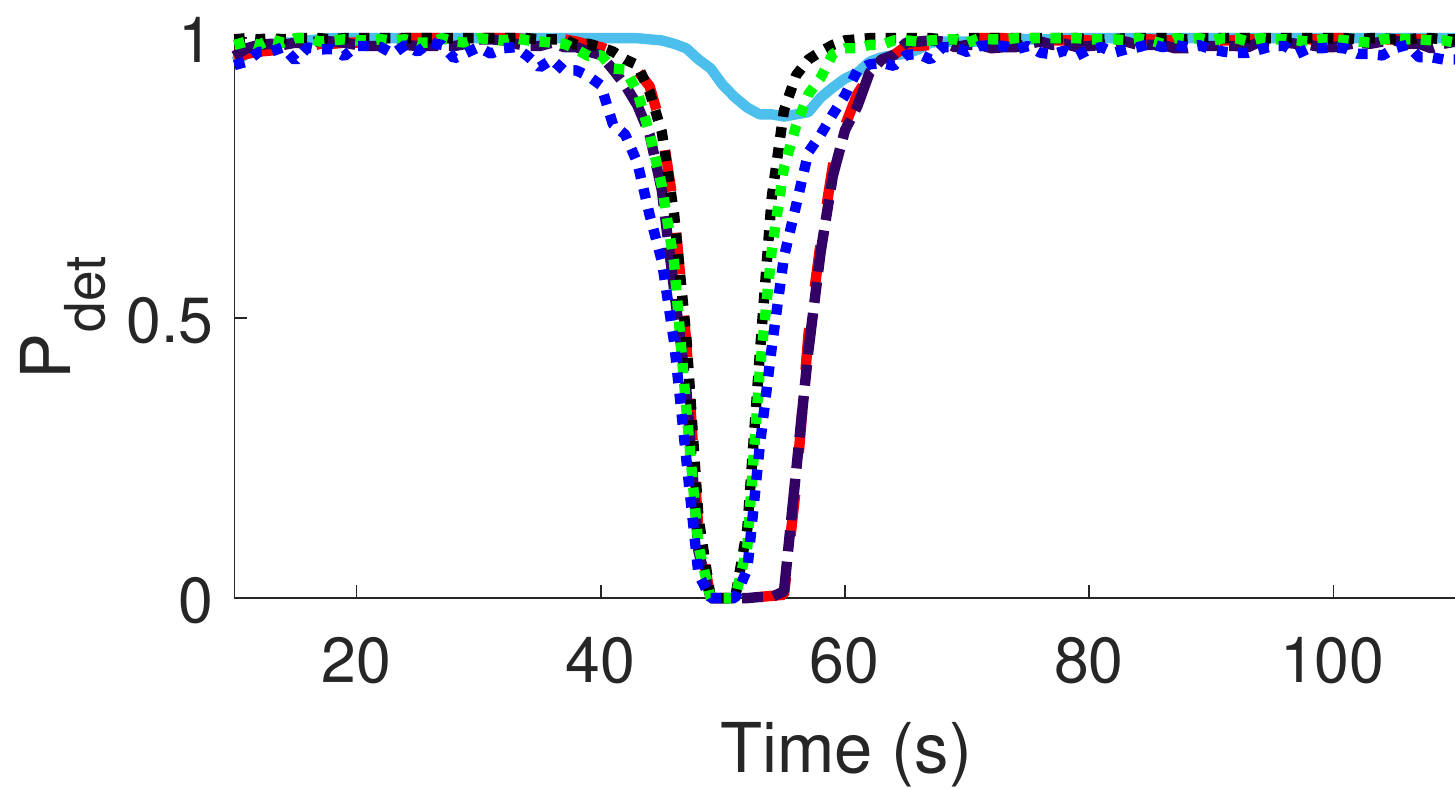}
    \caption{$R_{gap}=50, R_{HPS} = 42$}
    \end{subfigure}\vspace{6pt}

	\begin{subfigure}[t]{0.32\textwidth}
        \centering
        \includegraphics[width=\textwidth]{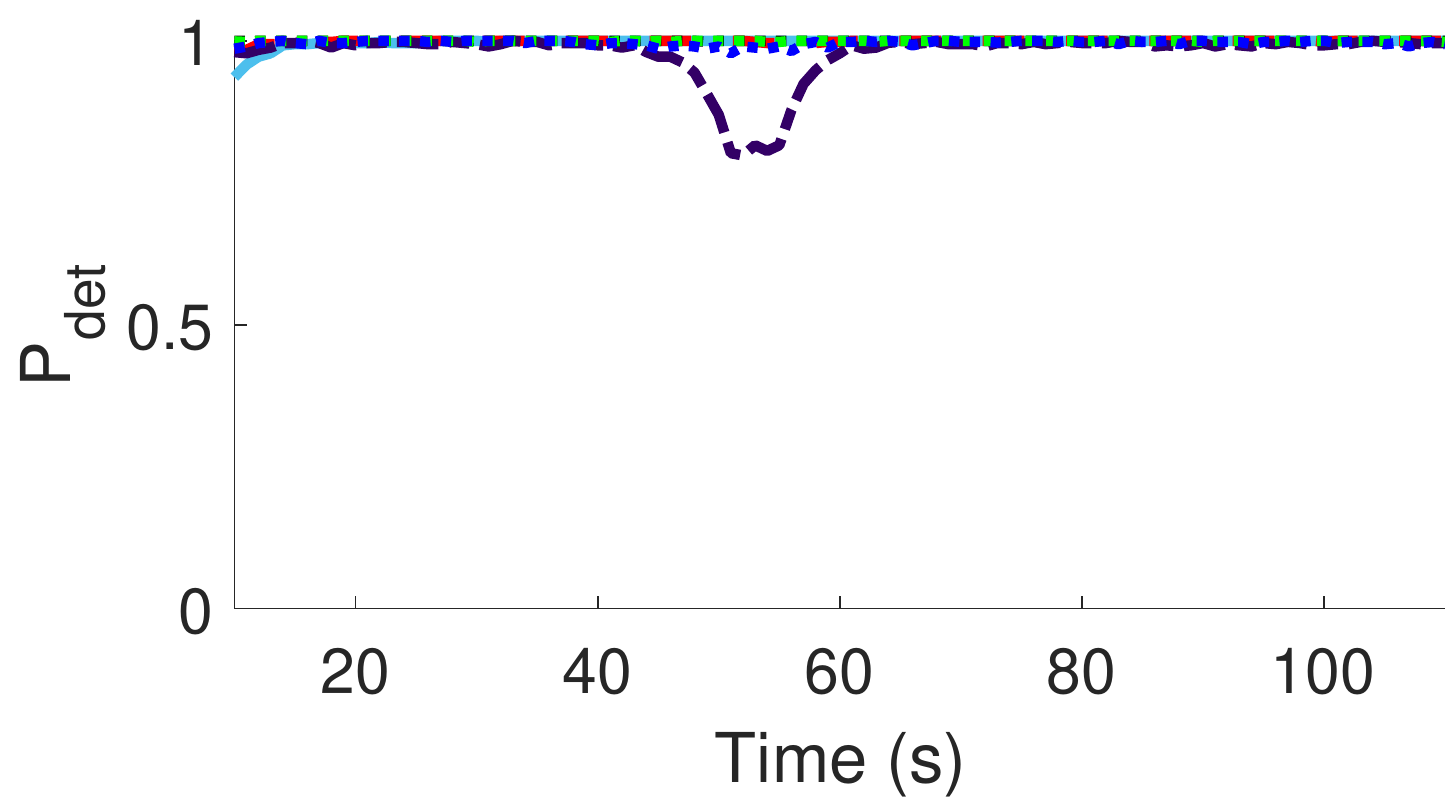}
     \caption{$R_{gap}=30, R_{HPS} = 54$}
    \end{subfigure}%
    ~
    \begin{subfigure}[t]{0.32\textwidth}
        \centering
        \includegraphics[width=\textwidth]{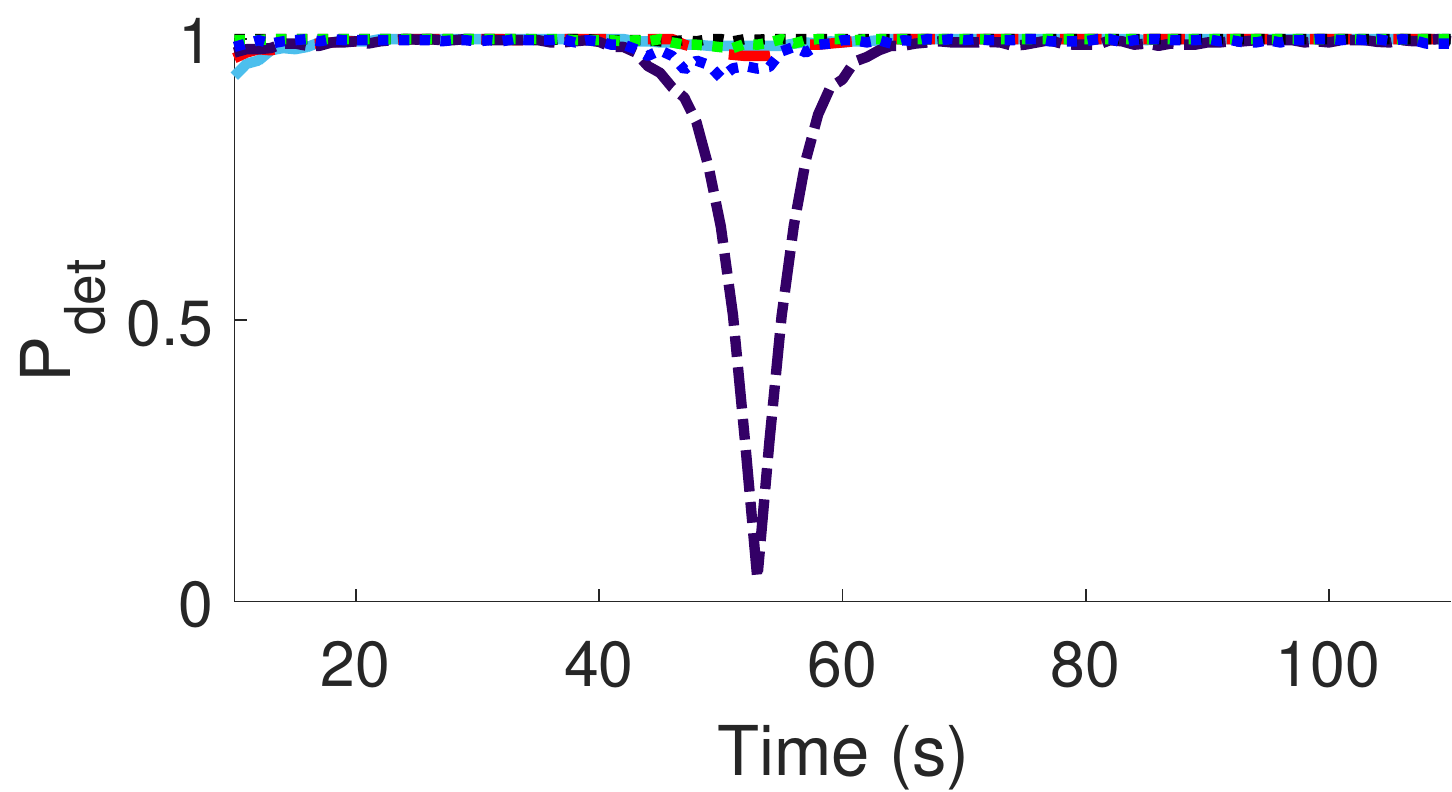}
     \caption{$R_{gap}=40, R_{HPS} = 54$}
    \end{subfigure}%
    ~
    \begin{subfigure}[t]{0.32\textwidth}
        \centering
        \includegraphics[width=\textwidth]{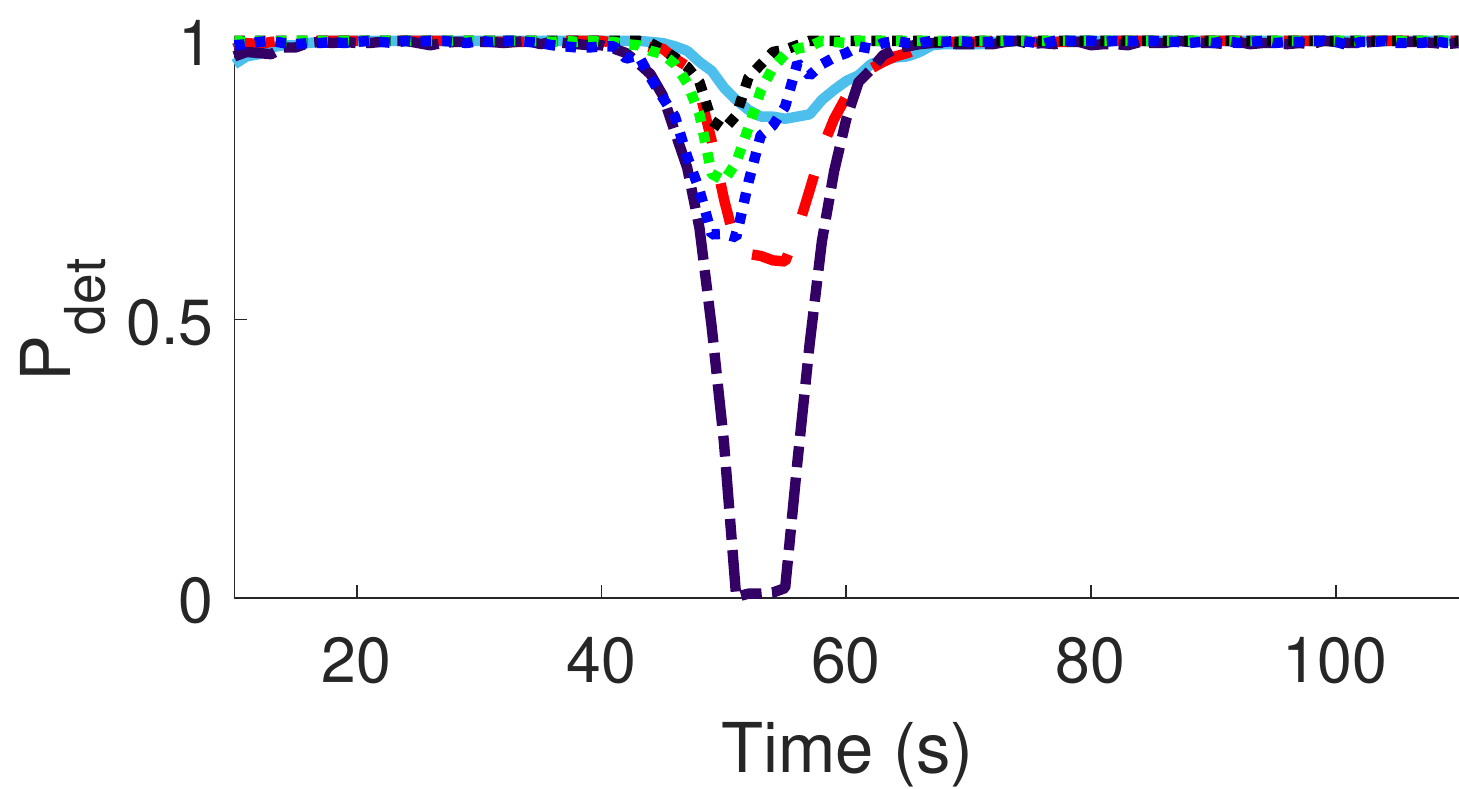}
    \caption{$R_{gap}=50, R_{HPS} = 54$}
    \end{subfigure}\vspace{6pt}

    \begin{subfigure}[t]{0.32\textwidth}
        \centering
        \includegraphics[width=\textwidth]{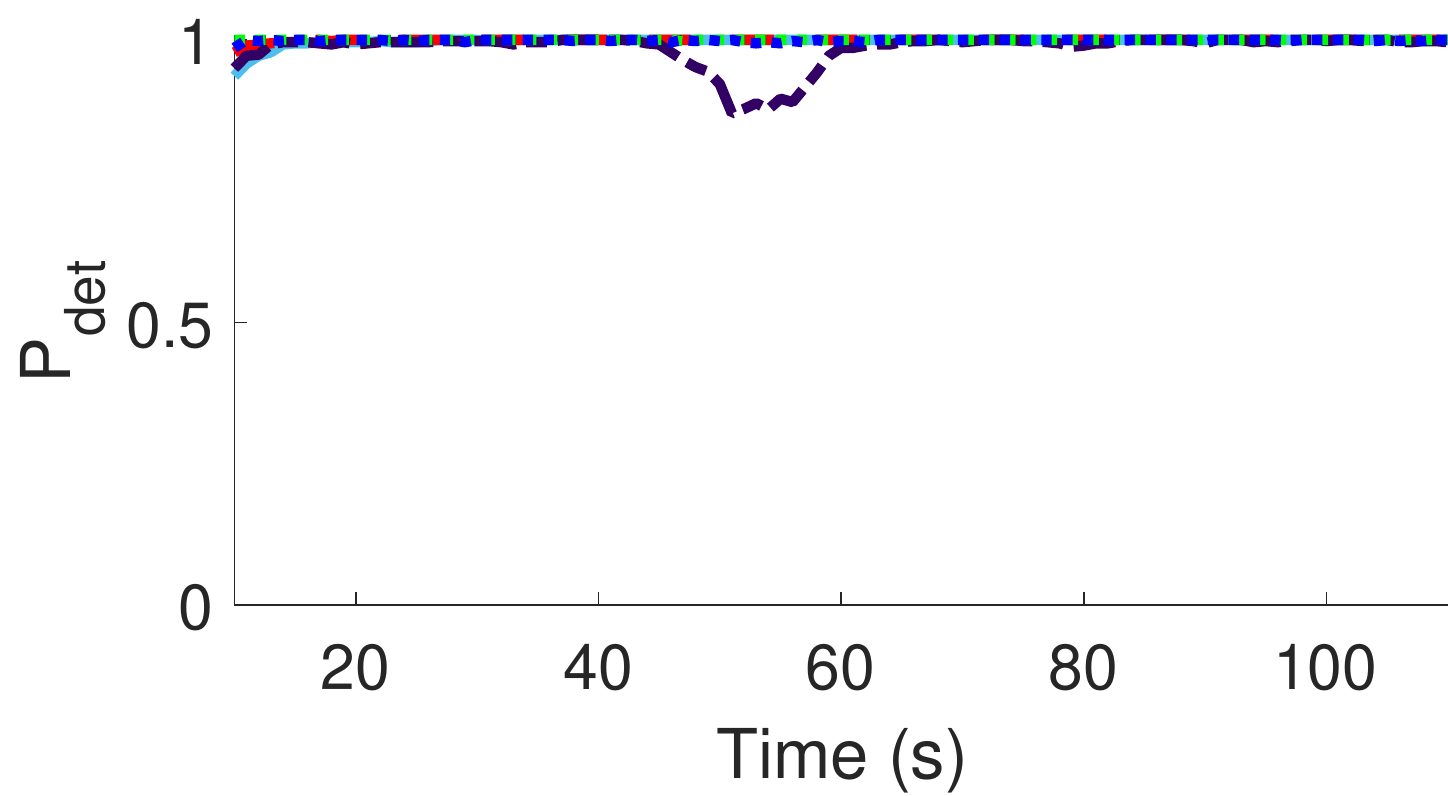}
     \caption{$R_{gap}=30, R_{HPS} = 60$}
    \end{subfigure}%
    ~
    \begin{subfigure}[t]{0.32\textwidth}
        \centering
        \includegraphics[width=\textwidth]{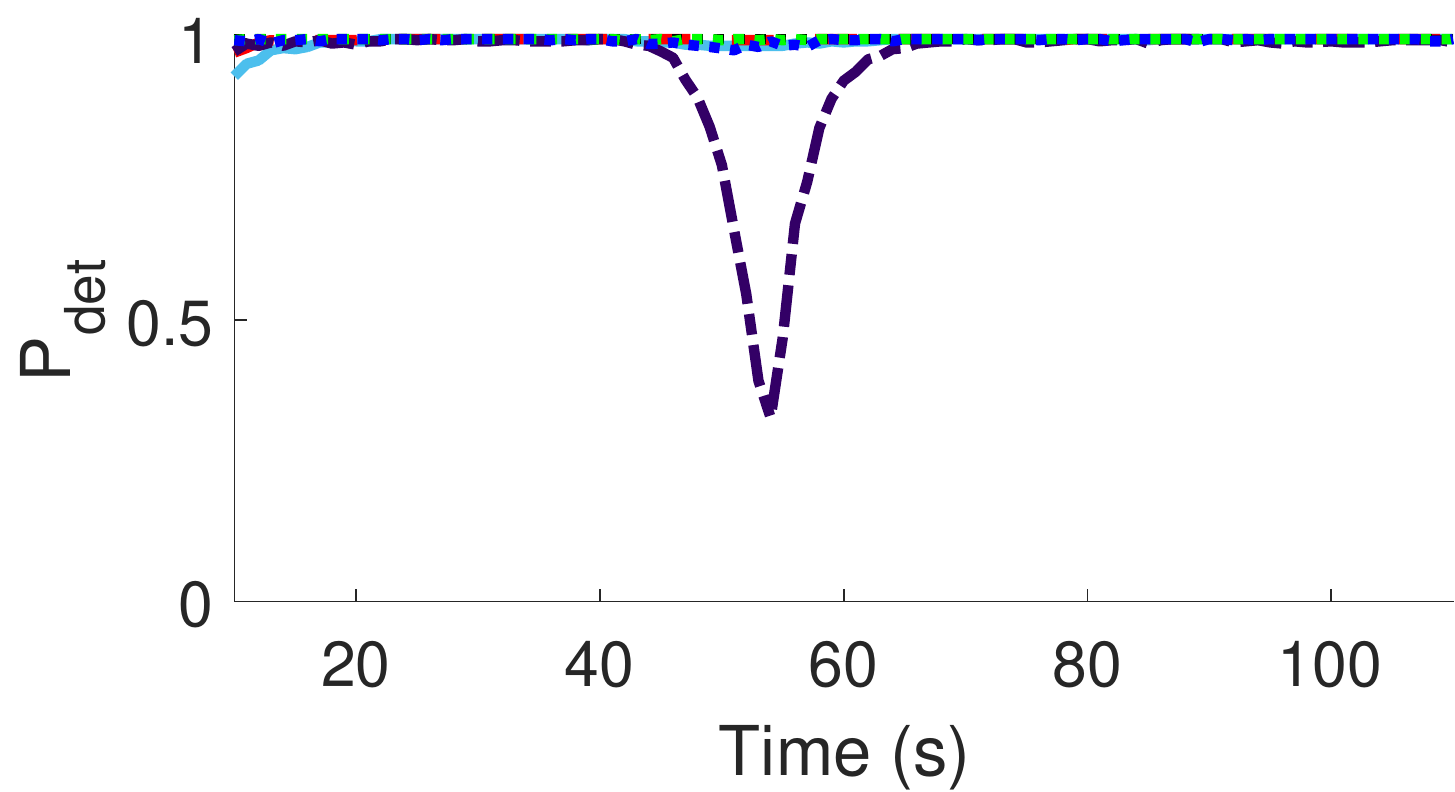}
    \caption{$R_{gap}=40, R_{HPS} = 60$}
    \end{subfigure}
    ~
    \begin{subfigure}[t]{0.32\textwidth}
        \centering
        \includegraphics[width=\textwidth]{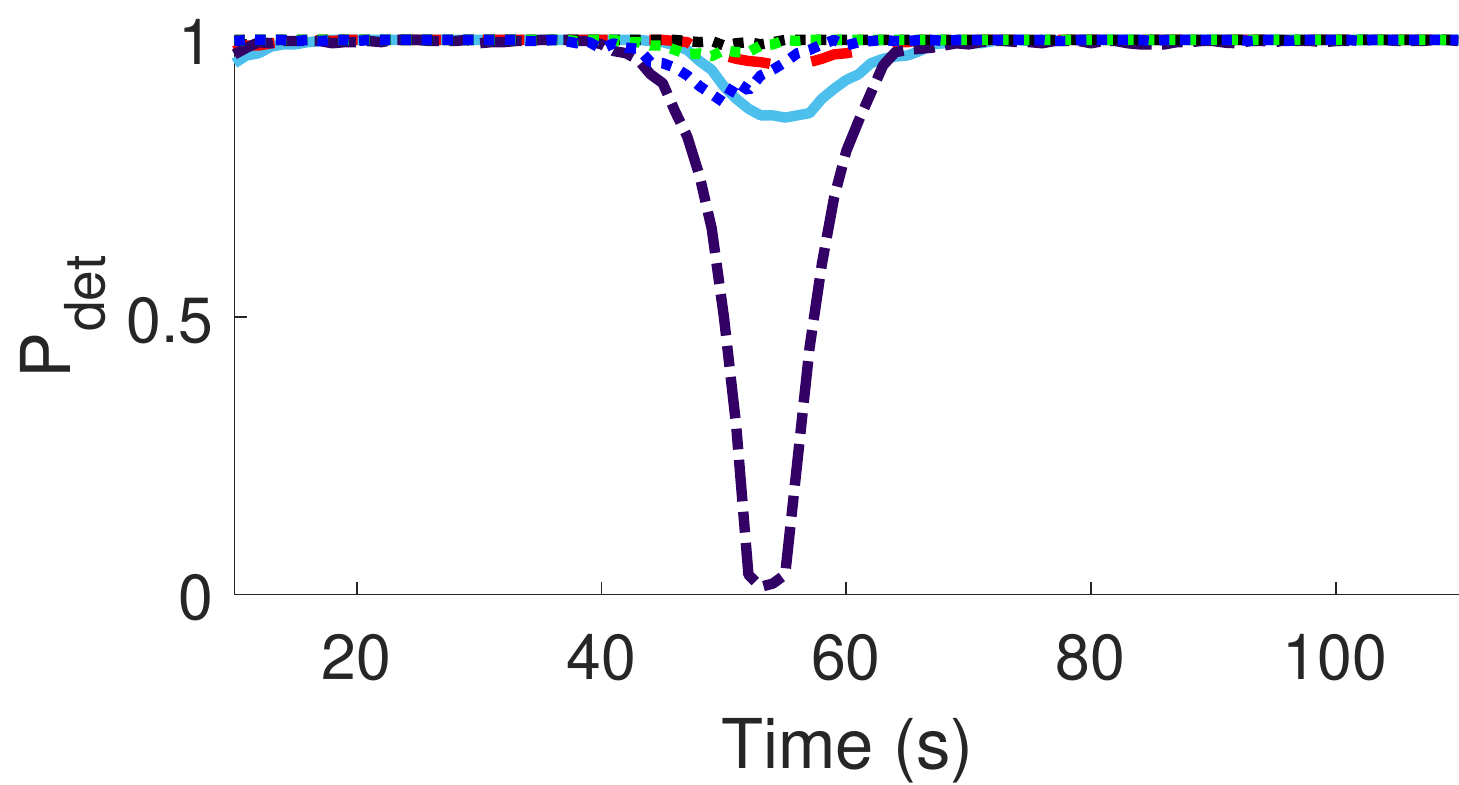}
     \caption{$R_{gap}=50, R_{HPS} = 60$}
    \end{subfigure}\vspace{9pt}

    \begin{subfigure}[t]{\textwidth}
        \centering
        \includegraphics[width=\textwidth]{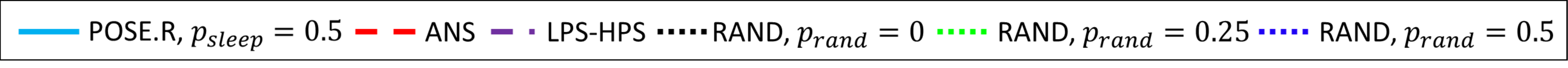}
    \end{subfigure}

    \caption{Probability of detection as target travels through a coverage gap around $t=50s$.}
    \label{fig:time_gap} \vspace{6pt}
\end{figure*}

\vspace{12pt}
\subsubsection{Network Resilience Comparison}\label{rescomp}
Fig.~\ref{fig:Snap_Shots} illustrates the workings of the POSE.R algorithm as the target travels through regions of high and low network densities as well as coverage gaps. It shows how the POSE.R algorithm selects the nodes and adapts their  HPS  sensing ranges to track the target when it travels through different regions. Fig.~\ref{fig:snap_25} shows a situation when the target is traveling in a high density region. In this situation, the  HPS  nodes are selected using EGDOP with a sensing range $R_{HPS}^{s_i}=R_1$. Figs. \ref{fig:snap_43}, \ref{fig:snap_50}, \ref{fig:snap_58}, \ref{fig:snap_68} and \ref{fig:snap_103} show situations when the target is traveling through low density regions or a coverage gap, i.e., the base coverage degree $\widetilde{D}_b(k+1)<N_{sel}$. In these situations, the selected nodes adjust their  HPS  sensing ranges to ensure target tracking. Thus, POSE.R enables the nodes to autonomously adapt their sensing ranges in an optimal manner to maintain tracking throughout the target's trajectory, even in the presence of low network densities and coverage gaps, thereby exhibiting resilience.

Fig.~\ref{fig:time_gap} compares the detection performance of POSE.R with other methods when the target travels through a
region where multiple spatially co-located nodes have failed or a coverage gap is present. A network with a density of $\rho=1.4e^{-3}$ was considered with a single target. To simulate a coverage gap, the nodes located within a circle of radius $R_{gap}\in\{30, 40, 50\}$ around the target's position at time $t=50s$, are assigned an initial energy value $E_0=0$. This simulates a group of ineffective sensor nodes creating a coverage gap of size $\geq R_{gap}$. As seen in Fig.~\ref{fig:time_gap}, the probability of detection, $P_{det}$, is presented for various $R_{gap}$ values and for different $R_{HPS}$ used by the other methods. In any single row of Fig.~\ref{fig:time_gap}, $R_{HPS}$ is fixed while $R_{gap}$ is increased. For any row, as $R_{gap}$ increases, the detection performance of the other methods deteriorate and their $P_{det}$ decreases and reaches zero when the target passes through the coverage gap. On the other hand, POSE.R yields a $P_{det}$ close to $1$, thus exhibiting resilience via adaptive node and range selection. When the other methods use a high $R_{HPS}$, as seen in a single column of Fig.~\ref{fig:time_gap}, their performance improves but at the expense of consuming more energy. This result indicates that the other methods lose the target for low  HPS  ranges when it travels through the coverage gap. However, POSE.R is able to continuously track the target by adaptive node selection and optimal sensor range selection.

\begin{figure*}[t!]
    \centering
    \begin{subfigure}[t]{0.31\textwidth}
        \centering
        \includegraphics[width=0.90\textwidth]{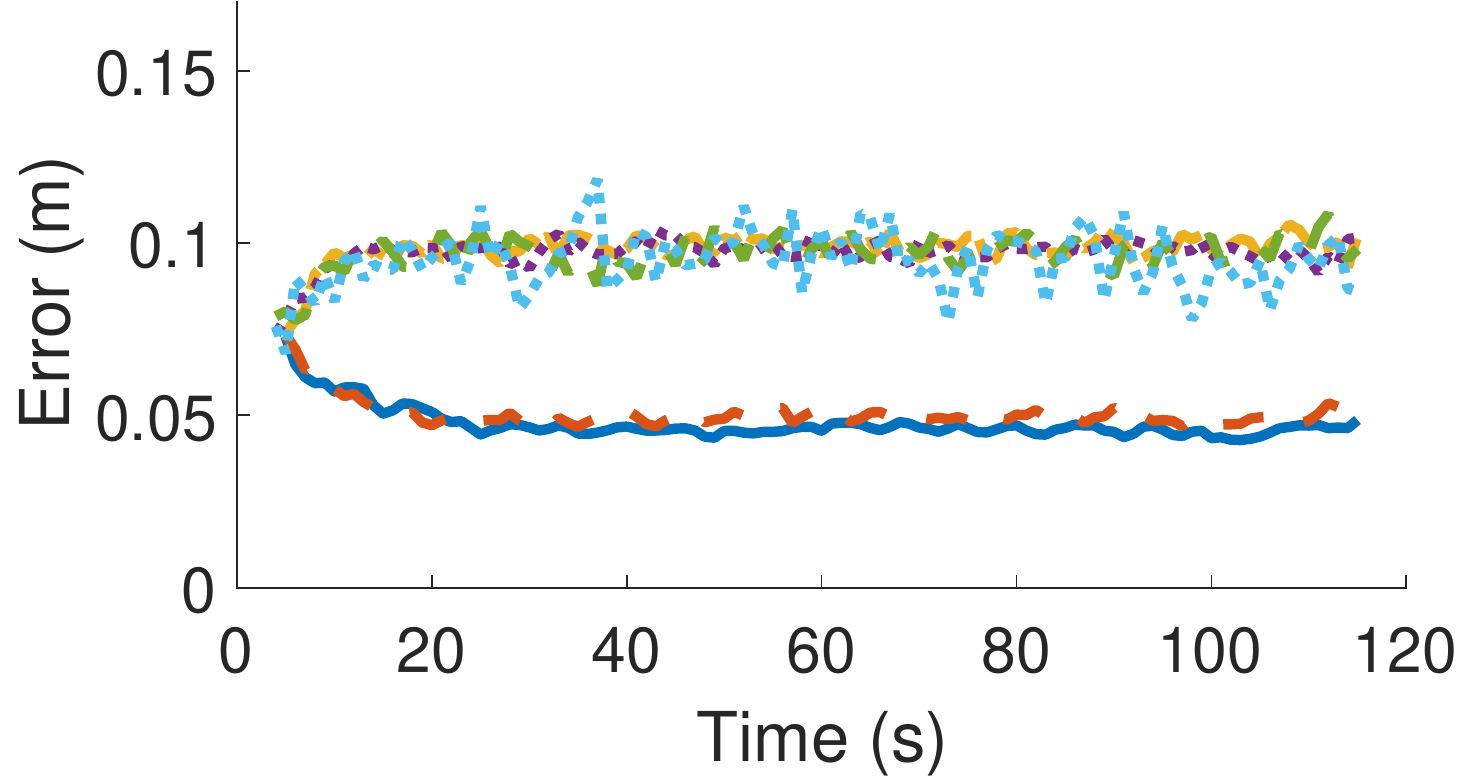}
        \caption*{$\textrm{i)}$ $R_{HPS}=30$}
    \end{subfigure}%
	\begin{subfigure}[t]{0.31\textwidth}
        \centering
        \includegraphics[width=0.90\textwidth]{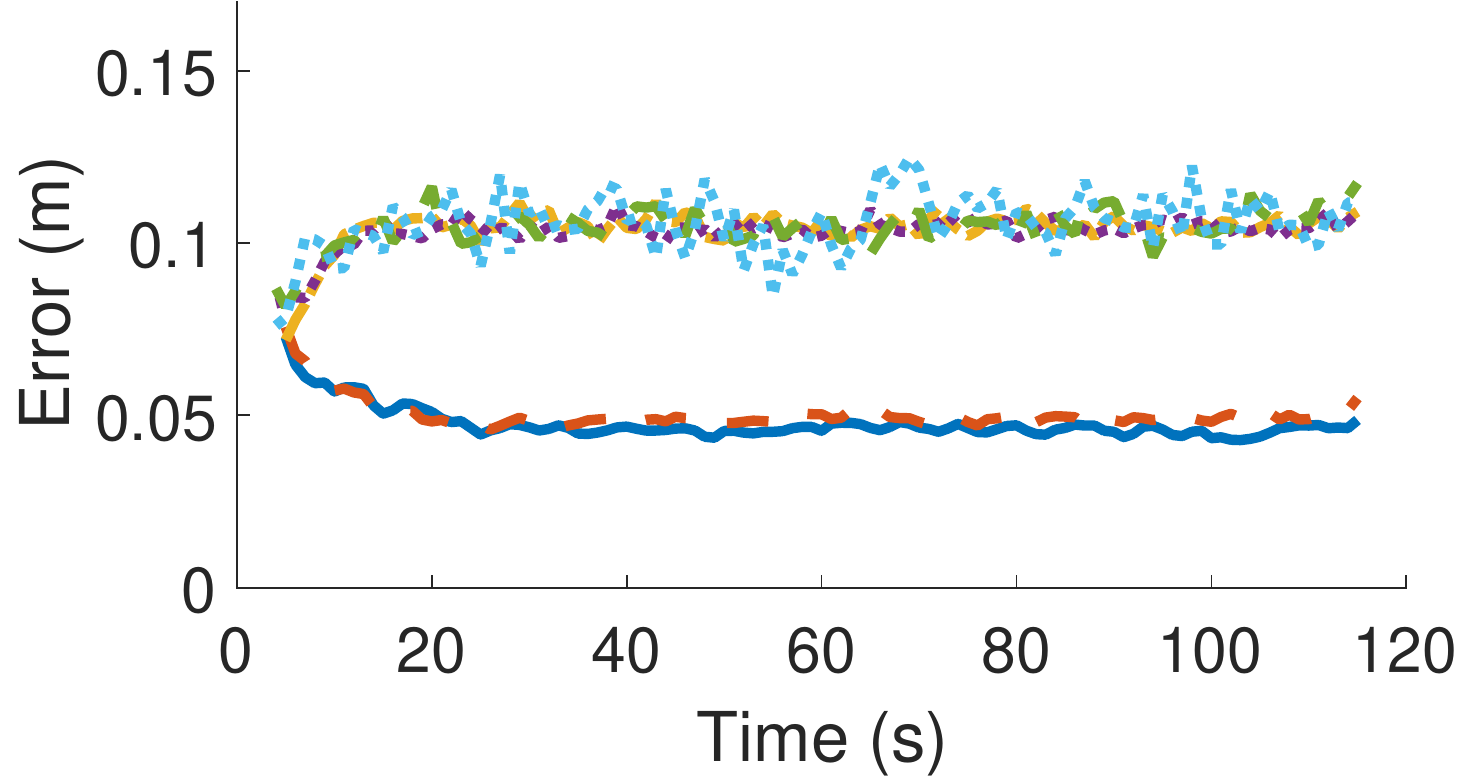}
        \caption*{$\textrm{ii)}$ $R_{HPS}=36$}
    \end{subfigure}%
    \begin{subfigure}[t]{0.31\textwidth}
        \centering
        \includegraphics[width=0.90\textwidth]{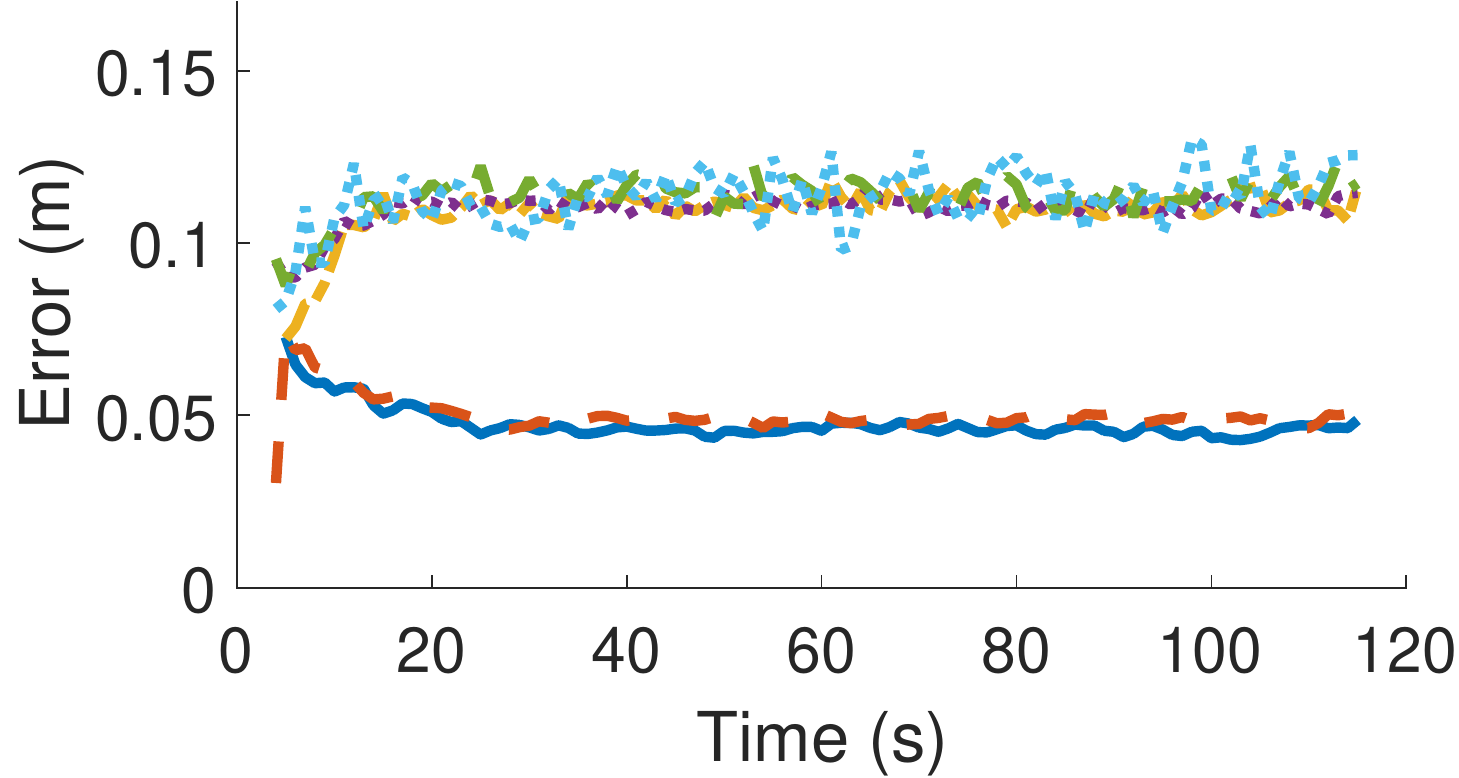}
        \caption*{$\textrm{iii)}$ $R_{HPS}=42$}
    \end{subfigure}%

	\begin{subfigure}[t]{0.31\textwidth}
        \centering
        \includegraphics[width=0.90\textwidth]{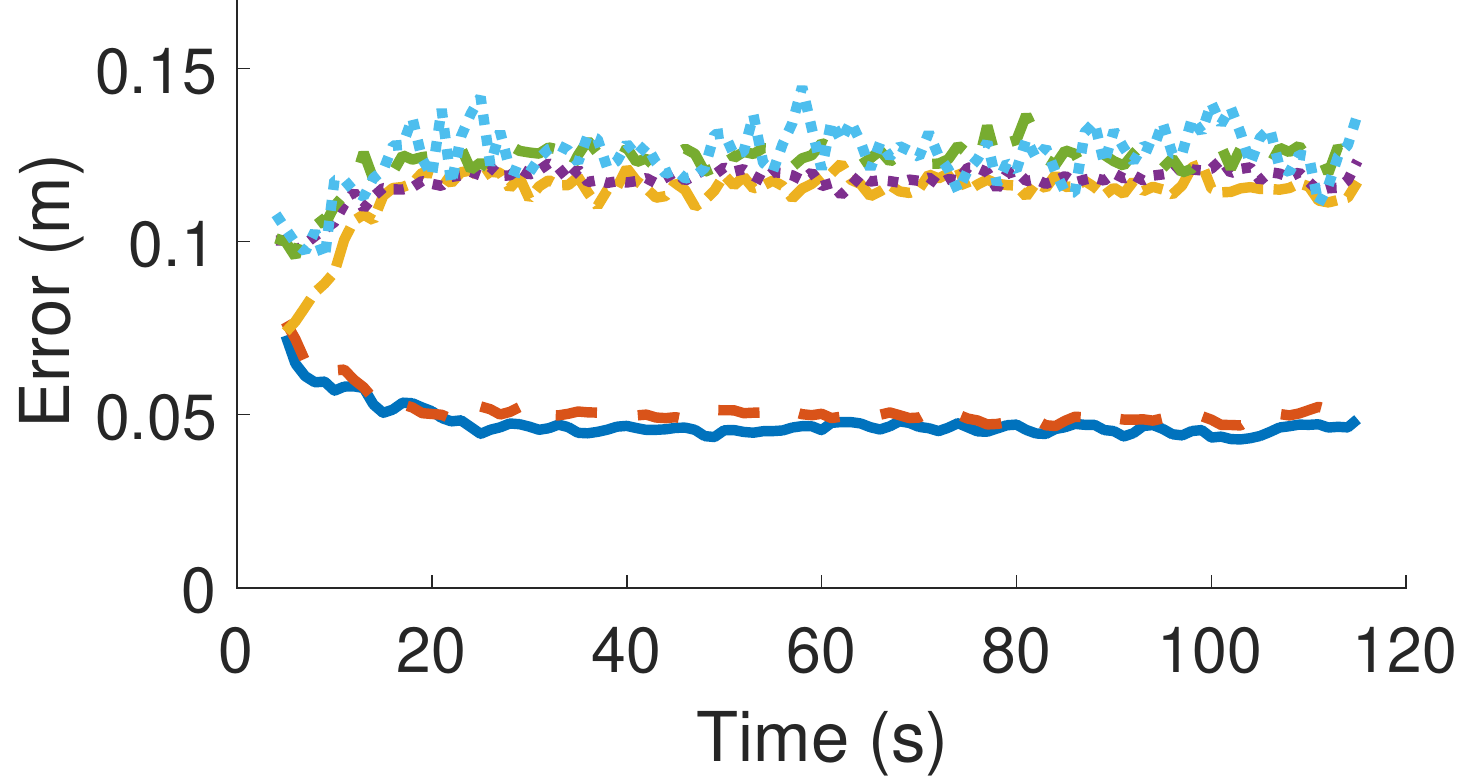}
        \caption*{$\textrm{iv)}$ $R_{HPS}=48$}
    \end{subfigure}%
	\begin{subfigure}[t]{0.31\textwidth}
        \centering
        \includegraphics[width=0.90\textwidth]{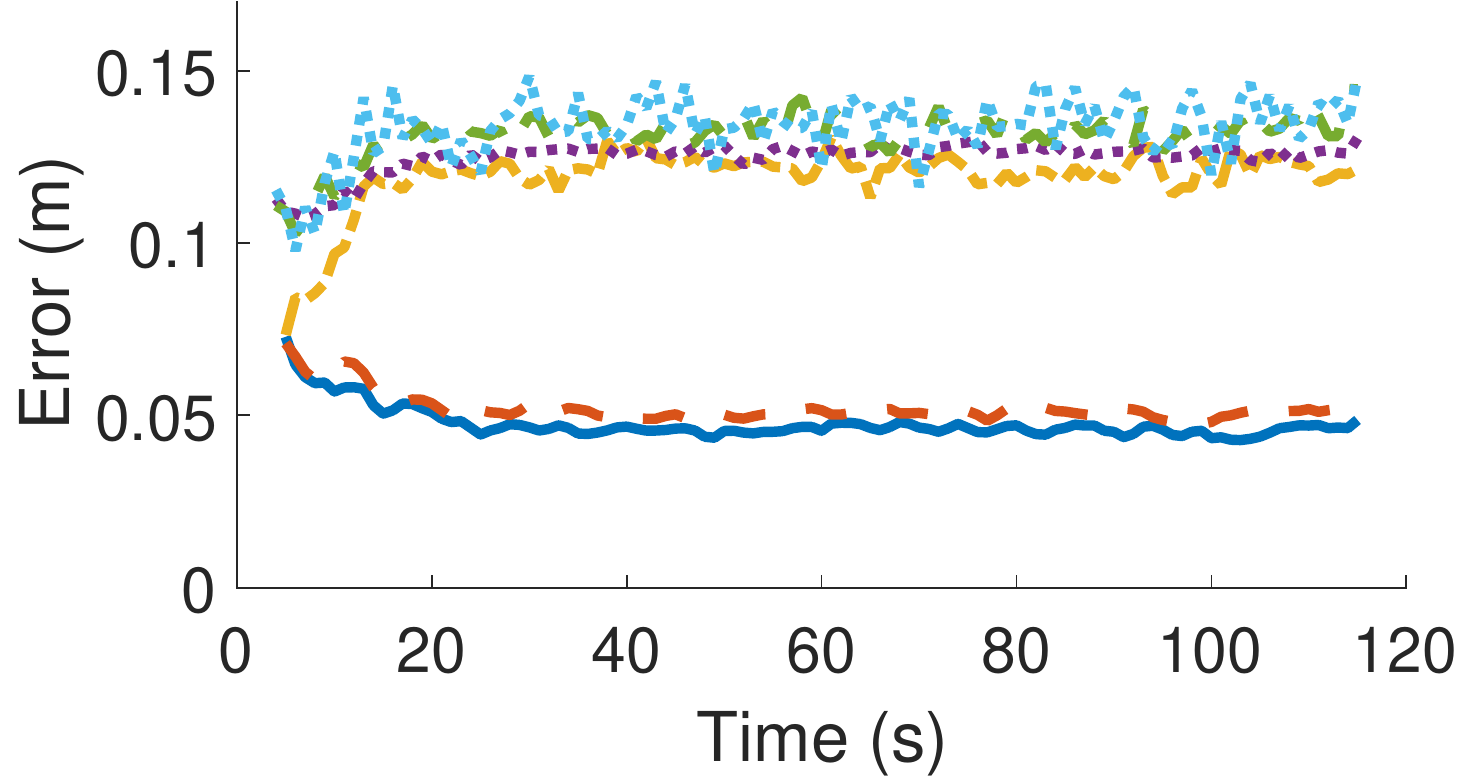}
        \caption*{$\textrm{v)}$ $R_{HPS}=52$}
    \end{subfigure}%
    \begin{subfigure}[t]{0.31\textwidth}
        \centering
        \includegraphics[width=0.90\textwidth]{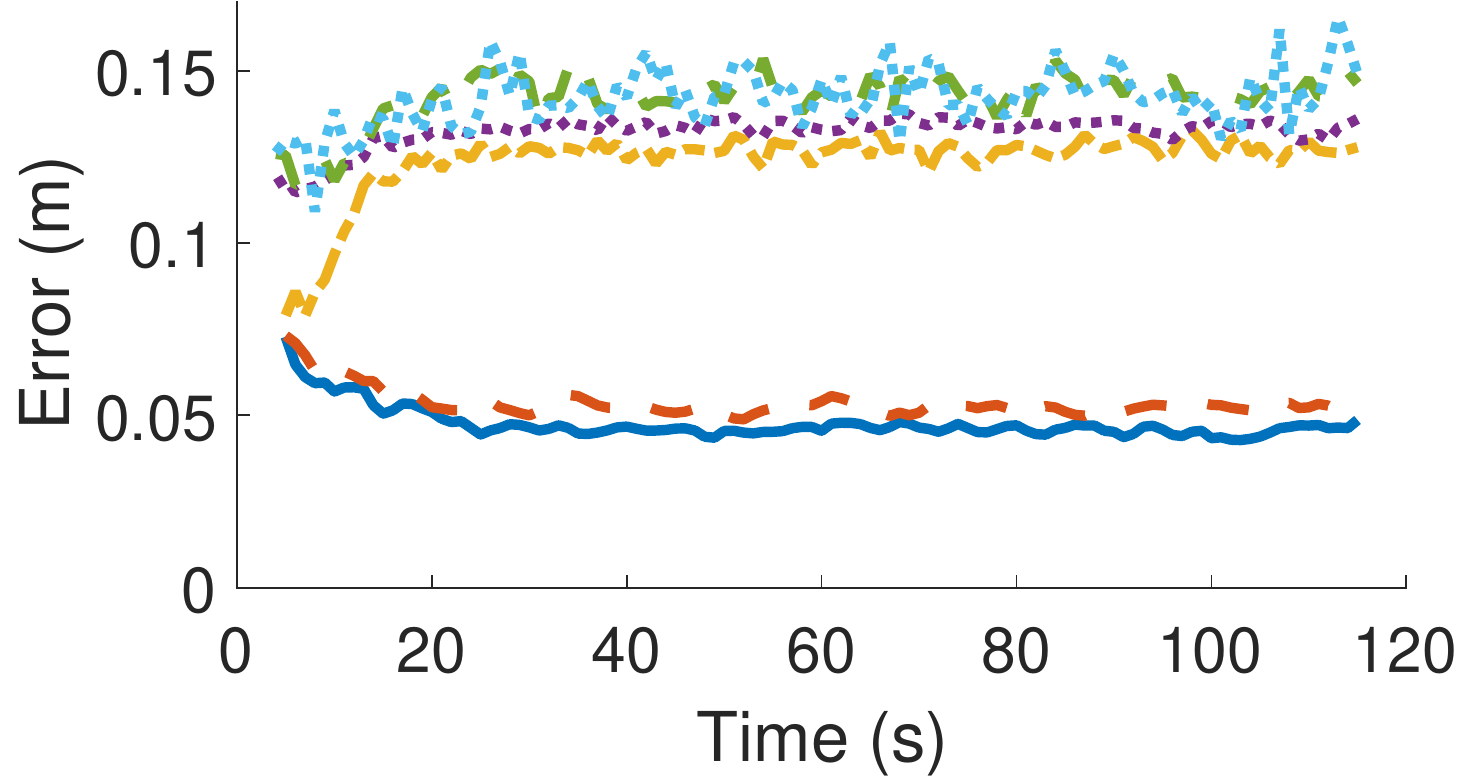}
        \caption*{$\textrm{vi)}$ $R_{HPS}=60$}
    \end{subfigure}%

    \caption*{a) Target Position Root Mean Squared Error: Comparison of POSE.R with other methods} \vspace{6pt}

    \begin{subfigure}[t]{0.31\textwidth}
        \centering
        \includegraphics[width=0.90\textwidth]{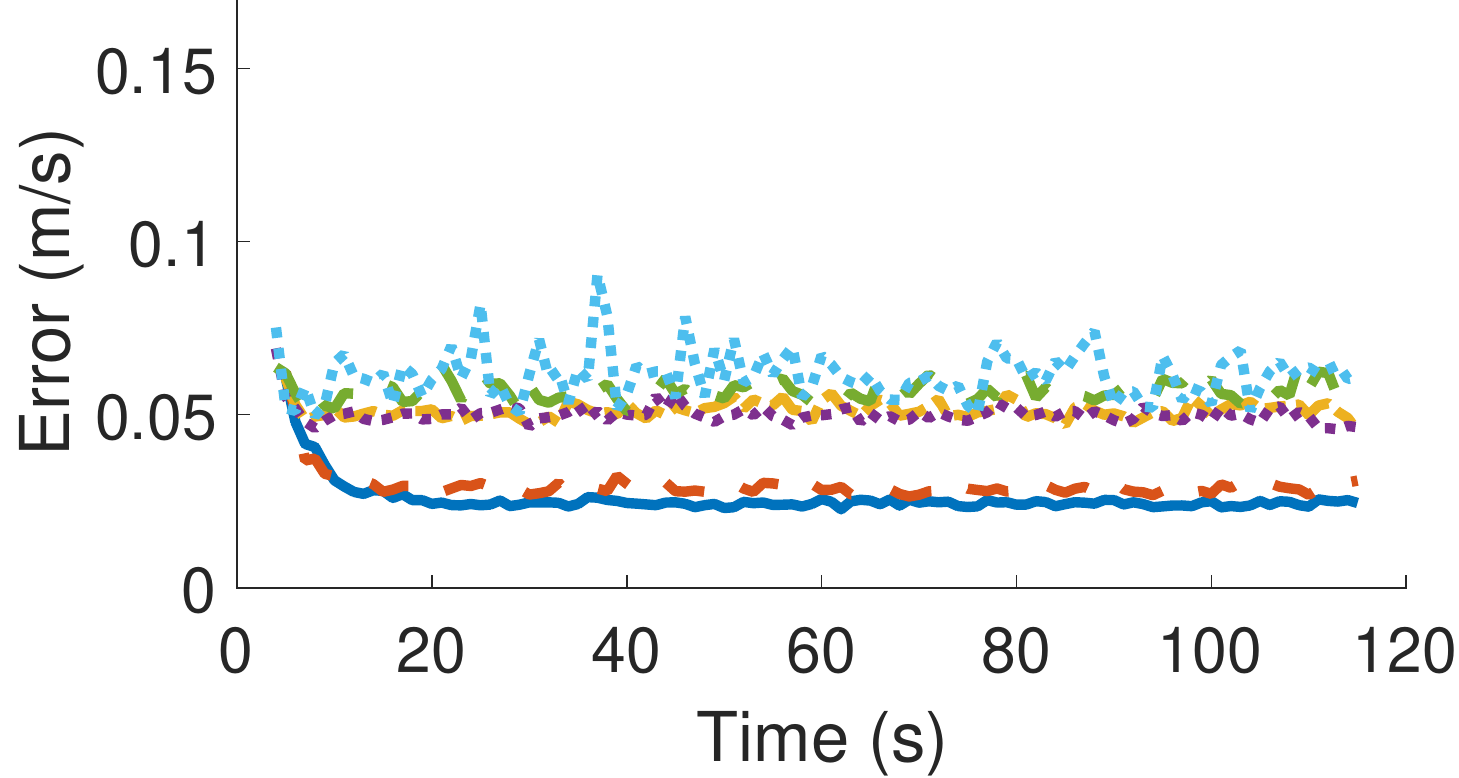}
        \caption*{$\textrm{i)}$ $R_{HPS}=30$}
    \end{subfigure}%
	\begin{subfigure}[t]{0.31\textwidth}
        \centering
        \includegraphics[width=0.90\textwidth]{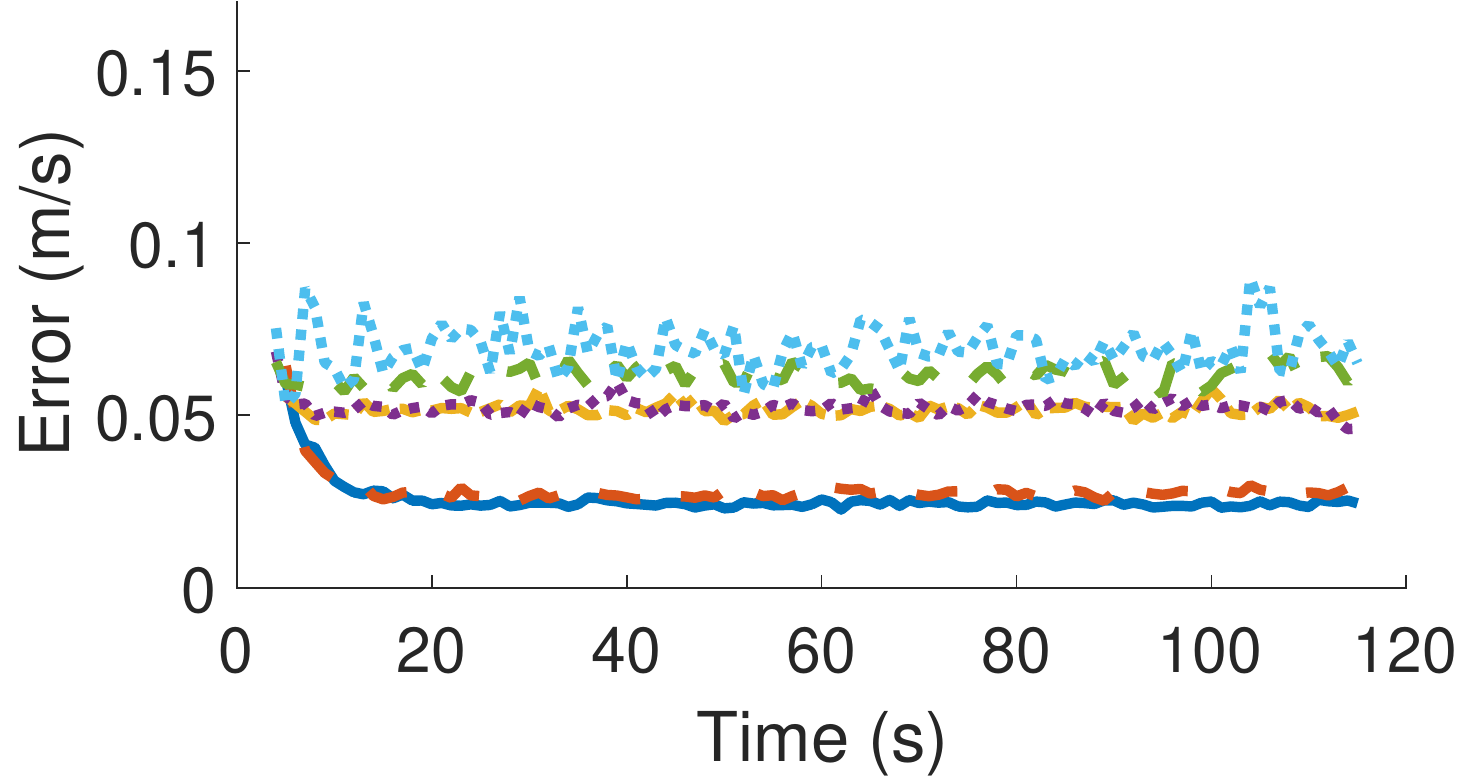}
        \caption*{$\textrm{ii)}$ $R_{HPS}=36$}
    \end{subfigure}%
    \begin{subfigure}[t]{0.31\textwidth}
        \centering
        \includegraphics[width=0.90\textwidth]{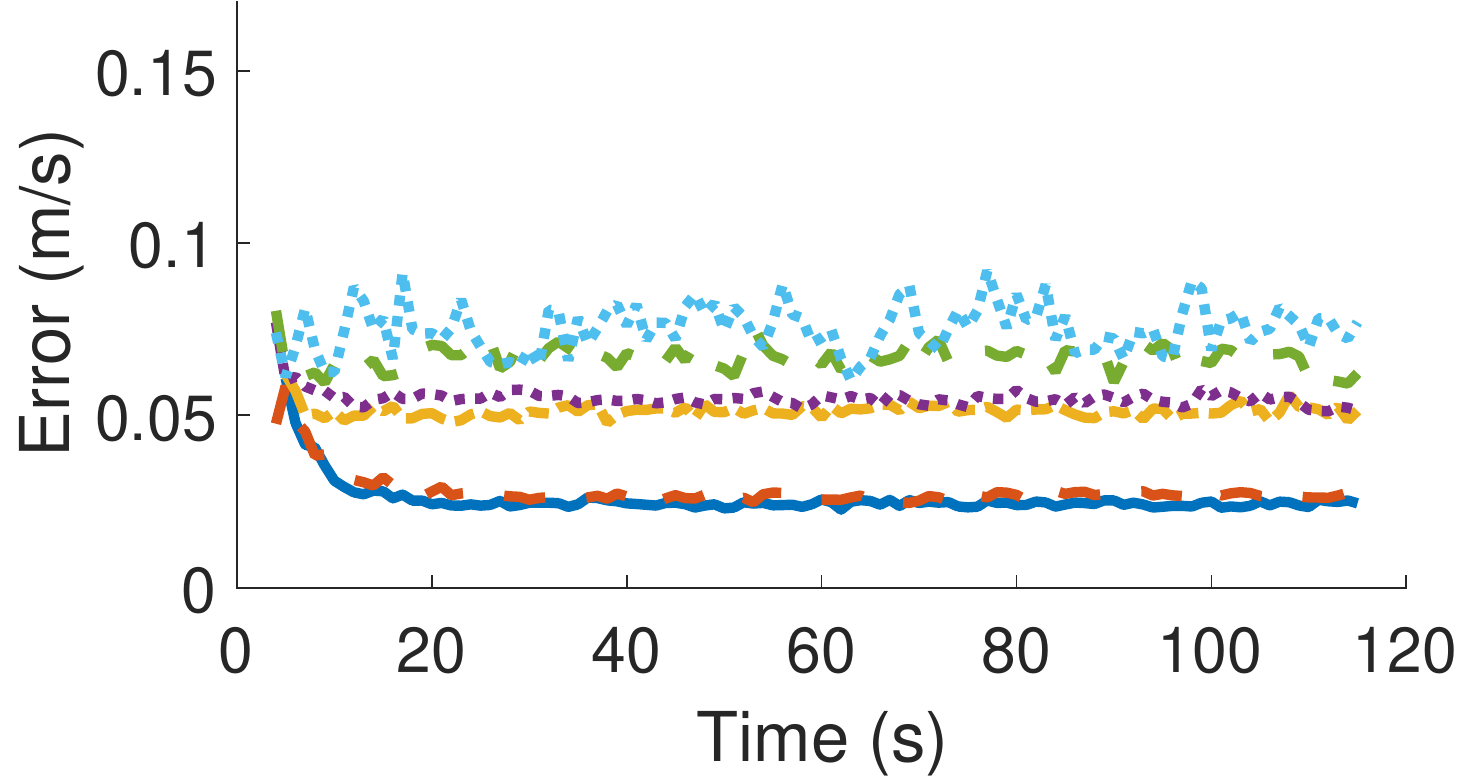}
        \caption*{$\textrm{iii)}$ $R_{HPS}=42$}
    \end{subfigure}%

	\begin{subfigure}[t]{0.31\textwidth}
        \centering
        \includegraphics[width=0.90\textwidth]{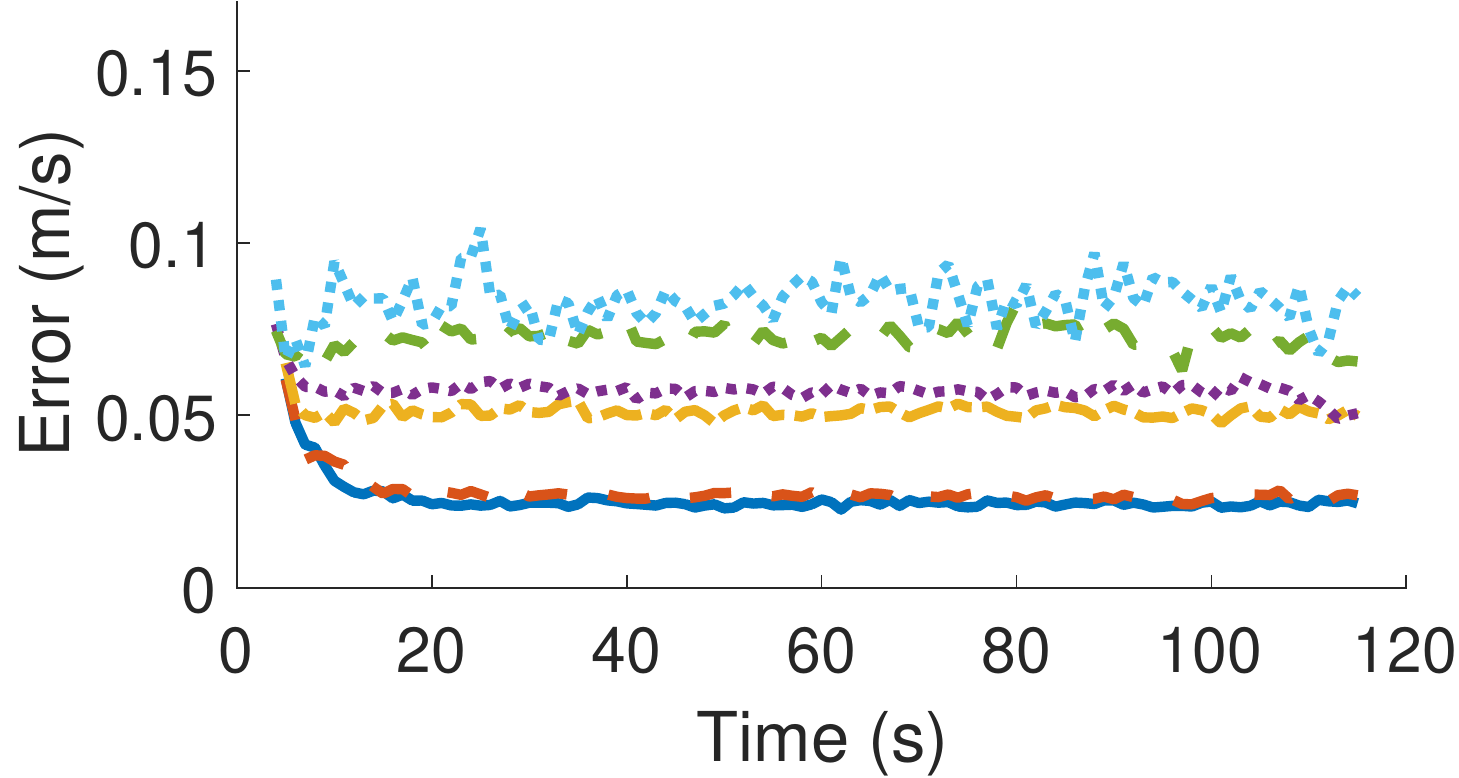}
        \caption*{$\textrm{iv)}$ $R_{HPS}=48$}
    \end{subfigure}%
	\begin{subfigure}[t]{0.31\textwidth}
        \centering
        \includegraphics[width=0.90\textwidth]{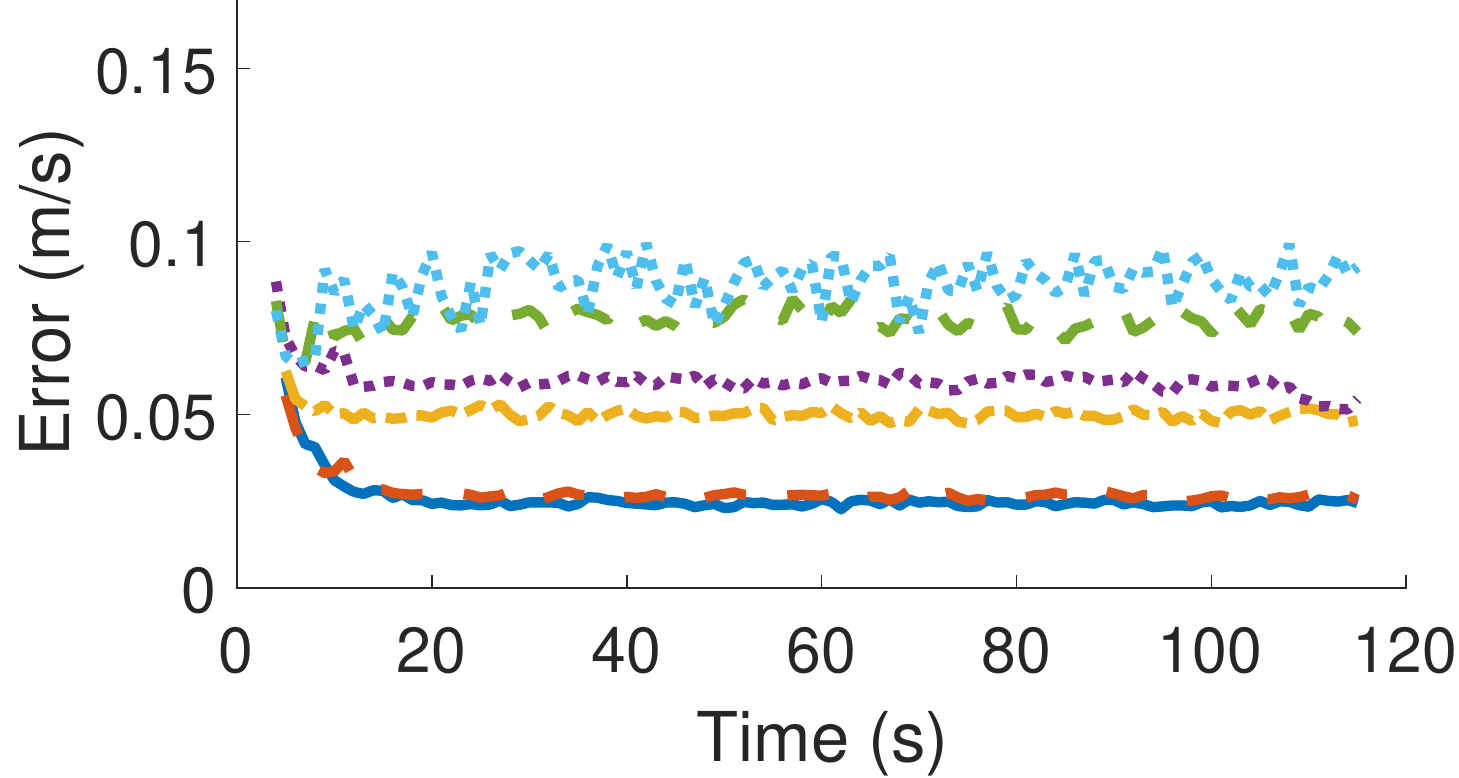}
        \caption*{v) $R_{HPS}=52$}
    \end{subfigure}%
    \begin{subfigure}[t]{0.31\textwidth}
        \centering
        \includegraphics[width=0.90\textwidth]{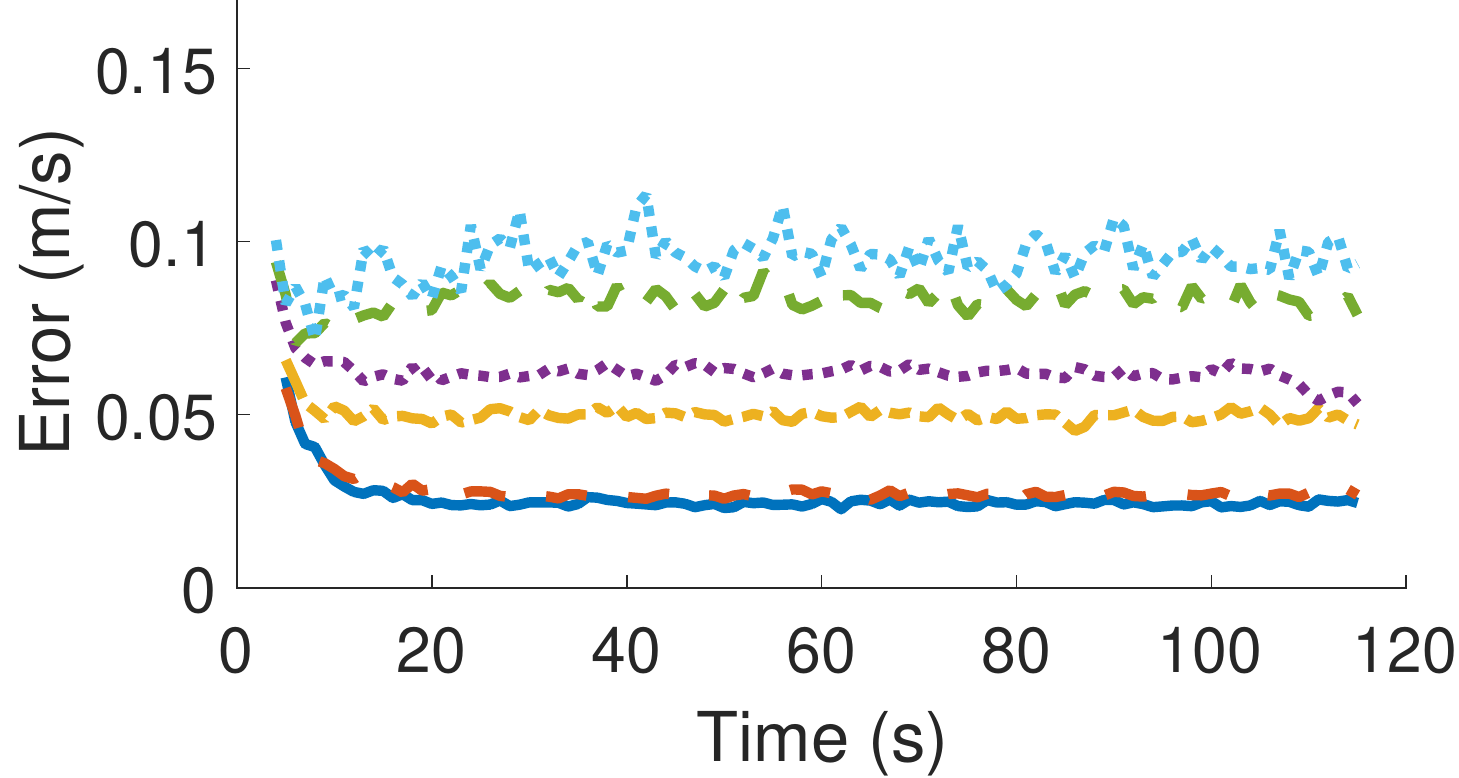}
        \caption*{$\textrm{vi)}$ $R_{HPS}=60$m}
    \end{subfigure}%
    \caption*{b) Target Velocity Root Mean Squared Error: Comparison of POSE.R with other methods} \vspace{6pt}

    \begin{subfigure}[t]{\textwidth}
    	\centering
        \includegraphics[width=0.75\textwidth]{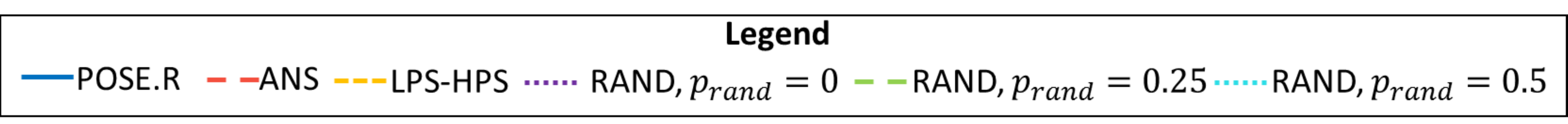}
    \end{subfigure}%
    \vspace{-3pt}
    \caption{Tracking Performance of POSE.R compared with other methods.} \label{fig:poser_pos_rmse} \vspace{-15pt}
\end{figure*}

\vspace{12pt}
\subsubsection{Tracking Performance Comparison} Fig.~\ref{fig:poser_pos_rmse} compares the tracking performance of the POSE.R algorithm with the other methods in terms of position and velocity root mean square error (RMSE), respectively. For this comparison, based on the missed detection characteristics, the parameters $\rho=1.4\times 10^{-3}$ and $p_{sleep}=0.5$ were chosen to ensure low missed detection rates. As seen, the POSE.R and ANS algorithms achieve significantly lower position and velocity RMSEs as compared to the other methods. This is because the ANS and POSE.R algorithms perform the same tracking and fusion strategies, which reduce the covariance error in the target estimates. The key difference between the ANS and POSE.R algorithms is that ANS selects the set of HPS nodes using GDOP that minimizes the predicted RMSE error, while the POSE.R algorithm incorporated energy remaining into the cost function. However, adding energy into the cost function does not degrade the tracking performance. Additionally, it can be seen that as the HPS sensing range increases, the RMSE of the ANS, LPS-HPS, and Random methods increases, while for POSE.R it stays the same. This is because the measurement noise of the HPS devices increases with distance.

\vspace{12pt}
\section{Conclusion} \label{sec:con}
This paper developed the POSE.R algorithm for distributed control of a heterogeneous sensor network for resilient and energy-efficient target tracking. The distributed network control approach consists of detecting and fusing target's state information to predict its trajectory, which is used to opportunistically track the target using a dynamic cluster of optimal sensor nodes. In the areas of high node density, the POSE.R algorithm provides energy-efficiency by tracking the target using optimal sensors in terms of remaining energy and geometric diversity around the target. In the areas of low node density or coverage gaps, the POSE.R algorithm provides resilience, that imparts the capability of self-healing to track the target by expanding the sensing ranges of surrounding sensors. The performance of the POSE.R algorithm was compared against existing methods using several metrics including missed detection rates, network lifetime and tracking performance. The simulation experiments yield that the POSE.R algorithm significantly improves the network lifetime, provides resilient tracking in presence of coverage gaps, and produces very low tracking errors and missed detection rates.

\bibliographystyle{IEEEtran}
\bibliography{POSE-R_BIB}

\appendices


\section{EGDOP Comparison with GDOP and ME} \label{app:EGDOP}
\vspace{10pt}
In this section, the properties of the proposed EGDOP node selection method are compared with the GDOP and the Max Energy (ME) based node selection methods.

The GDOP node selection method \cite{K2006} selects $N_{sel}$ sensors from the set $\mathcal{S}_{can}$ that are geometrically diverse and minimize the predicted covariance error, to output the optimal node set $\mathcal{S}^*_{GDOP}$. On the other hand, the ME node selection method selects $N_{sel}$ sensors from $\mathcal{S}_{can}$ that have the maximum remaining energies, to form the node set $\mathcal{S}^*_{ME}$. To compare the above three methods, we simulated a high density network with a single target and conducted $500$ Monte Carlo runs. During each run, each node was assigned a random initial energy drawn from a uniform distribution to simulate the stochastic effects of energy variations amongnst nodes due to battery uncertainties or long deployment. For each run, each method was executed simultaneously for the same network and the resulting energies and covariance errors of the corresponding optimal sets were evaluated.

The results achieved are presented in Table~\ref{tb:EGDOP}, where the columns represent different bounds of the initial energy distribution. Several performance metrics were evaluated as presented below.

\begin{itemize}
\item \textit{Percentage of Energy Savings by EGDOP as Compared to GDOP}:
\begin{eqnarray}
E_{savings}=\frac{1}{N_{sel}E_0}\left(\sum_{s_i\in \mathcal{S}^*_{EGDOP}} \hat{E}_R^{s_i}(k+1)  -  \sum_{s_i\in\mathcal{S}^*_{GDOP}} \hat{E}_R^{s_i}(k+1)\right)\times 100\%,
\end{eqnarray}
where $\hat{E}_R^{s_i}(k+1) = E_0-E^{s_i}(k)-e_{HPS}R_1\Delta T$ is the predicted remaining energy at time $k+1$. As seen in Table~\ref{tb:EGDOP}, the energy savings of EGDOP vs. GDOP are always positive and the savings are higher if the variance of initial energy distribution is higher. This result shows that EGDOP selects healthy nodes with higher remaining energies as compared to GDOP.

\vspace{6pt}
\item \textit{Energy-efficiency of EGDOP and GDOP as Compared to ME}:
\begin{eqnarray}
E_{eff}(EGDOP) & = &\frac{\sum_{s_i\in \mathcal{S}^*_{EGDOP}} \hat{E}_R^{s_i}(k+1)}{\sum_{s_i\in \mathcal{S}^*_{ME}} \hat{E}_R^{s_i}(k+1)}  \\
E_{eff}(GDOP) & = & \frac{\sum_{s_i\in \mathcal{S}^*_{GDOP}} \hat{E}_R^{s_i}(k+1)}{\sum_{s_i\in \mathcal{S}^*_{ME}} \hat{E}_R^{s_i}(k+1)}.
\end{eqnarray}
These represent the efficiency of the EGDOP and GDOP methods as compared to ME and the results are presented in the second and third row of Table~\ref{tb:EGDOP}, respectively. Clearly, the efficiency of EGDOP is always higher than GDOP. Also, if the variance of initial energies is low, then both EDGOP and GDOP are very energy-efficient. However, for higher variances of initial energies, EGDOP achieves higher energy-efficiency than GDOP. Thus, EGDOP will result in even energy distribution in the network by always selecting the high energy nodes.

\vspace{6pt}
\item \textit{Kullback-Leibler (KL) Divergence between GDOP and EGDOP and GDOP and ME}:
\begin{eqnarray}
D_{KL}(\mathcal{N}_{GDOP}||\mathcal{N}_{\bigstar}) = \int_{-\infty}^{\infty} \int_{-\infty}^{\infty} \mathcal{N}_{GDOP}log\left(\frac{\mathcal{N}_{GDOP}}{\mathcal{N}_{\bigstar}}\right)dy dx,
\end{eqnarray}
where $\mathcal{N}_{GDOP} = \mathcal{N}(H\hat{\mathbf{x}}^{s_i}(k+1|k),\hat{\mathbf{\Sigma}}_{GDOP})$, $\mathcal{N}_{\bigstar} = \mathcal{N}(H\hat{\mathbf{x}}^{s_i}(k+1|k),\hat{\mathbf{\Sigma}}_{\bigstar})$, and
\begin{equation}
\mathbf{\Sigma}_\bigstar = \left(\sum_{s_h\in \mathcal{S}_{\bigstar}^*} \frac{1}{\sigma_{\theta}^2 r_{s_h}^2} \left[ \begin{array}{cc} s^2\phi_{s_h} & -s\phi_{s_h}c \phi_{s_h} \\ -s\phi_{s_h}c \phi_{s_h} & c^2\phi_{s_h} \end{array} \right]\right)^{-1}
\end{equation}
is the state covariance error achieved with the set of nodes selected using the method $\bigstar\in \{EGDOP, ME\}$. This measure compares the predicted covariance error of GDOP vs EDGOP and ME. As seen in the last two rows of Table~\ref{tb:EGDOP}, the KL divergence between GDOP and EGDOP is much smaller than between GDOP and ME. This means that the EGDOP method is not losing much divergence information by incorporating energy into the cost function. This in turn implies that the estimation error resulting from EGDOP is similar to that of GDOP, while the purely energy based method ME results in high estimation error.

\end{itemize}

Overall, these results indicate that the EGDOP method is more energy-efficient than the GDOP method and still selects nodes that are geometrically diverse which result in low estimation error.

\begin{table*}[t]
\centering
\caption{Performance Measures to Evaluate Energy Geometric Dilution Of Precision (EGDOP) Method}
\label{tb:EGDOP}
\begin{tabular}{ccccccc}
\hline
\multicolumn{1}{l}{}                  & \multicolumn{6}{c}{\textbf{Bounds of initial energy distribution among nodes}}                                           \\
                                       & $[0.5, 1]E_0$ & $[0.6, 1]E_0$ & $[0.7, 1]E_0$ & $[0.8, 1]E_0$ & $[0.9, 1]E_0$ & $[1, 1]E_0$ \\ \hline
\textbf{$E_{savings}$ (\%)} & 7.301             & 5.692            & 3.920            & 2.482            & 1.177            & 0.004         \\
\textbf{$E_{eff}(EGDOP)$}        & 0.962             & 0.968            & 0.977            & 0.985            & 0.992            & 1.00          \\
\textbf{$E_{eff}(GDOP)$}         & 0.933             & 0.946            & 0.962            & 0.976            & 0.988            & 1.00          \\
\textbf{$D_{KL}(GDOP||EGDOP)$}         & 0.0315            & 0.0317           & 0.0309           & 0.0307           & 0.0309           & 0.0424        \\
\textbf{$D_{KL}(GDOP||ME)$}    & 0.3190            & 0.3278           & 0.3205           & 0.3154           & 0.3138           & 0.4093        \\
\hline
\end{tabular} \vspace{-0pt}
\end{table*}

\section{Proof of Theorem \ref{thm:main}} \label{app:proof}
\vspace{10pt}
Since the target's predicted position has uncertainty, we define the expected coverage degree of the target given a joint action $a'$ as follows.

\begin{defn}[\textbf{Expected Coverage Degree}]
The expected coverage degree of target for a joint action $a'$ is defined as
\begin{equation}
\mathcal{E}(D(k+1)|a')=\sum_{j=1}^{N'_{sel}}j \sum_{g=1}^U \sum_{h=1}^V  \omega_{g,h}\mathcal{I}(J_{g,h}(a')=j)
\end{equation}
where $\mathcal{I}$ is an indicator function that equals to $1$ when $J_{g,h}(a')=j$ and $0$ otherwise; and $N'_{sel}$ is the number of players.
\end{defn}

Before proving the theorem, we need  the following lemma that allows us to determine when a node will select a new action over its previous action.

\begin{lem} \label{lem}
A node $s_i$ will switch its action from $a_i'$ to $a_i''$, where $a_i''>a_i'$, if
\begin{eqnarray}
\mathcal{E}\Big(D(k+1)|a''\Big)-\mathcal{E}\Big(D(k+1)|a'\Big) > \frac{\Delta R}{\Delta b_1 N'_{sel}R_L},
\end{eqnarray}
\end{lem}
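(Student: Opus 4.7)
The plan is to reduce the claim to a direct comparison inside the potential function. By Theorem~\ref{theorem:potentialgames} the marginal-contribution utility is perfectly aligned with $\Phi$, so ``$s_i$ switches from $a_i'$ to $a_i''$'' is equivalent to $\Phi(a_i'',a_{-i})>\Phi(a_i',a_{-i})$. I will start by differencing Eq.~(\ref{eq:potential_function}), obtaining a coverage contribution $\sum_{g,h}\omega_{g,h}[B_{g,h}(J_{g,h}(a''))-B_{g,h}(J_{g,h}(a'))]$ from which an energy penalty $[E_c(a_i'')-E_c(a_i')]/[N'_{sel}E_c(R_L)]$ is subtracted.

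For the energy penalty I would plug Eq.~(\ref{eq:elin}) into Eq.~(\ref{eq:energy_game}). Since both $a_i'$ and $a_i''$ are positive HPS ranges, $E_c(a)=w\,a\,\Delta T$ on each side and the common factor $w\,\Delta T$ cancels, leaving $(a_i''-a_i')/(N'_{sel}R_L)$. Taking the minimal admissible gap $a_i''-a_i'=\Delta R$ yields the constant $\Delta R/(N'_{sel}R_L)$ that appears on the right-hand side of the lemma.

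For the coverage term I would first observe the identity $\mathcal{E}(D(k+1)|a)=\sum_{j}j\sum_{g,h}\omega_{g,h}\,\mathcal{I}(J_{g,h}(a)=j)=\sum_{g,h}\omega_{g,h}J_{g,h}(a)$, so weighted cell-count changes are exactly changes in the expected coverage degree. Because only $s_i$ has moved and its range has grown, $J_{g,h}(a'')\ge J_{g,h}(a')$ on every cell. In the regime where each such count stays at or below $N_{sel}$, Eq.~(\ref{covfunction}) collapses to $B_{g,h}(J)=\Delta b_1 J$ and the coverage difference equals $\Delta b_1[\mathcal{E}(D(k+1)|a'')-\mathcal{E}(D(k+1)|a')]$. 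Combining with the energy calculation turns $\Phi(a'',a_{-i})-\Phi(a',a_{-i})>0$ into $\Delta b_1[\mathcal{E}(D(k+1)|a'')-\mathcal{E}(D(k+1)|a')]>\Delta R/(N'_{sel}R_L)$, which rearranges to the stated inequality.

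The main obstacle I anticipate is handling cells whose count crosses the kink at $J_{g,h}=N_{sel}$, where the slope of $B_{g,h}$ switches to $-\Delta b_2$ and the clean equality above degrades to $B_{g,h}(J'')-B_{g,h}(J')\le\Delta b_1(J''-J')$. I plan to argue that the kink can only dampen the coverage gain relative to the purely linear regime, so the linear-regime value $\Delta b_1[\mathcal{E}(D(k+1)|a'')-\mathcal{E}(D(k+1)|a')]$ remains a valid upper envelope of the true gain; this keeps the bound in the lemma functioning as the intended switching threshold used downstream in Theorem~\ref{thm:main}. A minor secondary step will be the edge case $a_i'=0$, where the extra $e_{LPS}\Delta T$ term in Eq.~(\ref{eq:energy_game}) contributes only a bounded correction that the same $\Delta R/(\Delta b_1 N'_{sel}R_L)$ threshold absorbs.
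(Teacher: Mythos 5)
Your proposal follows essentially the same route as the paper's proof: it differences the marginal-contribution utility (equivalently the potential function), lower-bounds the energy penalty by $\Delta R/(N'_{sel}R_L)$ via the linear HPS energy model $E_c(a)=w\,a\,\Delta T$, restricts the coverage term to the regime $J_{g,h}\le N_{sel}$ where $B_{g,h}(J)=\Delta b_1 J$ so that the weighted cell-count change equals $\Delta b_1\big[\mathcal{E}(D(k+1)|a'')-\mathcal{E}(D(k+1)|a')\big]$, and rearranges. Your explicit treatment of the kink at $J_{g,h}=N_{sel}$ and of the $a_i'=0$ energy term is marginally more careful than the paper, which simply asserts that a node is motivated to increase its action only when $J_{g,h}\le N_{sel}$ for all $g,h$, but the decomposition and the resulting bound are identical.
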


\begin{proof}

A node $s_i$ will switch its action from $a_i'$ to $a_i''$, where $a_i''>a_i'$, if
\begin{eqnarray}\label{eq:utility_change}
\mathcal{U}_i(a_i'',a_{-i})-\mathcal{U}_i(a_i',a_{-i}) > 0.
\end{eqnarray}

Using Eq. (\ref{eq:utility}), the above condition becomes
\begin{eqnarray}\label{eq:utility_change2}
\sum_{g=1}^{U} \sum_{h=1}^{V} \omega_{g,h}\Big(B_{g,h}\big(J_{g,h}(a_i'',a_{-i})\big) - B_{g,h}\big(J_{g,h}(a_i',a_{-i})\big)\Big) - \frac{E_c(a_i'')-E_c(a_i')}{N'_{sel} E_c(R_L)} > 0.
\end{eqnarray}

Now, examining these terms individually, the target coverage achieved for action $a_i'$ is
\begin{eqnarray}
\sum_{g=1}^{U} \sum_{h=1}^{V} \omega_{g,h} B_{g,h}\big(J_{g,h}(a_i',a_{-i})\big) &=& \sum_{j=1}^{N'_{sel}} B_{g,h}(j) \sum_{g=1}^U \sum_{h=1}^V \omega_{g,h}\mathcal{I}\big(J_{g,h}(a_i',a_{-i})=j\big).
\end{eqnarray}
Define $\chi'(j)=Pr(D(k+1)=j|a')=\sum_{g=1}^U \sum_{h=1}^V \omega_{g,h}\mathcal{I}(J_{g,h}(a')=j)$. A node is motivated to increase its action only when $J_{g,h}\leq N_{sel}, \forall g,h$. Thus, using $B_{g,h}(j) = \Delta b_1 j$, we get
\begin{eqnarray}
\sum_{g=1}^{U} \sum_{h=1}^{V} \omega_{g,h} B_{g,h}\big(J_{g,h}(a_i',a_{-i})\big) &=& \Delta b_1 \sum_{j=1}^{N_{sel}} j\chi'(j).
\end{eqnarray}

Similarly, the target coverage achieved for action $a_i''$ is
\begin{eqnarray}
\sum_{g=1}^{U} \sum_{h=1}^{V} \omega_{j,h} B_{g,h}\big(J_{g,h}(a_i'',a_{-i})\big) &=& \sum_{j=1}^{N'_{sel}} B_{g,h}(j) \sum_{g=1}^U \sum_{h=1}^V \omega_{g,h}\mathcal{I}\big(J_{g,h}(a_i'',a_{-i})=j\big).
\end{eqnarray}
Define $\chi''(j)=Pr(D(k+1)=j|a'')=\sum_{g=1}^U \sum_{h=1}^V \omega_{g,h}\mathcal{I}(J_{g,h}(a'')=j)$. Using $B_{g,h}(j) = \Delta b_1j$ we get
\begin{eqnarray}
\sum_{g=1}^{U} \sum_{h=1}^{V} \omega_{j,h} B_{g,h}\big(J_{g,h}(a_i'',a_{-i})\big) &=& \Delta b_1 \sum_{j=1}^{N_{sel}} j\chi''(j).
\end{eqnarray}

Then, plugging the above two equations into (\ref{eq:utility_change2}), we get

\begin{eqnarray}
\Delta b_1 \sum_{j=1}^{N_{sel}} j(\chi''(j)-\chi'(j)) > \frac{E_c(a_i'')-E_c(a_i')}{N'_{sel} E_c(R_L)}\ge \frac{\Delta R}{N'_{sel}R_L}.
\end{eqnarray}

This implies that
\begin{eqnarray}
\mathcal{E}\Big(D(k+1)|a''\Big)-\mathcal{E}\Big(D(k+1)|a'\Big) > \frac{\Delta R}{\Delta b_1 N'_{sel}R_L}.
\end{eqnarray}

\end{proof}

Lemma \ref{lem} provides a criteria that allows the agent to select action $a_i''$ over $a_i'$.

\vspace{6pt}
\textit{Proof of Theorem \ref{thm:main}}:

Suppose that the players' actions are initialized as $a_i=0$, $\forall s_i\in\mathcal{S}'(k+1)$. The objective is to achieve the Nash equilibrium $a^*=(a_i^*,a_{-i}^*)$, where every cell is covered by $N_{sel}$ nodes. To achieve this solution, the utility function is designed to provide incentives for players to select an action that increases the coverage degree in every cell up to $N_{sel}$, while discouraging them to take an action that increases the coverage degree above $N_{sel}$. Since the Maxlogit learning algorithm sequentially selects a single player to attempt to change its action during each interaction, the number of players covering each cell will also sequentially increase.

First, consider the situation when $J_{g,h}<N_{sel}$, $\forall g,h$. Here, we will find a bound on $\Delta b_1$, that encourages a player to take an action that increases the coverage degree in every cell up to $N_{sel}$. In order to incentivize a player to select an action, the slope $\Delta b_1$ is designed to favor switching from the current action $a_i'$ to $a_i^*$,  $a_i^*>a_i'$,  such that overall we achieve $\mathcal{E}(D(k+1)|a^*)\le N_{sel}$. Using Lemma \ref{lem} we get
\begin{eqnarray}
\mathcal{E}(D(k+1)|a')< \mathcal{E}(D(k+1)|a^*)-\frac{\Delta R}{\Delta b_1 N'_{sel}R_L}\leq N_{sel}-\frac{\Delta R}{\Delta b_1 N'_{sel}R_L}. \nonumber
\end{eqnarray}
Now, suppose that the above condition is violated and $\mathcal{E}(D(k+1)|a')> N_{sel}-\frac{\Delta R}{\Delta b_1 N'_{sel}R_L}$,  then the agent would select the action $a'$ and this would become the true Nash equilibrium. Thus, the expected coverage degree at the Nash equilibrium $a^*$ is bounded as follows.
\begin{eqnarray}\label{eq:exp_bound}
N_{sel}-\frac{\Delta R}{\Delta b_1 N'_{sel}R_L} <  \mathcal{E}(D(k+1)|a^*) \leq N_{sel}.
\end{eqnarray}
Additionally,
\begin{eqnarray}
\mathcal{E}(D(k+1)|a^*) &=& N_{sel}\chi^*(N_{sel})+(N_{sel}-1)\chi^*(N_{sel}-1) + \ldots 1\chi^*(1) \nonumber \\
&\leq & N_{sel}\chi^*(N_{sel})+(N_{sel}-1)\big(1-\chi^*(N_{sel})\big) \nonumber \\
&=&(N_{sel}-1)+\chi^*(N_{sel}).
\end{eqnarray}
where $\chi^*(N_{sel})=Pr(D(k+1)=N_{sel}|a^*)$. Then,
\begin{eqnarray}
(N_{sel}-1)+ \chi^*(N_{sel})&>&N_{sel}-\frac{\Delta R}{\Delta b_1 N'_{sel}R_L} \nonumber  \\
\chi^*(N_{sel})&>& 1-\frac{\Delta R}{\Delta b_1 N'_{sel}R_L}.
\end{eqnarray}
Then to achieve $Pr\big(D(k+1)=N_{sel}|a^*\big) > 1-\delta$, one must have $\frac{\Delta R}{\Delta b_1 N'_{sel}R_L}<\delta$. Thus, we obtain
\begin{eqnarray}
\Delta b_1>\frac{\Delta R}{\delta N'_{sel}R_L}.
\end{eqnarray}

Thus far, we have considered the condition when the Nash equilibrium  $(a_i^*,a_{-i}^*)$ has $J_{g,h}\le N_{sel}, \ \forall g,h$. Now, consider a situation when a player $i$ takes an action $a''_i$, $a_i''>a_i^*$, then it may cause some of the cells to be covered by $N_{sel}+1$ nodes. In this situation, it is desired to discourage the player to choose action $a_i''$, such that
\begin{eqnarray} \label{eq:utility_do_not_change}
\mathcal{U}_i(a_i'', a_{-i})-\mathcal{U}_i(a^*_i,a_{-i})< 0.
\end{eqnarray}
Following the same process of Lemma~\ref{lem}, we get
\begin{eqnarray}\label{eq:diff2}
\sum_{g=1}^{U} \sum_{h=1}^{V} \omega_{g,h}\Big(B_{g,h}\big(J_{g,h}(a_i'',a_{-i})\big) - B_{g,h}\big(J_{g,h}(a_i^*,a_{-i})\big)\Big) - \frac{E_c(a_i'')-E_c(a_i^*)}{N'_{sel} E_c(R_L)} < 0.  \\
  \sum_{j=N_{sel}+1}^{N'_{sel}}\big(\Delta b_1N_{sel}- \Delta b_2(j-N_{sel})\big)\chi''(j)+\Delta b_1 \sum_{j=1}^{N_{sel}} j\chi''(j) - \Delta b_1 \sum_{j=1}^{N_{sel}} j\chi^*(j) < \frac{\Delta R}{N'_{sel}R_L},  \\
\big(\Delta b_1N_{sel}- \Delta b_2\big)\chi''(N_{sel}+1) < \frac{\Delta R}{N'_{sel}R_L} - \Delta b_1 \sum_{j=1}^{N_{sel}} j\chi''(j) + \Delta b_1 \sum_{j=1}^{N_{sel}} j\chi^*(j)
\end{eqnarray}
Now, $\sum_{j=1}^{N_{sel}} j\chi''(j)\le N_{sel}(1-\chi''(N_{sel}+1))$ and $\sum_{j=1}^{N_{sel}} j\chi^*(j)>N_{sel}-\frac{\Delta R}{\Delta b_1N'_{sel}R_L}$
from (\ref{eq:exp_bound}). Thus,
\begin{eqnarray}
\big(\Delta b_1N_{sel}- \Delta b_2\big)\chi''(N_{sel}+1) &< & \frac{\Delta R}{N'_{sel}R_L} - \Delta b_1 N_{sel}(1-\chi''(N_{sel}+1)) - \Delta b_1 \frac{\Delta R}{\Delta b_1N'_{sel}R_L} + \Delta b_1 N_{sel} \nonumber \\
& < &\frac{\Delta R}{N'_{sel}R_L} + \Delta b_1 N_{sel}\chi''(N_{sel}+1) - \frac{\Delta R}{N'_{sel}R_L} \nonumber \\
&=& \Delta b_1 N_{sel}\chi''(N_{sel}+1) \\
\Delta b_2 & > & 0
\end{eqnarray}

\section{Potential Game Validation} \label{app:Game}
\vspace{10pt}

To validate the performance of potential games, we ran $500$ Monte Carlo simulations of a target traveling in a sensor network of density $\rho = 1.4\times 10^{-3}$. A coverage gap of size $R_1$ was inserted in these networks such that  at least one potential game is triggered in each Monte Carlo run. The following measures were evaluated:

\begin{itemize}

\item \textit{Average Probability of Coverage Degree to be $N_{sel}$}: First, we computed the average probability per game of getting coverage degree to be $N_{sel}$ as follows:
\begin{eqnarray}
\mathbf{\chi}^\star(N_{sel}) = \frac{1}{N_{g}} \sum_{i=1}^{N_{g}} \sum_{g=1}^U \sum_{h=1}^V \omega_{g,h}\mathcal{I}(J_{g,h}(a^\star(i))=N_{sel}),
\end{eqnarray}
where $N_g\ge500$ is the total number of games played in all of the Monte Carlo simulations. As seen in the first row of Table~\ref{tb:game}, the average probability per game is very close to $1$. This validates that the equilibrium action set $a^\star$ selects $N_{sel}$ nodes to cover almost all cells of the partition region. We can see that as the number of players $N'_{sel}$ increases, the average probability increases, indicating that more players allows the game to identify an action set that covers the entire partition region.

\begin{table}[t]
\centering
\caption{Game Performance Results}
\label{tb:game}\vspace{-6pt}
\begin{tabular}{c|ccccc}\hline
                                    & \multicolumn{5}{c}{$N'_{sel}$}              \\
                                    & $3$    & $4$    & $5$    & $6$    & $7$     \\ \hline
$\mathbf{\chi}^\star(N_{sel})$ & 0.9981 & 0.9995 & 0.9996 & 0.9997 & 0.9998  \\
$\Phi_{eff}$    & 0.999  & 0.980  & 0.976  & 0.974  & 0.973   \\
$t_{game} (s)$                          & 0.050  & 0.065  & 0.079  & 0.098  & 0.129   \\
$t_{opt} (s)$                           & 0.088  & 0.808  & 8.137  & 56.300 & 517.957 \\ \hline
\end{tabular} \vspace{-0pt}
\end{table}

\vspace{6pt}
\item \textit{Game Efficiency}: We compared the equilibrium solution obtained by the potential game against the optimal solution obtained using an exhaustive search. We define the game efficiency as
\begin{eqnarray}
\Phi_{eff}=\frac{\Phi(a^\star)}{\Phi(a^\star_{opt})},
\end{eqnarray}
where the optimal action set $a^\star_{opt}$ was computed as follows
\begin{eqnarray}
a^\star_{opt} = \argmax{a\in\mathcal{A}_{\mathcal{S}'(k+1)}} \Phi(a).
\end{eqnarray}
The second row of Table~\ref{tb:game} shows that the efficiency of games with respect to the optimal solution is close to $1$. This implies that the potential game and Maxlogit learning allows for the agents to select an action set that is close to the optimal solution.

\vspace{6pt}
\item \textit{Computation Time}: Finally, we compared the amount of time it takes for the players to compute the action sets $a^\star$ and $a^\star_{opt}$. As seen in the last two rows of Table~\ref{tb:game}, the time $t_{game}$ taken by the potential games and Maxlogit learning to obtain the equilibrium solution is significantly less than the time $t_{opt}$ taken by exhaustive search to obtain the optimal solution. Once the number of players $N'_{sel}>4$, the computation time of the exhaustive search becomes impractical for real-time implementation. This validates the feasibility of potential games for optimal sensing range adjustment in real-time target tracking applications.
\end{itemize}

\section{Comparison of POSE.R with POSE and POSE.3C Networks} \label{app:net_comp}
\vspace{10pt}

This section compares the performance of the POSE.R algorithm with the POSE~\cite{HGW2017} and POSE.3C~\cite{HGW2019} algorithms. POSE.R is an advanced algorithm designed for optimal node selection and adaptive sensor range selection to provide resilient and energy-efficient target tracking even for low density networks and in the presence of coverage gaps. On the other hand, POSE and POSE.3C algorithms were designed primarily for energy-efficiency and considered only fixed range HPS sensors. Furthermore, the POSE algorithm was a primitive version of the POSE.3C algorithm that did not include efficient node selection, thus leading to redundant nodes activated around the target resulting in energy wastage. In contrast, POSE.R performs node selection via joint optimization of energy and geometric diversity thus allowing reliable nodes to track the target. The adaptive range selection provides resilience towards irregular node distribution and coverage gaps, and yields high tracking accuracy and low missed detection rates.

For performance comparison we simulated the POSE and POSE.3C networks for each fixed sensing range $R_{HPS}^{s_i}$, while the POSE.R network can perform adaptive sensor range selection as needed. Furthermore, the density for each of the networks was varied and $500$ Monte Carlo simulation runs were conducted for each of these scenarios.

\begin{figure*}[t!]
	\begin{subfigure}{0.33\textwidth}
        \centering
        \includegraphics[width=\textwidth]{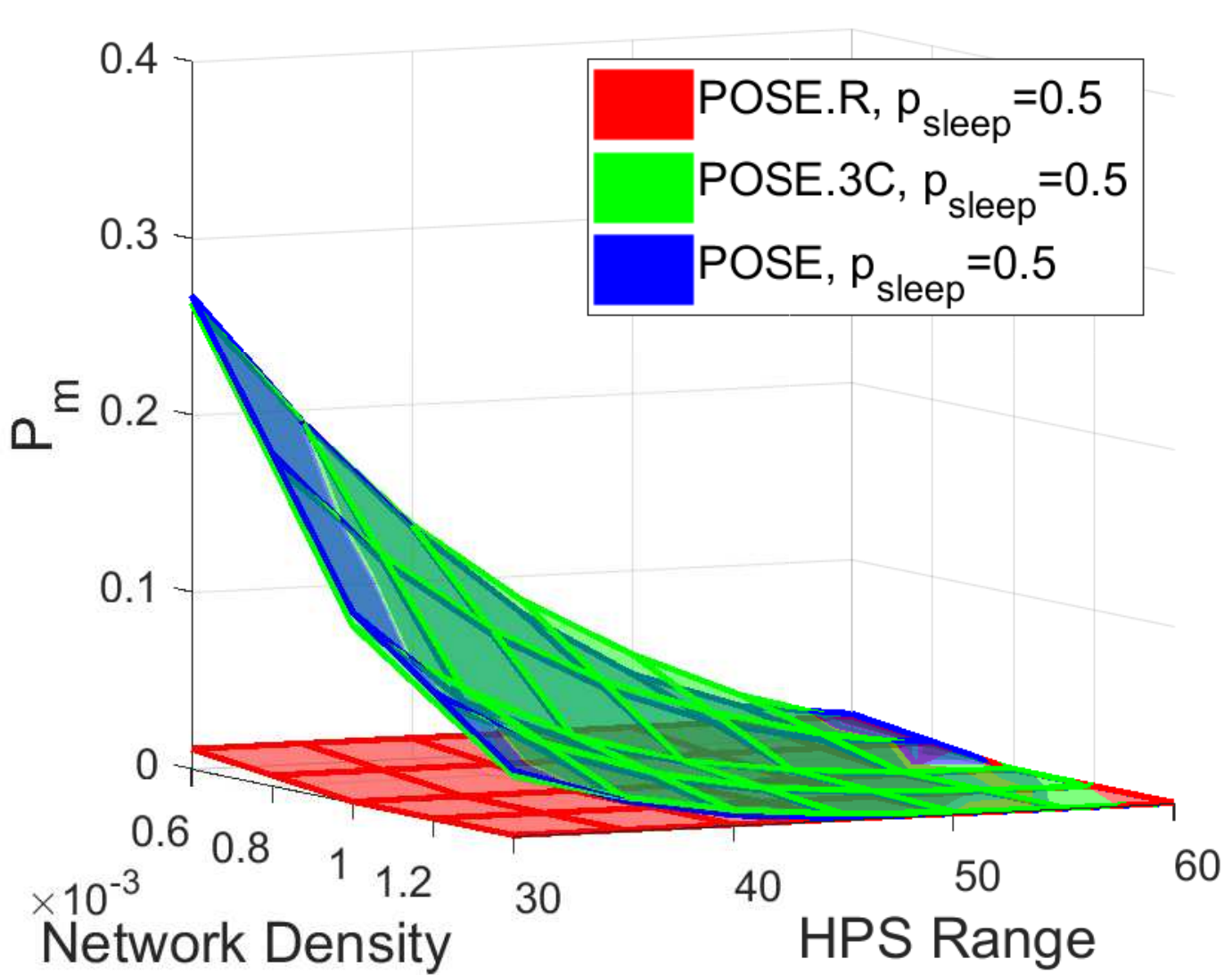}
        \caption{Probability of missed detection of the target.}\label{fig:pm_pose_comp}
    \end{subfigure}%
	\begin{subfigure}{0.33\textwidth}
        \centering
        \includegraphics[width=\textwidth]{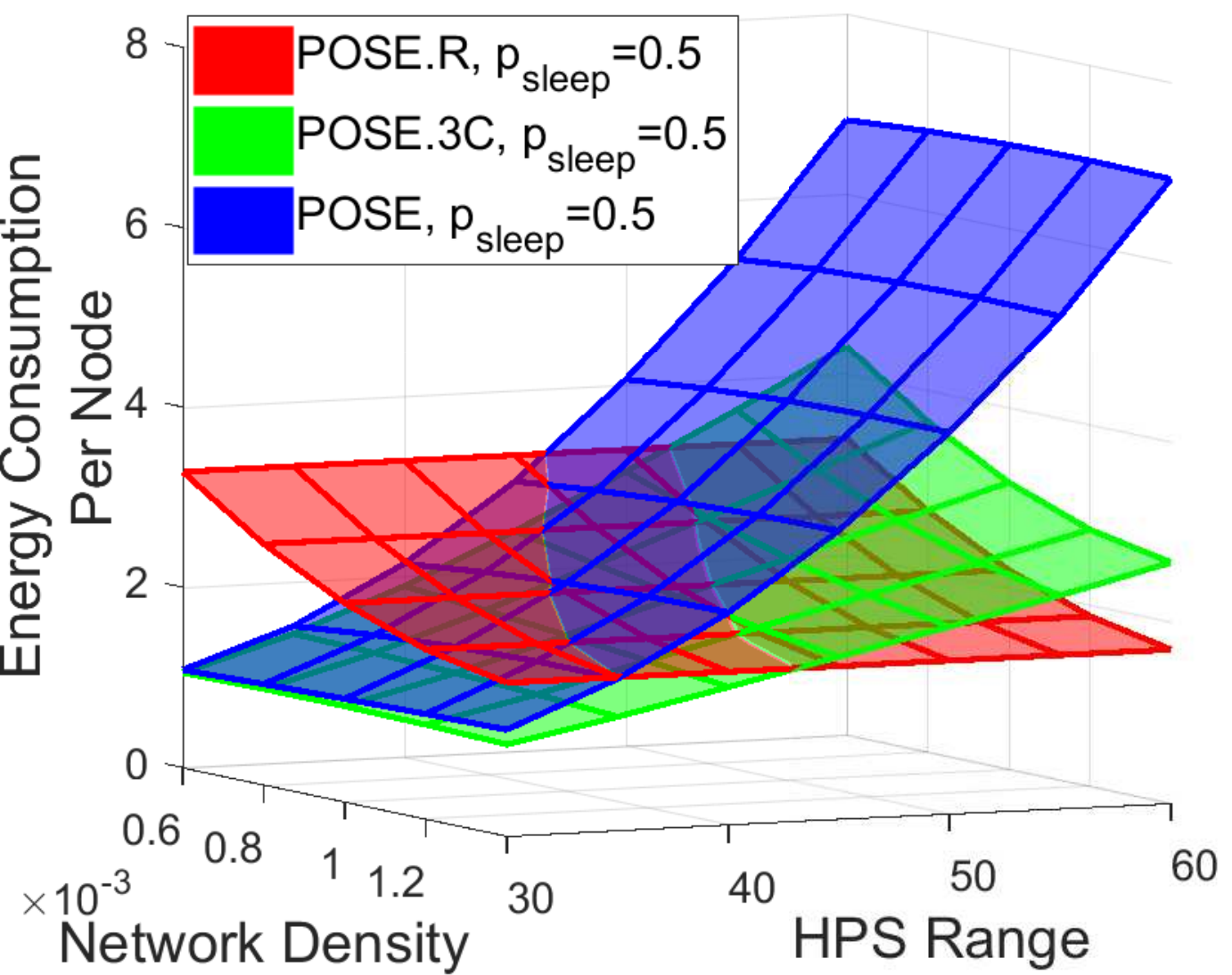}
        \caption{Average energy consumption around the target.}\label{fig:e_pose_comp}
    \end{subfigure}%
    \begin{subfigure}{0.33\textwidth}
        \centering
        \includegraphics[width=\textwidth]{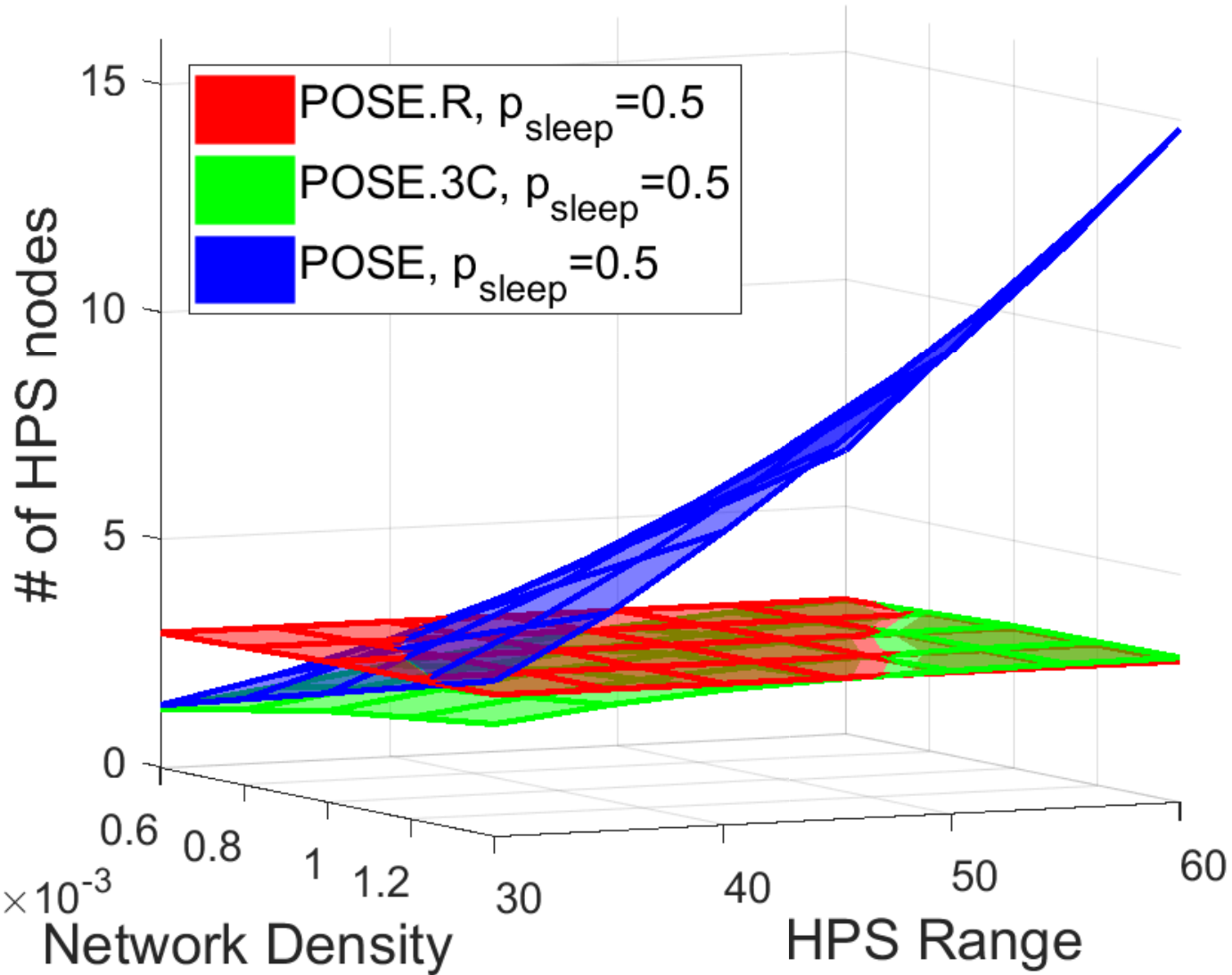}
        \caption{Average number of  HPS  nodes at each time step.}\label{fig:hps_pose_comp}
    \end{subfigure}%
	\vspace{-6pt}
	\caption{Performance comparison of POSE.R, POSE.3C and POSE algorithms.} \label{fig:pose_comp} \vspace{-12pt}
\end{figure*}

Fig.~\ref{fig:pose_comp} compares the probability of missed detection $P_m$, the average energy consumption around the target, and the average number of  HPS  activated nodes for the three algorithms. Fig.~\ref{fig:pm_pose_comp} shows that the POSE and POSE.3C networks result in significantly high missed detection rates as compared to the POSE.R network, especially for low density networks and low  HPS  ranges. This is due to the adaptability of the POSE.R network to allow the nodes to extend their sensing ranges when a target is predicted to travel within a low density region or a coverage gap. Thus, POSE.R provides opportunistic resilience to the network, i.e., a self-healing capability to track the target when it passes through low density regions or coverage gaps.

Fig.~\ref{fig:e_pose_comp} shows the average energy consumption around the target for the three networks. We can see that POSE and POSE.3C consume less energy than POSE.R for lower sensing ranges because the POSE.R network is expanding the sensing ranges of selected sensors to maintain target tracking. However, as the  HPS  range increases, the POSE.R provides energy savings as compared to POSE and POSE.3C because the selected nodes can decrease their sensing ranges to ensure target coverage. Thus, the adaptability of the nodes sensing range can improve the energy efficiency of the network.

Fig.~\ref{fig:hps_pose_comp} presents the average number of  HPS  nodes active at each time step. As we can see, the POSE.R network is able to maintain $N_{sel}=3$  HPS  nodes enabled during each time step even for low network densities, thus providing low missed detection rates. On the other hand, POSE and POSE.3C are unable to maintain $N_{sel}=3$  HPS  nodes for tracking the target for low network densities and low  HPS  ranges. Since POSE does not have node selection, it activates a large number of nodes as the  HPS  range and network density increase.

In summary, the above results show that POSE.R provides both resilience and energy-efficiency to the network and results in high tracking accuracy and low missed detection rates for target tracking, hence improving the overall network performance and providing longevity.

\printnomenclature

\end{document}